\newcolumntype{L}[1]{>{\raggedright\let\newline\\\arraybackslash\hspace{0pt}}m{#1}}
\newcolumntype{C}[1]{>{\centering\let\newline\\\arraybackslash\hspace{0pt}}m{#1}}
\newcolumntype{R}[1]{>{\raggedleft\let\newline\\\arraybackslash\hspace{0pt}}m{#1}}
\lstdefinelanguage{scala}{
  morekeywords={abstract,case,catch,class,def,    do,else,extends,false,final,finally,    for,if,implicit,import,match,mixin,    new,null,object,override,package,    private,protected,requires,return,sealed,    super,this,throw,trait,true,try,    type,val,var,while,with,yield},
  otherkeywords={=>,<-,<\%,<:,>:,\#,@},
  sensitive=true,
  morecomment=[l]{//},
  morecomment=[n]{/*}{*/},
  morestring=[b]", 
  morestring=[b]',
  morestring=[b]"""
}
\titlespacing*{\section}{0pt}{1.5ex}{1ex}
\titlespacing*{\subsection}{0pt}{1.5ex}{1ex}
\newtheorem{lemma}{Lemma}
\newtheorem{definition}{Definition}
\newtheorem{theorem}{Theorem}  
\newtheorem{example}{Example}
\newtheorem{corollary}{Corollary}
\newtheorem{remark}{Remark}
\title{On Verifying Causal Consistency}
\DeclareMathAlphabet{\mathpzc}{OT1}{pzc}{m}{it}
\begin{document}

\toappear{} 

\newcommand{\set}[1]{\{{#1}\}}
\newcommand{\mset}[1]{\{\{{#1}\}\}}
\newcommand{\todo}[1]{\textcolor{red}{TODO: {#1}}}
\newcommand{\lem}[1]{Lemma~\ref{#1}}
\newcommand{\thm}[1]{Theorem~\ref{#1}}
\newcommand{\coro}[1]{Corollary~\ref{#1}}
\newcommand{\fig}[1]{Figure~\ref{#1}}
\newcommand{\sect}[1]{Section~\ref{#1}}
\newcommand{\apd}[1]{Appendix~\ref{#1}}
\newcommand{\tab}[1]{Table~\ref{#1}}
\newcommand{\ie}{i.e.~}
\newcommand{\resp}{resp.,~}

\colorlet{color1}{Red!50!DarkRed}
\colorlet{color2}{DarkGreen}
\colorlet{color3}{Blue}
\colorlet{color4}{Orange!70!Black}
\colorlet{color5}{DarkViolet}
\colorlet{color6}{Olive}
\colorlet{color7}{ForestGreen}
\colorlet{color8}{FireBrick}
\newcommand{\first}[1]{\textcolor{color1}{#1}}
\newcommand{\second}[1]{\textcolor{color2}{#1}}
\newcommand{\third}[1]{\textcolor{color3}{#1}}

\newcommand{\makeline}[3]{
  \node at (#2,#1) (a) {};
  \node at (#3,#1) (b) {};   
  \draw[-] (a) -- coordinate (d) (b);
}
\newcommand{\placeat}[3]{
    \node[circle,fill=Black,inner sep=0, minimum width=4pt] at (#1,#2) (x) {};
    \node[above=1pt of x] { #3 };
}
\newcommand{\placebt}[3]{
    \node[circle,fill=Black,inner sep=0, minimum width=4pt] at (#1,#2) (x) {};
    \node[below=1pt of x] { #3 };
}
\tikzset{rfstyle/.append style={
    ->,
  thick,
  rounded corners=10 pt,
  inner sep=7pt
}}
\tikzset{costyle/.append style={
    ->,
  thick,
  rounded corners=10 pt,
  inner sep=7pt
}}

\newcommand{\history}{history}
\newcommand{\histories}{histories}
\newcommand{\programorder}{program order}
\newcommand{\operation}{operation}
\newcommand{\specification}{specification}
\newcommand{\dataindependent}{data independent}
\newcommand{\dataindependence}{data independence}
\newcommand{\differentiated}{differentiated}
\newcommand{\readfrom}{read-from}
\newcommand{\linearizable}{linearizable}
\newcommand{\sequence}{sequential poset}
\newcommand{\sequences}{sequential posets}
\newcommand{\execution}{execution}
\newcommand{\event}{action}
\newcommand{\library}{\implementation}
\newcommand{\libraries}{\implementation{s}}
\newcommand{\recurrentreach}{recurrent reachability}
\newcommand{\finitestate}{finite-state}
\newcommand{\nod}{node}
\newcommand{\action}{action}
\newcommand{\site}{site}
\newcommand{\sites}{sites}
\newcommand{\Site}{Site}
\newcommand{\Sites}{Sites}
\newcommand{\siteidentifier}{\site{} identifier}
\newcommand{\observer}{observer}
\newcommand{\registerautomaton}{register automaton}
\newcommand{\registerautomata}{register automata}
\newcommand{\roi}{implementation}
\newcommand{\implementation}{implementation}
\newcommand{\validanswer}{valid answer}
\newcommand{\replicated}{replicated object}

\newcommand{\crita}{{\tt CC}}
\newcommand{\critb}{{\tt CM}}
\newcommand{\critc}{{\tt CCv}}

\newcommand{\wcc}{\crita}
\newcommand{\Wcc}{\crita}
\newcommand{\wcct}{\crita}
\newcommand{\weakcc}{causal consistency}
\newcommand{\weakcct}{causally consistent}

\newcommand{\Scc}{\critb}
\newcommand{\scc}{\critb}
\newcommand{\scct}{\critb}

\newcommand{\ccvt}{\critc}
\newcommand{\Ccv}{\critc}
\newcommand{\ccv}{\critc}
\newcommand{\conv}{causal convergence}
\newcommand{\Conv}{Causal convergence}
\newcommand{\convt}{causally convergent}

\newcommand{\domseq}{sequence of dominoes}
\newcommand{\domseqs}{sequences of dominoes}
\newcommand{\happensbefore}{happens-before}
\newcommand{\writesinto}{writes-into}
\newcommand{\arbitration}{arbitration}
\newcommand{\happenedbefore}{happened-before}

\newcommand{\hist}{h}
\newcommand{\hists}{H}
\newcommand{\op}{o}
\newcommand{\getvar}[1]{{\sf var}({#1})}
\newcommand{\getval}[1]{{\sf value}({#1})}
\newcommand{\wop}{w}
\newcommand{\rop}{r}
\newcommand{\ops}{O}
\newcommand{\getlabel}{\ell}

\newcommand{\po}{{\sf PO}}
\renewcommand{\co}{\mathit{co}}
\newcommand{\hb}[1]{{\sf HB}_{#1}}
\newcommand{\cf}{{\sf CF}}
\newcommand{\trueco}{\mathit{co}}
\newcommand{\propco}{{\sf CO}}
\newcommand{\ltpropco}{\ltrel{\propco}}
\newcommand{\rf}{{\sf RF}}
\newcommand{\locof}[1]{\rho_{#1}}
\newcommand{\loc}{\locof{\op}}
\newcommand{\smallerF}{\xrightarrow{\renamings}}
\newcommand{\smallerFData}{\xrightarrow{\datarens}}
\newcommand{\wi}{\mathit{wi}}
\newcommand{\theirarb}{\mathit{arb}}

\newcommand{\rel}{\mathcal{R}}
\newcommand{\ltrel}[1]{<_{{#1}}}
\newcommand{\leqrel}[1]{\leq_{{#1}}}
\newcommand{\gtrel}[1]{>_{{#1}}}
\newcommand{\ltr}{\ltrel{\rel}}
\newcommand{\ggtr}{\gtrel{\rel}}
\newcommand{\ltpo}{\ltrel{\po}}
\newcommand{\leqpo}{\leqrel{\po}}
\newcommand{\ltco}{\ltrel{\co}}
\newcommand{\leqco}{\leqrel{\co}}
\newcommand{\ltrf}{\ltrel{\rf}}
\newcommand{\ltcf}{\ltrel{\cf}}
\newcommand{\comp}{\circ}
\newcommand{\project}[2]{{#1}_{|{#2}}}
\newcommand{\proj}[2]{\project{#1}{#2}}
\newcommand{\projrel}[2]{\project{#1}{#2}}
\newcommand{\chan}{{\mathit{ch}}}

\newcommand{\newco}{\co_2} 
\newcommand{\newloc}{\loc^2}

\newcommand{\Meth}{\mathbb{M}}
\newcommand{\Domain}{\mathbb{D}}
\newcommand{\DDomain}{\mathbb{D}_\top}
\newcommand{\Nats}{\mathbb{N}}
\newcommand{\Tid}{{\sf PId}}
\newcommand{\Var}{\mathbb{X}}
\newcommand{\theclass}{\mathcal{C}}
\newcommand{\localmem}{\mathbb{L}}

\newcommand{\spec}{S}

\newcommand{\pos}{\rho}
\newcommand{\meth}{m}
\newcommand{\dat}{d}
\newcommand{\dats}{D}
\newcommand{\act}{a}
\newcommand{\acts}{B}
\newcommand{\bijection}{f}
\newcommand{\exec}{e}
\newcommand{\myexec}{E}
\newcommand{\tid}{p}
\newcommand{\argv}{{\mathit{arg}}}
\newcommand{\rv}{{\mathit{rv}}}
\newcommand{\var}{x}
\newcommand{\val}{v}
\newcommand{\obs}{\mathcal{M}}
\newcommand{\clink}{{\sf CausalLink}}
\newcommand{\obscc}{\mathcal{M}_{\crita}}

\newcommand{\ncycle}{n}
\newcommand{\nthreads}{k}
\newcommand{\execlength}{l}
\newcommand{\nres}{{\mu}}
\newcommand{\mres}{{\nu}}
\newcommand{\numa}{A}
\newcommand{\numb}{B}
\newcommand{\numc}{C}

\renewcommand{\cc}{\cdot}

\newcommand{\unit}{\bot}

\newcommand{\register}{{\sf register}}
\newcommand{\mvs}{{\sf multi-value store}}
\newcommand{\kvs}{{\sf read/write\ memory}}
\newcommand{\kvsimp}{{key-value store}}
\newcommand{\specreg}{\spec_{\sf reg}}
\newcommand{\speckvs}{\spec_{\sf RW}}
\newcommand{\specmvs}{\spec_{\sf MVS}}
\newcommand{\writemeth}{{\tt wr}}
\newcommand{\readmeth}{{\tt rd}}
\newcommand{\mkwrite}[2]{\writemeth({#1},{#2})}
\newcommand{\mkread}[2]{\mkaction{\readmeth({#1})}{#2}}

\newcommand{\vass}{{\sf VASS}}

\newcommand{\linable}{\sqsubseteq} 

\newcommand{\lib}{\mathcal{I}}
\newcommand{\alet}{a}

\newcommand{\default}{\top}
\newcommand{\responsec}[2]{\mkaction{#1}{#2}}
\newcommand{\mCheck}{{\sf Check}}
\newcommand{\qend}{{\sf end}}
\newcommand{\qsent}{{\sf sent}}
\newcommand{\qrecA}{{\sf recA}}
\newcommand{\qrec}{{\sf recBoth}}
\newcommand{\qchecked}{{\sf checked}}

\newcommand{\checkmate}{\state_{\sf check}}
\newcommand{\loststate}{\state_{\sf lost}}

\newcommand{\msgs}{{\sf Msg}}
\newcommand{\Msg}{\msgs}
\newcommand{\msg}{{\sf msg}}

\newcommand{\boolt}{{\sf T}}
\newcommand{\boolf}{{\sf F}}

\newcommand{\autL}{L}
\newcommand{\wu}{u}
\newcommand{\wv}{v}
\newcommand{\alp}{\Sigma}
\newcommand{\shuffle}[2]{{#1}\|{#2}}
\newcommand{\pcp}{{\sf PCP}}
\newcommand{\morph}{h}
\newcommand{\morphu}{\morph_u}
\newcommand{\morphv}{\morph_v}

\newcommand{\alppcp}{\Sigma_\pcp}
\newcommand{\apcp}{a}
\newcommand{\bpcp}{b}
\newcommand{\npcp}{n}
\newcommand{\kpcp}{k}
\newcommand{\instpcp}{P}

\newcommand{\emptyseq}{\varepsilon}
\newcommand{\alpu}{\Gamma_u}
\newcommand{\alpinu}{\Sigma_u}
\newcommand{\alpv}{\Gamma_v}
\newcommand{\alpinv}{\Sigma_v}
\newcommand{\au}{a_u}
\newcommand{\av}{a_v}
\newcommand{\bu}{b_u}
\newcommand{\bv}{b_v}
\newcommand{\xu}{\mathpzc{s}_u}
\newcommand{\xv}{\mathpzc{s}_v}
\newcommand{\sep}{\#}
\newcommand{\sigu}{u}
\newcommand{\sigv}{v}
\newcommand{\subu}[1]{x_{#1}}
\newcommand{\subv}[1]{y_{#1}}
\newcommand{\alpwhole}{\Sigma}
\newcommand{\alpinpcp}{\Sigma}
\newcommand{\bbbu}{S_u}
\newcommand{\bbbv}{S_v}

\newcommand{\ww}{w}
\newcommand{\remlett}[2]{{#1}\setminus{#2}}
\newcommand{\remlang}[2]{\remlett{#1}{#2}}

\newcommand{\uify}[1]{\morphu({#1})}
\newcommand{\vify}[1]{\morphu({#1})}
\newcommand{\remsubu}[1]{\morph({#1})}
\newcommand{\remsubv}[1]{\morph({#1})}
\newcommand{\mkaction}[2]{{#1} \ret {#2}}
\newcommand{\mkpartialaction}[2]{{#1}({#2})}
\newcommand{\mktrueaction}[3]{{#1}({#2})\ifstrequal{#3}{\unit}{}{ \ret {#3}}}

\newcommand{\ret}{{\,\triangleright\,}}
\newcommand{\states}{Q}
\newcommand{\vistate}{\state_{\sf init}}
\newcommand{\state}{q}

\newcommand{\siteIds}{\Tid}
\newcommand{\siteid}{\tid}
\newcommand{\parts}[1]{{\mathcal P}({#1})}

\newcommand{\transitions}{\Delta}
\newcommand{\istate}{q_0}
\newcommand{\blet}{b}
\newcommand{\fstate}{q_f}
\newcommand{\ustate}{q_u}
\newcommand{\vstate}{q_v}

\newcommand{\qmap}{\mu}
\newcommand{\chanmap}{{\sf chan}}
\newcommand{\cfgc}{{\sf cfg}}
\newcommand{\trans}{\rightarrow}
\newcommand{\ltrans}[1]{\xrightarrow{#1}}
\newcommand{\sendcToLab}[2]{\sendc({#1},{#2})}
\newcommand{\receivecFromLab}[2]{\receivec({#1},{#2})}
\newcommand{\msgword}{{\sf w}}
\newcommand{\sendc}{!}
\newcommand{\receivec}{?}
\newcommand{\Bag}[1]{{\sf Bag}({#1})}
\newcommand{\bag}{C}

\newcommand{\wordu}{u}
\newcommand{\wordv}{v}

\newcommand{\oper}[1]{O_{#1}}
\newcommand{\orset}{{{\sf OR}\mbox{-}{\sf S}}}
\newcommand{\mklookup}[1]{{\tt rem}_{#1}}

\newcommand{\getexecs}[1]{\llbracket {#1} \rrbracket}

\newcommand{\labeledsystems}{labeled transition system}
\newcommand{\LabeledSystems}{Labeled Transition Systems}
\newcommand{\istates}{I}
\newcommand{\fstates}{F}
\newcommand{\runlts}{\mathit{run}}

\newcommand{\applyrenaming}[2]{{#1}[{#2}]}
\newcommand{\getdiff}[1]{{#1}_{\neq}}
\newcommand{\getcomplete}[1]{{#1}'}
\newcommand{\complete}{complete}
\newcommand{\Complete}{Complete}
\newcommand{\renaming}{f}
\newcommand{\renamings}{F}
\newcommand{\datarenaming}{data-renaming}
\newcommand{\datarens}{\renamings_{\sf Data}}

\newcommand{\wccfor}[1]{{\crita}({#1})}
\newcommand{\sccfor}[1]{{\critb}({#1})}
\newcommand{\ccvfor}[1]{{\critc}({#1})}

\newcommand{\visible}[1]{\vis^{-1}({#1})}
\newcommand{\vposet}[1]{(\visible{#1},\locof{#1})}
\newcommand{\lposet}[1]{(\visible{#1},\locof{#1},\getmethod)}
\newcommand{\iposet}{(\ops,\arb,\getmethod)}
\newcommand{\posetm}[1]{li_\Meth[{#1}]}
\newcommand{\iso}{{f}}
\newcommand{\prefix}{\preceq}

\newcommand{\assign}[2]{{#1} \coloneqq {#2}}
\newcommand{\equality}[2]{{#1} == {#2}}
\newcommand{\reg}{{\sf reg}}
\newcommand{\regvar}{\reg_{\var}}
\newcommand{\regsite}{\reg_{\siteid}}

\newcommand{\firstsite}{\textcolor{color1}{\siteid_1}}
\newcommand{\secondsite}{\textcolor{color2}{\siteid_2}}
\newcommand{\thirdsite}{\textcolor{color3}{\siteid_3}}
\newcommand{\firstdat}{\textcolor{color4}{1}}
\newcommand{\seconddat}{\textcolor{color5}{2}}
\newcommand{\thirddat}{\textcolor{color6}{3}}
\newcommand{\fourthdat}{\textcolor{color7}{4}}
\newcommand{\fifthdat}{\textcolor{color8}{5}}
\newcommand{\errorstate}{q_{err}}
\newcommand{\thedat}{\dat_0}
\newcommand{\thealphabet}[1]{{\sf Act}_{#1}}

\newcommand{\buchi}{\mathcal{A}}
\newcommand{\finitestates}{B}
\newcommand{\proc}{p}

\newcommand{\diffblock}[1]{#1}
\newcommand{\seqposet}{\rho}
\newcommand{\causalpast}[1]{{\sf CausalHist}({#1})}
\newcommand{\poback}[1]{{\sf POPast}({#1})}
\newcommand{\projectrv}[2]{{#1}\{{#2}\}}
\newcommand{\weaker}{\preceq}
\newcommand{\autLL}{\autL'}
\newcommand{\causalarb}[1]{{\sf CausalArb}({#1})}

\newcommand{\finproc}{\proc_f}
\newcommand{\finishedval}{{\sf NVA}}
\newcommand{\statusvar}{{\tt status}}
\newcommand{\statevar}{{\tt state}}
\newcommand{\procu}{\proc_u}
\newcommand{\procv}{\proc_v}
\newcommand{\overval}{{\sf over}}
\newcommand{\puover}{\overval}
\newcommand{\pvover}{\overval}
\newcommand{\mkget}[2]{{\tt uniq\_rd}({#1}) \ret {#2}}
\newcommand{\fu}{\tid_{gt}}
\newcommand{\fv}{\tid_{lt}}
\newcommand{\fneq}{\tid_{\neq}}

\newcommand{\corr}{\leftrightarrow}
\newcommand{\addmeth}{{\tt add}}
\newcommand{\remmeth}{{\tt rem}}
\newcommand{\containsmeth}{{\tt contains}}
\newcommand{\mkadd}[1]{\addmeth({#1})}
\newcommand{\mkrem}[1]{\remmeth({#1})}
\newcommand{\inset}{\top}
\newcommand{\notinset}{\bot}
\newcommand{\mkcontains}[1]{\mkaction{\containsmeth({#1})}{\inset}}
\newcommand{\mknotcontains}[1]{\mkaction{\containsmeth({#1})}{\notinset}}

\newcommand{\letu}{L}
\newcommand{\letv}{L_v}
\newcommand{\tickuf}{T_u}
\newcommand{\tickus}{T_u'}
\newcommand{\tickvf}{T_v}
\newcommand{\tickvs}{T_v'}
\newcommand{\fauxval}{0}
\newcommand{\vraival}{1}
\newcommand{\pufin}{F_u}
\newcommand{\pvfin}{F_v}
\newcommand{\placeholder}{{p_{BU}}}
\newcommand{\ticking}[1]{{p_{T_{#1}}}}
\newcommand{\makechecker}[1]{ch_{#1}}

\newcommand{\cmd}{{\tt cmd}}
\newcommand{\boolexpr}{{\tt bool\_expr}}
\newcommand{\intexpr}{{\tt value\_expr}}
\newcommand{\sendcmd}[2]{{\tt send}({#1},{#2})}
\newcommand{\returncmd}[1]{{\tt return}({#1})}
\newcommand{\internal}{T}
\newcommand{\internaltransition}{internal transition}
\newcommand{\piece}{{\tt Code}}
\newcommand{\getcode}{{\sf code}}
\newcommand{\Args}{{\sf Dom}}
\newcommand{\Rvs}{{\sf Img}}
\newcommand{\itr}{t}
\newcommand{\other}[1]{{#1}'}
\newcommand{\extra}{\tid_E}
\newcommand{\choicevar}{\mathit{Ch}}
\newcommand{\whilec}{{\tt while}}
\newcommand{\ifc}{{\tt if}}
\newcommand{\elsec}{{\tt else}}
\newcommand{\eval}{{\sf eval}}
\newcommand{\update}{{\sf update}}
\newcommand{\causaldep}[1]{{\sf CausalPast}({#1})}

\newcommand{\bpcocycle}{there is a cycle in $\po \cup \rf$ (in $\propco$)}
\newcommand{\bprfi}{
  there is a $\mkread{\var}{0}$ operation $\rop$, and an operation $\wop$
  such that $\wop \ltpropco \rop$ and $\getvar{\wop} = \getvar{\rop}$}
\newcommand{\bprf}{
  there is a $\mkread{\var}{\val}$ operation $\rop$ such that $\val \neq 0$,
  and there is no $\wop$ operation with $\wop \ltrf \rop$}
\newcommand{\bpdisagree}{there is a cycle in $\cf \cup \propco$}
\newcommand{\bpchangerfi}{
  there is a $\mkread{\var}{0}$ operation $\rop$, and an operation $\wop$
  such that $\wop \ltrel{\hb{\op}} \rop$ and $\getvar{\wop} = \getvar{\rop}$,
  for some $\op$, with $\rop \leqpo \op$
}
\newcommand{\bpchange}{there is a cycle in $\hb{\op}$ for 
some $\op \in \ops$}
\newcommand{\bpco}{
  there exist write operations $\wop_1,\wop_2$ and a read
  operation $\rop_1$ in $\ops$ such that
  $\wop_1 \ltpropco \wop_2 \ltpropco \rop_1$,  
  $\wop_1 \ltrf \rop_1$, and 
  $\getvar{\wop_1} = \getvar{\wop_2}$}
  
\newcommand{\axpoco}{{\sf AxCausal}}
\newcommand{\axcoarb}{{\sf AxArb}}
\newcommand{\axwcc}{{\sf AxCausalValue}}
\newcommand{\axscc}{{\sf AxCausalSeq}}
\newcommand{\axccv}{{\sf AxCausalArb}}
\newcommand{\crit}{\mathit{Crit}}

\newcommand{\bpbpcyclicco}{{\sf CyclicCO}}
\newcommand{\bpbpcycliccf}{{\sf CyclicCF}}
\newcommand{\bpbpcyclichb}{{\sf CyclicHB}}
\newcommand{\bpbprfi}{{\sf WriteCOInitRead}}
\newcommand{\bpbpchangerfi}{{\sf WriteHBInitRead}}
\newcommand{\bpbpco}{{\sf WriteCORead}}
\newcommand{\bpbprf}{{\sf ThinAirRead}}

\newcommand{\npcomplete}{\NP-complete}
\newcommand{\conflict}{conflict}
\newcommand{\sat}{{\sf SAT}}
\newcommand{\sx}{x} 
\newcommand{\scl}{C}
\newcommand{\form}{\phi}
\newcommand{\inits}{f_i}

\newcommand{\getpos}[1]{Pos({#1})}
\newcommand{\getneg}[1]{Neg({#1})}
\newcommand{\gettrue}[1]{\tid_{false}^i}
\newcommand{\getfalse}[1]{\tid_{true}^i}
\newcommand{\npvar}{y}
\newcommand{\peval}{\tid_{eval}}
\newcommand{\encoding}[2]{H_{({#1},{#2})}}
\newcommand{\finaling}{\#}
\newcommand{\savior}[1]{p_{S_{#1}}}
\newcommand{\saver}{\mathit{saver}}
\newcommand{\pag}{p_f}

\newcommand{\causallyconsistent}{causally consistent}
\newcommand{\cct}{causally consistent}
\newcommand{\inc}[2]{inc({#1},{#2})}
\newcommand{\dec}[2]{dec({#1},{#2})}
\newcommand{\axcausal}{{\sc Causal}}
\newcommand{\axcausalvis}{{\sc CausalVis}}
\newcommand{\axrval}{{\sc RVal}}
\newcommand{\eventual}{\textsc{Eventual}}
\newcommand{\interwf}{\textsc{GI\textsc{pf}}}
\newcommand{\wf}{prefix-founded}
\newcommand{\wccpt}{causally+ consistent}
\newcommand{\wccp}{causal+ consistency}
\newcommand{\ccvpt}{causally+ convergent}
\newcommand{\ccvp}{causal+ convergence}
\newcommand{\arb}{{go}}
\newcommand{\otherr}[1]{{#1}''}

\newcommand{\cm}{causal memory}
\newcommand{\sct}{strongly causally consistent}
\newcommand{\obsreg}{\mathcal{M}_{Reg}}
\newcommand{\obsregp}{\mathcal{M}_{Reg+}}
\newcommand{\obsccp}{\mathcal{M}_{\sf RW+}}
\newcommand{\getmethod}{\overline{\ell}}
\newcommand{\cah}{causal arbitrated history}
\newcommand{\falsifier}{checker}

\newcommand{\mEndA}{{\sf End_A}}
\newcommand{\mEndB}{{\sf End_B}}
\newcommand{\msgFinished}{{\sf Finished}}
\newcommand{\msgFinishedA}{{\sf Finished_A}}
\newcommand{\msgFinishedB}{{\sf Finished_B}}
\newcommand{\mChoice}{{\sf Choice}}

\maketitle

{\bf Abstract}. 
Causal consistency is one of the most adopted consistency criteria for distributed implementations of data structures. 
It ensures that operations are executed at all sites according to their causal precedence.
We address the issue of verifying automatically whether the executions of an implementation of a data structure are causally consistent.
We consider two problems: (1) checking whether {\em one} single execution is causally consistent, which is relevant for developing testing and bug finding algorithms, and (2) verifying whether {\em all} the executions of an implementation are causally consistent.  

We show that the first problem is NP-complete. This holds even for the read-write memory abstraction, which is a building block of many modern distributed systems. Indeed, such systems often store data in key-value stores, which are instances of the read-write memory abstraction.
Moreover, we prove that, surprisingly, the second problem is {\em undecidable}, and again this holds even for the read-write memory abstraction. 
However, we show that for the read-write memory abstraction, these negative results can be circumvented if the implementations are {\em data independent}, i.e., their behaviors do not depend on the data values that are written or read at each moment, which is a realistic assumption.

We prove that for data independent implementations, the problem of checking the correctness of a single execution w.r.t. the read-write memory abstraction is polynomial time.
Furthermore, we show that for such implementations the set of non-causally consistent executions can be represented by means of a finite number of {\em register automata}. Using these  machines as observers (in parallel with the implementation) allows to reduce polynomially the problem of checking causal consistency to a state reachability problem. This reduction holds regardless of the class of programs used for the implementation, of the number of read-write variables, and of the used data domain. It allows leveraging existing techniques for assertion/reachability checking to causal consistency verification. Moreover, for a significant class of implementations, we derive from this reduction the decidability of verifying causal consistency w.r.t. the read-write memory abstraction.

\category{D.2.4}{Software/Program Verification}{Model checking} \category{F.3.1}{Specifying and Verifying and Reasoning about Programs}{Mechanical verification} \category{E.1}{Data structures}{Distributed data structures}

\keywords
distributed systems, causal consistency, model checking, static program analysis

\setlist{nosep}

\section{Introduction}

Causal consistency~\citep{Lamport:1978:TCO:359545.359563} (CC for short)
is one of the oldest and most widely spread correctness criterion
for distributed systems. For a distributed system composed of several sites connected through a network where
each site executes some set of operations, if an operation $\op_1$ affects another
operation $\op_2$ ($\op_2$ causally depends on $\op_1$), 
causal consistency ensures that all sites must 
execute these operations in that order.
There exist many efficient implementations satisfying this criterion, e.g.,
~\citep{DBLP:conf/cloud/DuI0Z14,DBLP:conf/cloud/DuE0Z13,Bailis:2013:BCC:2463676.2465279,Jimenez:2008:PAI:1316083.1316267,Lloyd:2011:DSE:2043556.2043593},
contrary to strong consistency (linearizability) which cannot be ensured in the presence of network partitions and while the system
remains available ~\citep{DBLP:journals/sigact/GilbertL02,Fischer:1985:IDC:3149.214121} (the \sites{}  
answer to clients' requests without delay).

However, developing distributed implementations satisfying causal consistency poses
many challenges: Implementations may involve a large number 
of sites communicating through unbounded\footnote{Throughout the paper, {\em unbounded} means finite but arbitrarily large.} communications channels. Roughly speaking, causal consistency can be ensured if each operation (issued by some site) is broadcast to the other sites together with its whole ``causal past'' (the other operations that affect the one being broadcast). But this is not feasible in practice, and various optimizations have been proposed that involve for instance the use of vector clocks~\citep{fidge1987timestamps,Mattern88virtualtime}. Defining and implementing such optimizations is generally very delicate and error prone. 
Therefore, it is appealing to consider formal methods to help developers write correct implementations. At different stages of the development, both testing and verification techniques are needed either for detecting bugs or for establishing correctness w.r.t abstract specifications. We study in this paper two fundamental problems in this context: (1) checking whether one given execution of an implementation is causally consistent, a problem that is relevant for the design of testing algorithms, and (2) the problem of verifying whether all the executions of an implementation are causally consistent.

First, we prove that checking causal consistency for a single execution is NP-hard in general. We prove in fact that this problem is NP-complete for the read-write memory abstraction (RWM for short), which is at the basis of many distributed data structures used in practice. 

Moreover, we prove that the problem of verifying causal consistency of an implementation is undecidable in general. We prove this fact in two different ways. First, we prove that for regular specifications  (i.e., definable using finite-state automata), this problem is undecidable even for finite-state implementations with two sites communicating through bounded-size channels. Furthermore, we prove that even for the particular case of the RWM specification, the problem is undecidable in general. (The proof in this case is technically more complex and requires the use of implementations with more than two sites.) 

This undecidability result might be surprising, since it is known
that linearizability (stronger than CC)~\citep{journals/toplas/HerlihyW90} and eventual consistency (weaker than CC)~\citep{DBLP:conf/sosp/TerryTPDSH95} are decidable
to verify in that same setting~\citep{journals/iandc/AlurMP00,DBLP:conf/popl/BouajjaniEH14,netys-lin}.
This result reveals an interesting aspect in the definition of causal consistency. Intuitively, two key properties of causal consistency are that (1) 
it requires that the order between operations issued by the same \site{} to be preserved globally at all the sites, and that 
(2) it allows an operation $\op_1$ which happened arbitrarily sooner than an operation $\op_2$ to be executed after $\op_2$ (if $\op_1$ and $\op_2$ are not causally related).
Those are the essential ingredients that are used in the undecidability proofs (that are based on encodings of the Post Correspondence Problem). In comparison, linearizability does not satisfy (2) because for a fixed number of sites/threads, the reordering between operations is bounded (since only operations which overlap in time can be reordered), while eventual consistency does not satisfy (1).

Our NP-hardness and undecidability results show that reasoning about causal consistency is intrinsically hard in general. However, by focusing on the case of the RWM abstraction, and by considering commonly used objects that are instances of this abstraction, e.g., key-value stores, one can observe that their implementations are 
typically {\em data independent}~\citep{conf/popl/Wolper86,conf/tacas/AbdullaHHJR13}. This means that the way these implementations handle data with read and write instructions is insensitive to the actual data values that are read or written. We prove that reasoning about causal consistency w.r.t. the RWM abstraction becomes tractable under the natural assumption of data independence. More precisely, we prove that checking causal consistency for a single computation is polynomial in this case, and that verifying causal consistency of an implementation is polynomially reducible to a state reachability problem, the latter being decidable for a significant class of implementations. Let us explain how we achieve that.

In fact, data independence implies that it is sufficient to consider executions where each value is written at most once; let us call such executions {\em differentiated}~(see, e.g., \citep{conf/tacas/AbdullaHHJR13}). 
The key step toward the results mentioned above is a characterization of the set of all differentiated executions that violate causal consistency w.r.t. the RWM. 
This characterization is based on the notion of a \emph{bad pattern} that can be seen as a set of operations occurring (within an execution) in some particular order corresponding to a causal consistency violation.  We express our bad patterns using appropriately defined conflict/dependency relations between operations along executions. We show that there is a {\em finite number} of bad patterns such that an execution is consistent w.r.t. the RWM abstraction {\em if and only if} the execution does not contain any of these patterns. 

In this characterization, the fact that we consider only differentiated executions is crucial. The reason is that all relations used to express bad patterns include the read-from relation that associates with each read operation the write operation that provides its value. This relation is uniquely defined for differentiated executions, while for arbitrary  executions where writes are not unique, reads can take their values from an arbitrarily large number of writes. This is actually the source of complexity and undecidability in the non-data independent case. 

Then, we exploit this characterization in two ways. First, we show that for a given execution, checking that it contains a bad pattern can be done in polynomial time, which constitutes an important gain in complexity w.r.t. to the general algorithm that does not exploit data independence (precisely because the latter needs to consider all possible read-from relations in the given execution.)

Furthermore, we show that for each bad pattern, it is possible to construct effectively an {\em observer} (which is a state-machine of some kind) that is able, when running in parallel with an implementation, to detect all the executions containing the bad pattern. A crucial point is to show that these observers are in a class of state-machines that has ``good'' decision properties. (Basically, it is important that checking whether they detect a violation is decidable for a significant class of implementations.) We show that the observers corresponding to the bad patterns we identified can  be defined as {\em register automata} \citep{DBLP:conf/concur/BouyerPT01}, i.e., finite-state state machines supplied with a finite number of registers that store data over a potentially infinite domain (such as integers, strings, etc.) but on which the only allowed operation is checking equality. An important feature of these automata is that their state reachability problem can be reduced to the one for (plain) finite-state machines. The construction of the observers is actually independent from the type of programs used for the implementation, leading to a semantically sound and complete reduction to a state reachability problem (regardless of the decidability issue) even when the implementation is deployed 
over an unbounded number of sites, has an unbounded number of variables (keys) storing data over an unbounded domain. 

Our reduction enables the use of any reachability analysis or assertion checking tool 
for the verification  of causal consistency. 
Moreover, for an important class of implementations, this reduction leads to decidability and provides a verification algorithm for causal consistency w.r.t. the RWM abstraction.
We consider implementations consisting of a finite number of  state 
machines communicating through a network (by message passing). Each machine has a finite number of finite-domain 
(control) variables with unrestricted use, in addition to a finite number of 
{\em data variables} that are used only to store and move data, and on which 
no conditional tests can be applied. 
Moreover, we do not make any assumption on the 
network: the machines communicate through unbounded unordered channels, which is the usual setting 
in large-scale distributed networks. 
(Implementations can apply ordering protocols on top of this most permissive model.) 

Implementations in the class we consider have an infinite number of 
configurations (global states) due to (1) the unboundedness of the data domain, and (2) the unboundedness of the communication 
channels. First, we show that due to 
data independence and the special form of the observers detecting bad patterns, 
proving causal consistency for any given implementation in this class (with 
any data domain) reduces to proving its causal consistency for a {\em 
bounded} data domain (with precisely 5 elements). This crucial fact allows to 
get rid of the first source of unboundedness in the configuration space. The 
second source of unboundedness is handled using counters: we prove that 
checking causal consistency in this case can be reduced to the state 
reachability problem in Vector Addition Systems with States (equivalent to 
unbounded Petri Nets), and conversely. This implies that verifying causal consistency w.r.t. 
the RWM (for this class of implementations) is EXPSPACE-complete.

It is important to notice that causal consistency has different meanings depending on the context and the targeted applications.
Several efforts have been made recently for formalizing various notions of causal consistency (e.g., \citep{Gotsman05,principles-of-eventual-consistency, Perrin:2016:CCB:2851141.2851170,jadphd,raynal1995causal}). In this paper we consider three important variants. The variant called simply causal consistency (abbreviated as \wcc) allows non-causally dependent operations to be executed in different orders by different sites, and decisions about these orders to be revised by each site. This models mechanisms for solving the conflict between non-causally dependent operations where each site speculates on an order between such operations and possibly roll-backs some of them if needed later in the execution, e.g.,~\cite{DBLP:conf/sosp/TerryTPDSH95,C-Praxis,IceCube,Telex}.
We also consider two stronger notions, namely {\em causal memory} (\scc) \citep{Ahamad94causalmemory,Perrin:2016:CCB:2851141.2851170}, and {\em causal convergence} (\ccv) \citep{Gotsman05,principles-of-eventual-consistency, Perrin:2016:CCB:2851141.2851170}. The latter assumes that there is a total order between non-causally dependent operations and each site can execute operations only in that order (when it sees them). Therefore, a site is not allowed to revise its ordering of non-causally dependent operations, and all sites execute in the same order the operations that are visible to them. This notion is used in a variety of systems~\cite{Bailis:2013:BCC:2463676.2465279,Zawirski:2015:WFR:2814576.2814733,Terry:2013:CSL:2517349.2522731,Terry:1994:SGW:645792.668302,Ladin:1992:PHA:138873.138877,DBLP:conf/cloud/DuI0Z14} because it also implies a strong variant of convergence, i.e., that every two sites that receive the same set of updates execute them in the same order. As for \scc, a site is allowed to diverge from another site on the ordering of non-causally dependent operations, but is not allowed to revise its ordering later on. \scc\, and \ccv\, are actually incomparable \citep{Perrin:2016:CCB:2851141.2851170}. 

All the contributions we have described above in this section hold for the \wcc\, criterion. In addition, concerning \scc\, and \ccv, we prove that (1) the NP-hardness and undecidability results hold, (2) a characterization by means of a finite number of bad patterns is possible, and (3) checking consistency for a single execution is polynomial time.
   
To summarize, this paper establishes the first complexity and (un)decidability results concerning the verification of causal consistency:
\begin{itemize}
\item NP-hardness of the problems of checking \wcc, \scc, and \ccv\, for a single execution 
(\sect{sec:npnpcomplete}).
\item Undecidability of the problems of verifying \wcc, \scc, and \ccv\, for regular specifications, and actually even for the RWM specification 
(\sect{sec:kvsundec}).
\item
A polynomial-time procedure for verifying that a single execution of a data independent implementation is \wcc, \scc, and \ccv\, w.r.t. RWM
(\sect{sec:poly}).
\item 
Decidability and complexity for the verification of \wcc\, w.r.t. the RWM for a significant class of data independent implementations (\sect{sec:decidability}).
\end{itemize}

The complexity and decidability results obtained for the RWM (under the assumption of data independence) are based on two key contributions that provide a deep insight on the problem of verifying causal consistency, and open the door to efficient automated testing/verification techniques:
\begin{itemize}
\item 
A characterization as a finite set of ``bad patterns'' of the set of violations to \wcc, \scc, and \ccv\, w.r.t. the RWM, under the assumption of data independence
(\sect{sec:dataind}).
\item 
A polynomial reduction of the problem of verifying that a data independent implementation is \wcc\, w.r.t. to the RWM to a state reachability (or dually to an invariant checking) problem
(\sect{sec:safety}).
\end{itemize}  

\vspace{1ex}
\iftoggle{long}{}{
Some proofs are deferred to the long version~\citep{arxixpopl17}.
}
 \section{Notations}

\subsection{Sets, Multisets, Relations}

Given a set $\ops$ and a relation $\rel \subseteq \ops \times \ops$,  
we denote by $\op_1 \ltr \op_2$ the fact that 
$(\op_1,\op_2) \in \rel$. We denote by 
$\op_1 \leqrel{\rel} \op_2$ the fact that $\op_1 \ltr \op_2$
or $\op_1 = \op_2$.
We denote by $\rel^+$ the \emph{transitive closure} of $\rel$, which is
the composition of one or more copies of $\rel$.

Let $\ops'$ be a subset of $\ops$. Then $\project{\rel}{\ops'}$ is
the relation $\rel$ projected on the set $\ops'$, that is
$\set{(\op_1,\op_2) \in \rel\ |\ \op_1, \op_2 \in \ops'}$.
The set $\ops' \subseteq \ops$ is said to be 
\emph{downward-closed} (with respect to relation $\rel$) if 
$\forall \op_1,\op_2$, if $\op_2 \in \ops'$ and $\op_1 \ltr \op_2$,
then $\op_1 \in \ops'$ as well. 
We define \emph{upward-closed} similarly.

\subsection{Labeled Posets}

A relation $< \,\, \subseteq \ops \times \ops$ is a 
\emph{strict partial order} if it is
transitive and irreflexive.
A \emph{poset} is a pair $(\ops,<)$ where 
$<$ is a strict partial order over $\ops$.
Note here that we use the strict version of posets, and 
not the ones where the underlying partial order is \emph{weak}, \ie
reflexive, antisymmetric, and transitive.

Given a set $\Sigma$, a \emph{$\Sigma$ labeled poset} $\pos$
is a tuple $(\ops,<,\getlabel)$ where $(\ops,<)$ is a poset and 
$\getlabel: \ops \rightarrow \Sigma$ is the labeling function.

We say that $\pos'$ is a \emph{prefix} of $\pos$ if there exists a downward 
closed set $A \subseteq O$ (with respect to relation $<$) such that
$\pos' = (A,<,\getlabel)$.
A (\resp labeled) \emph{\sequence} (sequence for short) 
is a (\resp labeled) poset where the relation $<$ is 
a strict total order.
We denote by $\exec \cc \exec'$ the concatenation of \sequence{s}.

\section{Replicated Objects}

We define an abstract model for the class of distributed objects called \emph{\replicated s}~\citep{birman1985replication}, where the object 
state is replicated at different sites in a network, called also \emph{processes}, and updates or queries to the object
can be submitted to any of these sites. This model reflects the view that a client
has on an execution of this object, i.e., a set of operations with their inputs and outputs where
every two operations submitted to the same site are ordered. It abstracts away the implementation internals 
like the messages exchanged by the sites in order to coordinate about the object state. 
Such a partially ordered set of operations is called a \emph{history}. 
The correctness (consistency) of a replicated object is defined with respect to a \emph{specification}
that captures the behaviors of that object in the context of sequential programs.

\subsection{Histories}

A \replicated{} implements a programming interface (API) defined by a set of \emph{methods} $\Meth$ with input or output values from  
a domain $\Domain$. 

For instance, in the case of the \kvs{}, the set of methods $\Meth$ is $\set{\writemeth,\readmeth}$ for writing or reading a variable.
Also, given a set of \emph{variables} $\Var$, the domain 
$\Domain$ is defined as
$(\Var \times \Nats) \uplus \Var \uplus \Nats \uplus \set{\unit}$. Write operations take as input a variable in $\Var$ and a value in $\Nats$
and return $\unit$ while read operations take as input a variable in $\Var$ and return a value in $\Nats$. The return value $\unit$ is often omitted for
better readability.

A \emph{\history} $\hist = (\ops,\po,\getlabel)$ is a 
poset labeled by $\Meth \times \Domain \times \Domain$,
where:
\begin{itemize}
\item $\ops$ is a set of operation identifiers, or simply \emph{operations}, \item $\po$ is a union of total orders between operations called \emph{program order}:  
  for $\op_1,\op_2 \in \ops$, $\op_1 \ltpo \op_2$ means that 
  $\op_1$ and $\op_2$ were submitted to the same \site{}, and
  $\op_1$ occurred before $\op_2$,
\item 
  for $\meth \in \Meth$ and $\argv,\rv \in \Domain$, and $\op \in \ops$, 
  $\getlabel(\op) = (\meth,\argv,\rv)$ means that
  operation $\op$ is an invocation of $\meth$ with input 
  $\argv$ and returning $\rv$. The label 
  $\getlabel(\op)$ is sometimes denoted 
  $\mktrueaction{\meth}{\argv}{\rv}$.
\end{itemize}

Given an operation $\op$ from a \kvs{} history, whose label is either
$\mkwrite{\var}{\val}$ or $\mkread{\var}{\val}$, for
some $\var \in \Var$, $\val \in \Domain$,
we define
$\getvar{\op} = \var$ and
$\getval{\op} = \val$.

\subsection{Specification}

The consistency of a \replicated{}
is defined with respect to a particular 
specification, describing the correct behaviors of that object in a sequential setting.
A \emph{\specification} $\spec$ is thus defined\footnote{In general, specifications can be defined as 
sets of posets instead of sequences. This is to model conflict-resolution policies which are more 
general than choosing a total order between operations. In this paper, we focus on the \kvs{} 
whose specification is a set of sequences.
} 
as a set of
sequences
labeled by $\Meth \times \Domain \times \Domain$.

In this paper, we focus on the \kvs{} whose specification $\speckvs$ is defined inductively as
the smallest set of sequences closed
under the following rules ($\var \in \Var$ and $\val \in \Nats$): 
\begin{enumerate}
\item
  $\emptyseq \in \speckvs$,
\item
  if $\seqposet \in \speckvs$, then
  $\seqposet \cc \mkwrite{\var}{\val} \in \speckvs$,

\item
  if $\seqposet \in \speckvs$ contains no
  write on $\var$, then
  $\seqposet \cc \mkread{\var}{0} \in \speckvs$,
\item
   if $\seqposet \in \speckvs$ and the last write in $\rho$ on variable $\var$ is
  $\mkwrite{\var}{\val}$, then
  $\seqposet \cc \mkread{\var}{\val} \in \speckvs$.
\end{enumerate}

\begin{figure}
\centering
\begin{tikzpicture}[xscale=1.8,yscale=0.5]

\node (CC) at (1,0) {\Wcc};
\node (CCv) at (0,1) {\Ccv};
\node (SCC) at (0,-1) {\Scc};

\draw[->,very thick] (CCv) -- (CC);
\draw[->,very thick] (SCC) -- (CC);

\end{tikzpicture}

\caption{
Implication graph of causal consistency definitions.\\
}
\label{fig:relationships} 

\end{figure}

\begin{figure}
\footnotesize
\centering

\begin{subfigure}[t]{0.22\textwidth}

\begin{minipage}[t]{0.4\textwidth}
$p_a$:\\
$\mkwrite{\var}{1}$\\ 
$\mkread{\var}{2}$\\
\end{minipage}
\begin{minipage}[t]{0.2\textwidth}
$p_b$:\\
$\mkwrite{\var}{2}$\\ 
$\mkread{\var}{1}$\\
\end{minipage}

\caption{\scct{} but not \ccvt{}}
\label{fig:sccnotccv}

\end{subfigure}
\begin{subfigure}[t]{0.22\textwidth}
\begin{minipage}[t]{0.4\textwidth}
$p_a$:\\
$\mkwrite{z}{1}$\\
$\mkwrite{\var}{1}$\\ 
$\mkwrite{y}{1}$\\
\end{minipage}
\begin{minipage}[t]{0.20\textwidth}
$p_b$:\\
$\mkwrite{\var}{2}$\\ 
$\mkread{z}{0}$\\
$\mkread{y}{1}$\\
$\mkread{\var}{2}$\\
\end{minipage}

\caption{\ccvt{} but not \scct{}}
\label{fig:ccvnotscc}

\end{subfigure}

\vspace{2em}

\begin{subfigure}[t]{0.22\textwidth}

\begin{minipage}[t]{0.4\textwidth}
$p_a$:\\
$\mkwrite{\var}{1}$\\
\end{minipage}
\begin{minipage}[t]{0.20\textwidth}
$p_b$:\\
$\mkwrite{\var}{2}$\\
$\mkread{\var}{1}$\\
$\mkread{\var}{2}$\\
\end{minipage}

\caption{\wcct{} but not \scct{} nor \ccvt{}}
\label{fig:wccnotccvnotscc}

\end{subfigure}
\begin{subfigure}[t]{0.22\textwidth}

\begin{minipage}[t]{0.4\textwidth}
$p_a$:\\
$\mkwrite{\var}{1}$\\ 
$\mkread{y}{0}$\\
$\mkwrite{y}{1}$\\
$\mkread{\var}{1}$\\
\end{minipage}
\begin{minipage}[t]{0.2\textwidth}
$p_b$:\\
$\mkwrite{\var}{2}$\\ 
$\mkread{y}{0}$\\
$\mkwrite{y}{2}$\\
$\mkread{\var}{2}$\\
\end{minipage}

\caption{
\wcct{}, \scct{} and \ccvt{} but not \\
sequentially consistent
}
\label{fig:ccnotsc}

\end{subfigure}

\vspace{2em}

\begin{subfigure}[t]{0.6\textwidth}

\begin{minipage}[t]{0.2\textwidth}
$p_a$:\\
$\mkwrite{\var}{1}$\\ 
$\mkwrite{y}{1}$\\
\end{minipage}
\begin{minipage}[t]{0.2\textwidth}
$p_b$:\\
$\mkread{y}{1}$\\
$\mkwrite{\var}{2}$\\
\end{minipage}
\begin{minipage}[t]{0.2\textwidth}
$p_c$:\\
$\mkread{\var}{2}$\\
$\mkread{\var}{1}$\\
\end{minipage}

\caption{
not \wcct{} (nor \scct{}, nor \ccvt{})
}
\label{fig:notcc}

\end{subfigure}

\caption{
Histories showing the differences between the 
consistency criteria $\wcc$, $\scc$, and $\ccv$.
}
\label{fig:allfigs}
\end{figure}

\section{Causal Consistency}
\label{sec:causal}

Causal consistency is one of the most widely used consistency 
criterion for replicated objects. 
Informally speaking, it ensures that, if an operation $\op_1$ is 
\emph{causally related} to an operation $\op_2$ (e.g., some \site{} knew about $\op_1$ when executing $\op_2$), 
then all \sites{} must execute operation $\op_1$ before operation $\op_2$.
Operations which are not causally related may be executed in different 
orders by different \sites{}. 

From a formal point of view, there are several variations of causal consistency that apply to 
slightly different classes of implementations. In this paper, we consider three such variations
that we call causal consistency ($\wcc$), causal memory ($\scc$), and causal convergence ($\ccv$).
We start by presenting \emph{\wcc} followed by $\scc$ and $\ccv$,
which are both strictly stronger than $\wcc$. $\scc$ and $\ccv$
are not comparable (see \fig{fig:relationships}).

\subsection{Causal Consistency: Informal Description}

Causal consistency~\cite{jadphd,Perrin:2016:CCB:2851141.2851170} 
($\wcc$ for short) 
corresponds to the weakest notion of causal consistency that
exists in the literature.
We describe the intuition behind this notion of consistency using several examples, and
then give the formal definition.

Recall that a history $\hist$ models the point of view of a client using a \replicated{}, and 
it contains no information regarding the internals of the implementation, in particular,
the messages exchanged between \sites. This means that a history contains no notion of 
\emph{causality order}.
Thus, from the point of view of the client, a history is $\wcc$ 
as long as there \emph{exists} a causality order which explains the 
return value of each operation.
This is why, in the formal definition of $\wcc$ given in the 
next section,
the causality order $\co$ is existentially quantified.

\begin{example}
History~(\ref{fig:notcc}) is not $\wcc$. The reason is that there 
does not exist a causality order which explains the return values of 
all operations in the history. Intuitively, in any causality order,
$\mkwrite{y}{1}$ must be causally related to $\mkread{y}{1}$
(so that the read can return value $1$). By transitivity of the causality
order and because any causality order must contain the program order, 
$\mkwrite{x}{1}$ must be causally related to $\mkwrite{x}{2}$.
However, \site{} $p_c$ first reads $\mkread{x}{2}$, and then $\mkread{x}{1}$.
This contradicts the informal constraint that every \site{} must see
operations which are causally related in the same order.
\end{example}

\begin{example}
History~(\ref{fig:wccnotccvnotscc}) is $\wcc$. The reason is that 
we can define a causality order where the writes 
$\mkwrite{x}{1}$ and $\mkwrite{x}{2}$ are not causally related between them, but each write
is causally related to both reads.
Since the writes are not causally related, \site{} $p_b$ can read 
them in any order.
\end{example}

There is a subtlety here. In History~(\ref{fig:wccnotccvnotscc}),
\site{} $p_b$ first does $\mkread{x}{1}$, which implicitly means that it 
executed $\mkwrite{x}{1}$ after $\mkwrite{x}{2}$. Then $p_b$ 
does $\mkread{x}{2}$
which means that $p_b$ ``changed its mind'', and decided to order 
$\mkwrite{x}{2}$ after $\mkwrite{x}{1}$. This is allowed by $\wcc$, but as we
will see later, not allowed by the stronger criteria 
$\scc$ and $\ccv$.

This feature of $\wcc$ is useful for systems which do speculative 
executions and rollbacks~\citep{DBLP:conf/sosp/TerryTPDSH95,Telex}. 
It allows systems to execute 
operations by speculating on an order, and then possibly rollback, and 
change the order of previously executed operations.
This happens in particular in systems where convergence is important, 
where a consensus protocol is running in the background to make all \sites{} 
eventually agree on a total order of operations.
The stronger definitions, $\scc$ and $\ccv$,
are not suited to represent such speculative implementations.

\subsection{Causal Consistency: Definition}

\begin{table}

\begin{tabular}{l|l}
\axpoco & $\po \subseteq \co$ \\
\axcoarb & $\co \subseteq \theirarb$ \\
\axwcc & $\projectrv{\causalpast{\op}}{\op} \weaker \loc$ \\
\axscc & $\projectrv{\causalpast{\op}}{\poback{\op}} \weaker \loc$ \\
\axccv & $\projectrv{\causalarb{\op}}{\op} \weaker \loc$ \\
\end{tabular}

\vspace{2em}
where:\\
$\causalpast{\op} = (\causaldep{\op},\co,\getlabel)$ \\
$\causalarb{\op} = (\causaldep{\op},\theirarb,\getlabel)$ \\
$\causaldep{\op} = \set{\op' \in \ops\ |\ \op' \leqco \op}$ \\
$\poback{\op} = \set{\op' \in \ops\ |\ \op' \leqpo \op}$ \\

\caption{Axioms used in the definitions of causal consistency.}
\label{fig:axioms}

\end{table}

We now give the formal definition of $\wcc$, which corresponds
to the description of the previous section.
A \history{} $\hist = (\ops,\po,\getlabel)$ is $\wcct$ with
respect to a \specification{} $\spec$ when there
exists a strict partial order $\co \subseteq \ops \times \ops$, called  \emph{causal order}, such
that, for all operations $\op \in \ops$, there exists 
a specification sequence $\loc \in \spec$ such that axioms $\axpoco$ and $\axwcc$ hold 
(see \tab{fig:axioms}).

Axiom~\axpoco{} states that the causal order must at least contain
the program order. 
Axiom~\axwcc{} states that, for each operation $\op \in \ops$, 
the causal history of $\op$ (roughly, all the operations which are before $\op$ in the causal order)
can be sequentialized in order to obtain a valid sequence of
the specification $\spec$. This sequentialization must also preserve the constraints given by the causal order.
Formally, we define the \emph{causal past} of $\op$, $\causaldep{\op}$, as the \emph{set} of operations 
before $\op$ in the causal order and the \emph{causal history} of $\op$, $\causalpast{\op}$ as
the restriction of the causal order to the operations in its causal past.
Since a \site{} is not required to be consistent with the return values it has provided in the past or the return values provided by the other sites,
the axiom~\axwcc{} uses the causal history where only the return value of operation $\op$
has been kept. This is denoted by $\projectrv{\causalpast{\op}}{\op}$.
The fact that the latter can be sequentialized to a sequence $\loc$ in the specification is denoted by 
$\projectrv{\causalpast{\op}}{\op} \weaker \loc$. We defer the formal definition of these two last notations
to the next section.

\subsection{Operations on Labeled Posets}

First, we introduce an operator which projects 
away the return values of a subset of operations.
Let $\rho = (\ops,<,\getlabel)$ be a $\Meth \times \Domain \times \Domain$
labeled poset and
$\ops' \subseteq \ops$.
We denote by $\projectrv{\rho}{\ops'}$
the labeled poset where only the return values of the operations in
$\ops'$ have been kept.
Formally, $\projectrv{\rho}{\ops'}$ is
the $(\Meth \times \Domain) \cup (\Meth \times \Domain \times \Domain)$
labeled poset $(\ops,<,\getlabel')$ where
for all $\op \in \ops'$, $\getlabel'(\op) = \getlabel(\op)$,
and for all $\op \in \ops \setminus \ops'$, if
$\getlabel(\op) = (\meth,\argv,\rv)$, then
$\getlabel'(\op) = (\meth,\argv)$.
If $\ops' = \set{\op}$, we denote
$\projectrv{\rho}{\ops'}$ by $\projectrv{\rho}{\op}$.

Second, we introduce a relation on labeled posets, denoted $\weaker$.
Let 
$\rho = (\ops,<,\getlabel)$ and 
$\rho' = (\ops,<',\getlabel')$ 
be two posets labeled by 
$(\Meth \times \Domain) \cup
(\Meth \times \Domain \times \Domain)$ (the return values of some operations 
in $\ops$ might not be specified).
We denote by $\rho' \weaker \rho$
the fact that $\rho'$ has less order and label constraints on the set $\ops$.
Formally, $\rho' \weaker \rho$ if $<'\, \subseteq\, <$ and for all operation 
$\op \in \ops$, 
and for all $\meth \in \Meth$, $\argv,\rv \in \Domain$,
\begin{itemize}
  \item $\getlabel(\op) = \getlabel'(\op)$, or 
  \item
    $\getlabel(\op) = (\meth,\argv,\rv)$ implies
    $\getlabel'(\op) = (\meth,\argv)$.
\end{itemize}

\begin{example}
For any set of operations $\ops' \subseteq \ops$,
$\projectrv{\rho}{\ops'} \weaker \rho$. The reason is that
$\projectrv{\rho}{\ops'}$ has the same order constraints on
$\ops$ than $\rho$, but some return values are hidden in
$\projectrv{\rho}{\ops'}$.
\end{example}

\begin{figure}
 
 \begin{minipage}[t]{0.16\textwidth}
  $\rho_a$:
  
  \begin{tikzpicture}[yscale=-1.3]
   \node (O1) at (0,0) { $[\op_1]$ $\mkwrite{\var}{1}$ };
   \node (O2) at (0,1) { $[\op_2]$ $\mkwrite{y}{2}$ };
   \node (O3) at (1,0.5) { $[\op_3]$ $\readmeth(\var)$ };
   
   \draw[->,thick] (O1) to (O3);
   \draw[->,thick] (O2) to (O3);
   \end{tikzpicture}
 \end{minipage}
 \begin{minipage}[t]{0.17\textwidth}
  $\rho_b$:
  
  \begin{tikzpicture}[yscale=-0.7]
   \node[text width=2.0cm] (O1) at (0,0) { $[\op_1]$ $\mkwrite{\var}{1}$ };
   \node[text width=2.0cm] (O2) at (0,1) { $[\op_2]$ $\mkwrite{y}{2}$ };
   \node[text width=2.0cm] (O3) at (0,2) { $[\op_3]$ $\mkread{\var}{1}$ };
   
   \draw[->,thick] (O1) to (O2);
   \draw[->,thick] (O2) to (O3);
   \end{tikzpicture}
 \end{minipage}
 \begin{minipage}[t]{0.10\textwidth}
  $\rho_c$:
  
  \begin{tikzpicture}[yscale=-0.5,xscale=3]
   \node[text width=3cm] (O1) at (0,0) { $[\op_1]$ $\mkwrite{\var}{1}$ };
   \node[text width=3cm] (O2) at (0,1) { $[\op_2]$ $\mkwrite{y}{2}$ };
   \node[text width=3cm] (O3) at (0,2) { $[\op_3]$ $\mkread{\var}{1}$ };
   
   \end{tikzpicture}
 \end{minipage}

 \caption{Illustration of the $\weaker$ relation.
 We have $\rho_a \weaker \rho_b$, but not
 $\rho_a \weaker \rho_c$.
 The label of an operation $\op$ is written next to $\op$.
 The arrows represent the transitive reduction of the strict 
 partial orders underlying the labeled posets.
 (For instance, none of the operations in $\rho_c$ are ordered.)
 }
 \label{fig:weakerrel}

\end{figure}

\begin{example}
In \fig{fig:weakerrel}, we have $\rho_a \weaker \rho_b$, 
as the only differences between $\rho_a$ and $\rho_b$ is the label of 
$\op_3$, and the fact that $\op_1 < \op_2$ holds in $\rho_b$ but not 
in $\rho_a$.

We have $\rho_a \not \weaker \rho_c$, as $\op_1 < \op_3$
holds in $\rho_a     $, but not in $\rho_c$.
\end{example}

\subsection{Causal Memory (\scc)}

Compared to causal consistency, causal memory~\citep{Ahamad94causalmemory,Perrin:2016:CCB:2851141.2851170} 
(denoted $\scc$) does not allow a \site{} 
to ``change its mind'' about the order of operations.
The original definition of causal memory of~\citet{Ahamad94causalmemory} applies only to the \kvs{} and it was extended by~\citet{Perrin:2016:CCB:2851141.2851170} to
arbitrary specifications. We use the more general definition, since it was also shown that it coincides with the original one for 
histories where for each variable $\var \in \Var$, 
the values written to $\var$ are unique.

For instance, History~(\ref{fig:wccnotccvnotscc}) is $\wcct$ but not 
$\scct$. Intuitively, the reason is that \site{} $p_b$ first decides
to order $\mkwrite{x}{1}$ after $\mkwrite{x}{2}$ (for $\mkread{x}{1}$)
and then decides to order $\mkwrite{x}{2}$ after $\mkwrite{x}{1}$
(for $\mkread{x}{2}$).

On the other hand, History~(\ref{fig:sccnotccv}) is $\scct$.
\Sites{} $p_a$ and $p_b$ disagree on the order of the two write operations,
but this is allowed by $\scc$, as we can define a causality order where 
the two writes are not causally related.

Formally, a \history{} $\hist = (\ops,\po,\getlabel)$ is $\scct$ with
respect to a \specification{} $\spec$ if there
exists a strict partial order $\co \subseteq \ops \times \ops$ such
that, for each operation $\op \in \ops$, there exists a specification sequence  
$\loc \in \spec$ such that axioms $\axpoco$ and $\axscc$ 
hold. With respect to $\wcct$, causal memory requires that each \site{} is consistent
with respect to the return values it has provided in the past. A 
\site{} is still not required
to be consistent with the return values provided by other \sites{}.
Therefore, $\axscc$ states:
\[
    \projectrv{\causalpast{\op}}{\poback{\op}} \weaker \loc
\]
where $\projectrv{\causalpast{\op}}{\poback{\op}}$ is the causal history where 
only the return values of 
the operations which are before $\op$ in the program order 
(in $\poback{\op}$) are kept. For finite histories, if we set $\op$ to be the 
last operation of a \site{} $\tid$, this means that we must explain 
all return values of operations in $\tid$ by a single sequence 
$\loc \in \spec$.
In particular, this is not possible for 
for site $p_b$ in History~(\ref{fig:wccnotccvnotscc}).

The following lemma gives the relationship between $\scc$ and $\wcc$.

\begin{lemma}[\citep{Perrin:2016:CCB:2851141.2851170}]
\label{lem:cmtocc}
If a history $\hist$ is \scct{} with respect to a specification 
$\spec$, then $\hist$ is \wcct{} with respect to $\spec$.
\end{lemma}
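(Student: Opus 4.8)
The plan is to derive $\wcc$ directly from $\scc$ by reusing the very same causal order. Suppose $\hist = (\ops,\po,\getlabel)$ is $\scct$ with respect to $\spec$, witnessed by a strict partial order $\co \subseteq \ops \times \ops$ such that for each $\op \in \ops$ there is a specification sequence $\loc \in \spec$ with $\axpoco$ and $\axscc$ holding. I claim the same $\co$ and, for each $\op$, the same $\loc$ witness $\scct$. Axiom $\axpoco$, namely $\po \subseteq \co$, is literally identical in the two definitions, so nothing needs to be done there. The only real content is to pass from $\axscc$ to $\axwcc$, i.e.\ from $\projectrv{\causalpast{\op}}{\poback{\op}} \weaker \loc$ to $\projectrv{\causalpast{\op}}{\op} \weaker \loc$, for the fixed $\op$ and its fixed $\loc$.

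The key observation is that $\op \in \poback{\op}$, since $\poback{\op} = \set{\op' \in \ops \mid \op' \leqpo \op}$ contains $\op$ itself (using $\op \leqpo \op$), and moreover $\op \in \causaldep{\op}$ because $\po \subseteq \co$ gives $\op \leqco \op$ as soon as $\op$ appears in any program order, and in any case the reflexive closure $\leqco$ yields $\op \leqco \op$. Hence $\set{\op} \subseteq \poback{\op} \cap \causaldep{\op}$. Now recall how $\projectrv{\cdot}{\cdot}$ acts: $\projectrv{\rho}{\ops'}$ keeps the full label (including return value) exactly on the operations in $\ops'$ and erases the return value elsewhere, without touching the order. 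So $\projectrv{\causalpast{\op}}{\op}$ keeps the return value of $\op$ alone, whereas $\projectrv{\causalpast{\op}}{\poback{\op}}$ keeps the return values of all operations in $\poback{\op}$, which is a superset of $\set{\op}$. Thus $\projectrv{\causalpast{\op}}{\op}$ has exactly the same underlying order as $\projectrv{\causalpast{\op}}{\poback{\op}}$ and strictly fewer (or equally many) specified return values; in the terminology of the $\weaker$ relation this is precisely $\projectrv{\causalpast{\op}}{\op} \weaker \projectrv{\causalpast{\op}}{\poback{\op}}$, analogous to the example in the text that $\projectrv{\rho}{\ops'} \weaker \rho$ whenever $\ops' \subseteq \ops$.

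It remains to combine this with transitivity of $\weaker$: from $\projectrv{\causalpast{\op}}{\op} \weaker \projectrv{\causalpast{\op}}{\poback{\op}}$ and the $\scct$ hypothesis $\projectrv{\causalpast{\op}}{\poback{\op}} \weaker \loc$ we conclude $\projectrv{\causalpast{\op}}{\op} \weaker \loc$, which is exactly $\axwcc$. Since $\op$ was arbitrary and $\loc$ was the one furnished by $\scct$, the pair $(\co, (\loc)_{\op})$ witnesses that $\hist$ is $\wcct$ with respect to $\spec$.

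The only step needing a small amount of care is checking that $\weaker$ is transitive and that $\projectrv{\rho}{\ops'} \weaker \projectrv{\rho}{\ops''}$ when $\ops' \subseteq \ops''$; both follow routinely by unfolding the definition of $\weaker$ ($<'\,\subseteq\,<$ is reflexive and transitive under composition, and the two-case disjunction on labels composes because ``label fully kept'' is a special case of ``argument kept, return value possibly dropped''). I do not anticipate any genuine obstacle here — the lemma is essentially a bookkeeping consequence of the definitions, the crux being the single inclusion $\set{\op} \subseteq \poback{\op}$ which makes $\axscc$ visibly stronger than $\axwcc$ on a per-operation, per-witness basis.
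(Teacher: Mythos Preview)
Your argument is correct and is exactly the paper's own proof: reuse the same $\co$ and $\loc$, observe $\projectrv{\causalpast{\op}}{\op} \weaker \projectrv{\causalpast{\op}}{\poback{\op}}$ because $\set{\op} \subseteq \poback{\op}$, and conclude by transitivity of $\weaker$. Your write-up simply spells out in more detail the two facts the paper invokes without proof (that $\weaker$ is transitive and monotone in the kept set), which is harmless.
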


\begin{proof}
We know by definition that there exists a strict partial order $\co$ such
that, for all operation $\op \in \ops$, there exists $\loc \in \spec$
such that
axioms $\axpoco$ and $\axscc$.
In particular, for any $\op \in \ops$, we have 
$\projectrv{\causalpast{\op}}{\poback{\op}} \weaker \loc$.

Since $\projectrv{\causalpast{\op}}{\op} \weaker
\projectrv{\causalpast{\op}}{\poback{\op}}$, and the relation 
$\weaker$ is transitive, 
we have 
$\projectrv{\causalpast{\op}}{\op} \weaker \loc$, and axiom $\axwcc$ holds.
\end{proof}

\subsection{Causal Convergence (\ccv)}

Our formalization of \conv{} (denoted $\ccv$) corresponds to the definition of 
\emph{causal consistency} given in 
\citet{Gotsman05} and \citet{principles-of-eventual-consistency} restricted to sequential specifications.
$\ccv$  was introduced in the context of 
\emph{eventual consistency}, another consistency criterion guaranteeing that roughly, 
all \sites{} eventually converge towards the same state, when no new updates are submitted.

Causal convergence uses a total order between all the operations in a history, called the \emph{arbitration order},
as an abstraction of the conflict resolution policy applied by \sites{} to agree on how to 
order operations which are \emph{not} causally related.
As it was the case for the causal order, the arbitration order, denoted by $\theirarb$, is not encoded explicitly in the notion of history and
it is existentially quantified in the definition of $\ccv$.

\begin{example}
History~(\ref{fig:sccnotccv}) is not $\ccvt$. The reason is that, for the 
first \site{} $p_a$ to read $\mkread{\var}{2}$, 
the write $\mkwrite{\var}{2}$ must be after $\mkwrite{\var}{1}$ in the 
arbitration order. Symmetrically, 
because of the $\mkread{\var}{1}$,
$\mkwrite{\var}{2}$ must be before $\mkwrite{\var}{1}$ in the 
arbitration order, which is not possible.
\end{example}

\begin{example}
History~(\ref{fig:ccvnotscc}) gives a \history{} which is $\ccvt$ but 
not $\scct$. To prove that it is $\ccvt$, 
a possible arbitration order is to have 
the writes of $p_a$ all before the $\mkwrite{\var}{2}$ operation,
and the causality order then relates $\mkwrite{y}{1}$ to
$\mkread{y}{1}$.

On the other hand, History~(\ref{fig:ccvnotscc}) is not $\scct$.
If History~(\ref{fig:ccvnotscc})
were \scct{}, 
for \site{} $p_b$,
$\mkwrite{y}{1}$ should go before $\mkread{y}{1}$. By transitivity, 
this implies that $\mkwrite{x}{1}$ should go before $\mkread{x}{2}$.
But for the read  $\mkread{x}{2}$ to return value $2$,
$\mkwrite{x}{1}$  should then also go before $\mkwrite{x}{2}$.
This implies that $\mkwrite{z}{1}$ goes before $\mkread{z}{0}$
preventing  $\mkread{z}{0}$ from reading the initial value $0$.
\end{example}

\begin{example}
History~(\ref{fig:ccnotsc}) shows that all causal consistency
definitions (\wcc, \scc{}, and \ccv)
are strictly weaker than sequential consistency.
Sequential consistency~\citep{Lamport:1979:MMC:1311099.1311750} 
imposes a total order on all (read and write) operations. 
In particular, no such total order can exist for 
History~(\ref{fig:ccnotsc}). Because of the initial writes 
$\mkwrite{\var}{1}$ and $\mkwrite{\var}{2}$, and the final
reads $\mkread{\var}{1}$ and $\mkread{\var}{2}$, 
all the operations of $p_a$ must be completely ordered before the
operations of $p_b$, or vice versa. This would make one
of the $\mkread{y}{0}$ to be ordered after either $\mkwrite{y}{1}$
or $\mkwrite{y}{2}$, which is not allowed by the \kvs{} specification.
On the other hand, History~(\ref{fig:ccnotsc}) satisfies all criteria 
$\wcc$, $\scc$, $\ccv$. The reason is that we can set the causality order 
to not relate any operation from $p_a$ to $p_b$ nor from 
$p_b$ to $p_a$. 
\end{example}

Formally, 
a history $\hist$ is \ccvt{} with respect to $\spec$ if there exist
a strict partial order $\co \subseteq \ops \times \ops$ and
a strict total order $\theirarb \subseteq \ops \times \ops$ such that,
for each operation $\op \in \ops$, there exists a specification sequence  
$\loc \in \spec$ 
such that
the axioms $\axpoco$, $\axcoarb$, and $\axccv$ hold.
Axiom $\axcoarb$ states that the arbitration order $\theirarb$ must at least 
respect the causal order $\co$.
Axiom $\axccv$ states that, to explain the return value of an operation $\op$,
we must sequentialize the operations which are in the causal past of 
$\op$, while respecting the arbitration order $\theirarb$.

Axioms~\axccv{} and \axcoarb{} imply axiom~\axwcc{}, as
the arbitration order $\theirarb$ contains the causality order 
$\co$. We therefore have the following lemma.

\begin{lemma}[\citep{Perrin:2016:CCB:2851141.2851170}]
\label{lem:ccvtocc}
If a history $\hist$ is \ccvt{} with respect to a specification $\spec$, 
then $\hist$ is \wcct{} with respect to $\spec$.
\end{lemma}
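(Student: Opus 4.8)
The plan is to mimic the structure of the proof of \lem{cmtocc}, exploiting the fact that $\ccvt$ is defined by adding constraints ($\axcoarb$ and the stronger sequentialization axiom $\axccv$) on top of the same data, and showing that these constraints are enough to recover $\axwcc$. Concretely, suppose $\hist = (\ops,\po,\getlabel)$ is $\ccvt$ with respect to $\spec$. By definition there exist a strict partial order $\co \subseteq \ops \times \ops$ and a strict total order $\theirarb \subseteq \ops \times \ops$ such that $\co \subseteq \theirarb$ (axiom $\axcoarb$), and such that for every $\op \in \ops$ there is $\loc \in \spec$ with $\axpoco$ and $\axccv$ holding, i.e.\ $\projectrv{\causalarb{\op}}{\op} \weaker \loc$ where $\causalarb{\op} = (\causaldep{\op}, \theirarb, \getlabel)$.

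The key step is to observe that, on the common underlying set $\causaldep{\op}$ and labelling $\getlabel$, the labeled poset $\causalpast{\op} = (\causaldep{\op}, \co, \getlabel)$ used in $\axwcc$ has \emph{fewer} order constraints than $\causalarb{\op} = (\causaldep{\op}, \theirarb, \getlabel)$, precisely because $\co \subseteq \theirarb$ by $\axcoarb$. Formally this means $\causalpast{\op} \weaker \causalarb{\op}$: the order relation of the former is contained in that of the latter, and the labels are identical, so both clauses in the definition of $\weaker$ are satisfied. Applying the return-value projection to the single operation $\op$ on both sides preserves this, so $\projectrv{\causalpast{\op}}{\op} \weaker \projectrv{\causalarb{\op}}{\op}$; one should note here that projecting away return values is monotone with respect to $\weaker$, exactly as used implicitly in the proof of \lem{cmtocc}. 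Chaining with the hypothesis $\projectrv{\causalarb{\op}}{\op} \weaker \loc$ and using transitivity of $\weaker$ yields $\projectrv{\causalpast{\op}}{\op} \weaker \loc$, which is axiom $\axwcc$. Since $\axpoco$ already holds, the same $\co$ witnesses that $\hist$ is $\wcct$ with respect to $\spec$.

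I do not expect a serious obstacle here: the argument is essentially ``more constraints imply weaker constraints after forgetting the arbitration order,'' and the only thing that needs care is checking that the two reductions — replacing $\theirarb$ by $\co \subseteq \theirarb$, and the return-value projection — compose correctly with $\weaker$ and with its transitivity. The mildest subtlety is making sure that $\causaldep{\op}$ is the \emph{same} set in $\causalpast{\op}$ and $\causalarb{\op}$, which it is, since it is defined via $\leqco$ in both cases and does not reference $\theirarb$; thus the two labeled posets share a common carrier and $\weaker$ is applicable. This is why the lemma is stated as a one-paragraph consequence of the definitions, paralleling \lem{cmtocc} and the remark preceding it that ``Axioms~\axccv{} and \axcoarb{} imply axiom~\axwcc{}.''
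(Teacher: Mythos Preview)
Your proposal is correct and takes essentially the same approach as the paper: the paper's proof simply says it is ``similar to the proof of \lem{lem:cmtocc}, but using the fact that $\projectrv{\causalpast{\op}}{\op} \weaker \projectrv{\causalarb{\op}}{\op}$ (since by axiom $\axcoarb$, $\co \subseteq \theirarb$),'' which is exactly the chain you spell out. Your extra remark that $\causaldep{\op}$ is the same carrier in both posets is a useful sanity check that the paper leaves implicit.
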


\begin{proof}
Similar to the proof of \lem{lem:cmtocc}, but using the fact 
that $\projectrv{\causalpast{\op}}{\op} \weaker 
\projectrv{\causalarb{\op}}{\op}$ (since by axiom $\axcoarb$, 
$\co \subseteq \theirarb$).
\end{proof}

\section{Single History Consistency is NP-complete}
\label{sec:npnpcomplete}

We first focus on the problem of checking whether a given history 
is consistent, which is relevant for instance in the context of testing a given replicated object. 
We prove that this problem is \npcomplete{} for all the three
variations of causal consistency (\wcc{}, \scc{}, \ccv{}) and the \kvs{} specification.

\begin{restatable}{lemma}{npnp}
\label{lem:npcomplete}
Checking whether a history $\hist$ is \wcct{} (\resp \scct{}, \resp \ccvt{}) with respect to $\speckvs$
is \npcomplete{}.
\end{restatable}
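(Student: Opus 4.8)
The argument splits into the standard two halves. For membership in $\NP$, the plan is to observe that a witness for $\wcct$ (\resp $\scct$, $\ccvt$) consists of the causal order $\co$ (and, for $\ccv$, the arbitration order $\theirarb$) together with, for each operation $\op$, a linearization $\loc$ of its causal past witnessing the relevant axiom. All of these objects are polynomial in the size of $\hist$, and checking that a given $\co$ (resp.\ $\co$ and $\theirarb$) is a strict partial (resp.\ total) order, that $\axpoco$ holds, and that each guessed $\loc$ is a valid $\speckvs$ sequence satisfying $\axwcc$ / $\axscc$ / $\axccv$ can all be done in polynomial time (validity of a $\kvs$ sequence is a simple left-to-right scan using the inductive rules 1--4). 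Hence the problem is in $\NP$.

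For $\NP$-hardness, I would reduce from a known $\NP$-complete problem — the natural candidate is a SAT-style or orientation-style problem; given the macros \texttt{\textbackslash sat}, \texttt{\textbackslash scl}, \texttt{\textbackslash form} and the variables/clauses notation already set up in the preamble, the intended reduction is almost certainly from $\sat$ (or a variant such as 1-in-3-SAT / positive-1-in-3-SAT). The idea is to build, from a CNF formula $\form$, a $\kvs$ history $\encoding{\form}{}$ whose consistency (for $\wcc$, and the same construction for $\scc$ and $\ccv$) is equivalent to satisfiability of $\form$. Concretely I would: (i) use per-variable ``gadget'' sites that each perform a pair of writes to a dedicated variable in the two possible orders, so that any causal/arbitration order is forced to pick an orientation, encoding a truth assignment; (ii) use per-clause ``checker'' sites whose sequence of reads can be explained by some $\loc \in \speckvs$ only if at least one literal of the clause is satisfied under the chosen orientation — this is where the constraints \bprf{}, \bpco{}, and the cycle-type bad patterns are exploited to force the dependency between the clause reads and the variable writes; (iii) add a final ``evaluator'' site $\peval$ (cf.\ \texttt{\textbackslash peval}) whose reads are consistent exactly when all clause checkers succeed simultaneously, tying everything together into a single history. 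One then argues both directions: a satisfying assignment yields a $\co$ (and $\theirarb$) making the history consistent, and conversely any witnessing $\co$ induces a consistent truth assignment because the gadgets force a coherent global orientation of each variable's two writes.

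The main obstacle — and the part requiring genuine care — is step (ii)/(iii): designing the clause and evaluator gadgets so that (a) the encoding works \emph{uniformly} for all three criteria $\wcc$, $\scc$, $\ccv$ (which differ precisely on whether a site may ``change its mind'' and whether there is a global arbitration order), and (b) the history genuinely requires \emph{every} clause to be satisfiable by the \emph{same} assignment, rather than allowing different clauses to use inconsistent orientations. For (a), the trick is to keep each site's sequence of reads ``monotone'' (never forcing a site to reorder two writes of the same variable in opposite ways within its own program order), so that the extra strength of $\scc$ over $\wcc$ does not break soundness, while simultaneously ensuring the variable writes remain causally unrelated so that a consistent $\theirarb$ for $\ccv$ still exists. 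For (b), the evaluator site $\peval$ must read, for each clause, a value that can only be produced after the clause's ``true'' branch has been committed, and do so in an order that — combined with $\axpoco$ and transitivity of $\co$ — forces all per-variable orientations to be mutually compatible; checking that no spurious consistent $\co$ exists (completeness of the reduction) is the delicate verification. I would conclude by noting that since the same history $\encoding{\form}{}$ serves all three criteria and is polynomial-size and polynomial-time constructible from $\form$, $\NP$-hardness holds for $\wcc$, $\scc$, and $\ccv$ simultaneously, which together with membership in $\NP$ gives $\npcomplete$ness.
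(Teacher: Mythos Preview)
Your $\NP$-membership argument matches the paper's: guess $\co$ (and $\theirarb$ for $\ccv$) and a sequence $\loc$ per operation, then verify the axioms and $\loc\in\speckvs$ in polynomial time.

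For $\NP$-hardness, however, your plan diverges from the paper's construction and leaves the essential mechanism unspecified. The paper does not use dedicated per-variable registers, per-clause checker sites, or an ``orientation of two writes'' gadget. Instead it reuses the encoding of Furbach et al.: \emph{all} operations are on a \emph{single} variable $\npvar$; for each boolean variable $\sx_i$ there are two sites, one writing (in order) all clause identifiers $C\in\getpos{\sx_i}$ followed by $\mkwrite{\npvar}{i}$, the other writing all $C\in\getneg{\sx_i}$ followed by $\mkwrite{\npvar}{i}$; and a single evaluator site $\peval$ performs $\mkread{\npvar}{1}\cdots\mkread{\npvar}{n}$ and then $\mkread{\npvar}{C_1}\cdots\mkread{\npvar}{C_k}$. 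The crux is that each $\mkread{\npvar}{i}$ must be causally preceded by one of the two $\mkwrite{\npvar}{i}$ operations, and whichever is chosen drags \emph{all} clause-writes of that site into the causal past of $\peval$, making them unusable for the later $\mkread{\npvar}{C_j}$ (they are overwritten by the write of $i$ and by the subsequent reads of $i{+}1,\ldots,n$). Thus the choice of which $\mkwrite{\npvar}{i}$ feeds $\mkread{\npvar}{i}$ encodes the truth value of $\sx_i$, and every clause read must be served by a write from a site whose final $\mkwrite{\npvar}{i}$ was \emph{not} used---i.e., by a satisfied literal. The ``$\Leftarrow$'' direction comes for free from Furbach et al., who show the history is even sequentially consistent when $\form$ is satisfiable; the ``$\Rightarrow$'' direction is argued directly for $\wcc$ (and analogously for $\scc$, $\ccv$).

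The concrete gap in your plan is the variable gadget. Two writes to a dedicated variable, whether on one site or two, do not by themselves force a causal ``orientation'' under $\wcc$: unrelated writes may simply stay unrelated, and a site reading them may see them in either order without committing $\co$ to anything. You would need an additional device (typically a read that can be explained only if one write is causally before the other) and then a way for clause sites to detect that choice---none of which you specify. The paper sidesteps this entirely by exploiting \emph{repeated values} (the same $i$ and the same $C$ written by several sites), which is precisely what makes the problem hard in the non-differentiated setting and is the point the paper emphasizes after the proof.
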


\begin{proof}

Membership in $\NP$ holds for all the variations of causal consistency,
and any specification $\spec$ for which there is a polynomial-time 
algorithm that can check whether a given sequence is in $\spec$.
This includes the \kvs{}, and common objects such as sets,
multisets, stacks, or queues.
It 
follows from the fact that one can guess a causality order $\co$
(and an arbitration order $\theirarb$ for \ccv), and a sequence 
$\loc$ for each operation $\op$, and then check in polynomial 
time whether the axioms of \tab{fig:axioms} hold, 
and whether $\loc \in \spec$.

For NP-hardness, we reduce boolean satisfiability to checking consistency of a single history reusing the encoding from \citet{roland}.
Let  $\form$ be a boolean formula in CNF with variables 
$\sx_1,\dots,\sx_n$, and clauses 
$\scl_1,\dots,\scl_k$.
The goal is to define a history $\hist$ which is $\wcct$
if and only if $\form$ is satisfiable. All operations on $\hist$
are on a single variable $\npvar$.
For the encoding, we assume that each clause corresponds to a unique 
integer strictly larger than $n$.

For $i \in \set{1,\dots,n}$, we define 
$\getpos{\sx_i}$ as the set of clauses where
$\sx_i$ appears positively, and 
$\getneg{\sx_i}$ as the set of clauses where $\sx_i$ appears negatively.

For each $i \in \set{1,\dots,n}$, $\hist$ contains two \sites{}, 
$\gettrue{i}$ and $\getfalse{i}$.
\Site{} $\gettrue{i}$ first writes each $\scl \in \getpos{\sx_i}$ 
(in the order they appear in $\scl_1,\dots,\scl_k$) to variable $\npvar$,
and then, it writes $i$.
Similarly,
\Site{} $\getfalse{i}$ writes each $\scl \in \getneg{\sx_i}$ 
(in the order they appear in $\scl_1,\dots,\scl_k$) to variable $\npvar$,
and then, it writes $i$.

Finally, a \site{} $\peval$ does 
$\mkread{\npvar}{1} \cdots \mkread{\npvar}{n}$ followed by
$\mkread{\npvar}{\scl_1} \cdots \mkread{\npvar}{\scl_k}$.

We then prove the following equivalence:
(the equivalence for $\scc$ and $\ccv$ can be proven similarly):
$\hist$ is $\wcct$ iff
$\form$ is satisfiable.

$(\Leftarrow)$ This direction follows from the proof of \citep{roland}.
They show that if $\form$ is satisfiable, the history $\hist$ is 
sequentially consistent.

$(\Rightarrow)$ Assume $\hist$ is $\wcct$. Then, there exists $\co$, 
such that, for all $\op \in \ops$, there exists $\loc \in \speckvs$, 
such that $\axpoco$ and $\axwcc$ hold.
In particular, each $\mkread{\npvar}{i}$ of $\peval$ must have 
a $\mkwrite{\npvar}{i}$ in its causal past. 

The $\mkwrite{\npvar}{i}$ operation can either be from $\gettrue{i}$ 
(corresponding to setting variable $\sx_i$ to false in $\form$), 
or from $\getfalse{i}$
(corresponding to setting variable $\sx_i$ to true in $\form$). 
For instance, if it is from \site{} $\gettrue{i}$, then none of
of the $\mkwrite{\npvar}{\scl}$ for $\scl \in \getpos{\sx_i}$
can be used for the reads $\mkread{\npvar}{\scl_i}$ of $\peval$.

Consequently, for any variable $\sx_i$, only the writes of 
$\mkwrite{\npvar}{\scl}$ for $\scl \in \getpos{\sx_i}$, or 
the ones with $\scl \in \getneg{\sx_i}$ can be used 
for the reads $\mkread{\npvar}{\scl_i}$ of $\peval$.

Moreover, each read $\mkread{\npvar}{\scl_i}$ has a corresponding 
$\mkwrite{\npvar}{\scl_i}$, meaning that $\form$ is satisfiable.
\end{proof}

The reduction from boolean satisfiability used to prove NP-hardness uses
histories where the same value is written multiple times on the same variable.
We show in \sect{sec:poly} that this is in fact necessary to 
obtain the $\NP$-hardness: when every value is written only once per variable,
the problem becomes polynomial time.

\section{Undecidability of Verifying Causal Consistency}
\label{sec:kvsundec}

We now consider the problem of checking
whether all histories of an implementation are causally consistent.
We consider this problem for all variants of causal consistency
(\wcc{}, \scc{}, \ccv{}).

We prove that this problem is undecidable.
In order to formally prove the undecidability, 
we describe an abstract model for representing implementations.

\subsection{Executions and Implementations}

Concretely, an implementation is represented by a set of 
\emph{executions}.
Formally, an \emph{execution} is a sequence of operations.
Each operation is labeled by
an element $(\tid,\meth,\argv,\rv)$ 
of $\Tid \times \Meth \times \Domain \times \Domain$, meaning
that $\meth$ was called with argument value $\argv$ on \site{} $\tid$, 
and returned value $\rv$.
An \emph{implementation} $\lib$ is a set of executions which is prefix-closed
(if $\lib$ contains an execution $\exec \cc \exec'$, $\lib$ also contains 
$\exec$).

All definitions given for histories (and sets of histories) transfer
to executions (and sets of executions) as for each execution $\exec$, we 
can define a corresponding history $\hist$.
The history $\hist = (\ops,\po,\getlabel)$ 
contains the same operations as $\exec$, and orders
$\op_1 \ltpo \op_2$ if $\op_1$ and $\op_2$ are labeled by the same \site{}, 
and $\op_1$ occurs before $\op_2$ in $\exec$.

For instance, an \implementation{} is \emph{\dataindependent} if 
the corresponding set of histories is \dataindependent.

\subsection{Undecidability Proofs}

We prove undecidability even when 
$\lib$ and $\spec$ are regular languages (given by regular expressions
or by finite automaton). We refer to this as the first undecidability proof.
Even stronger, we give a second undecidability proof, which
shows that this problem is undecidable 
when the specification is set to $\speckvs$,
with a fixed number of variables, 
and with a fixed domain size
(which is a particular regular language).

These results imply that the undecidability does not come from the 
expressiveness of the model used to describe implementations,
nor from the complexity of the specification,
but specifically from the fact that we are checking causal consistency.

For both undecidability proofs, 
our approach is to reduce the Post Correspondence Problem ($\pcp$, 
an undecidable problem in formal languages), to the problem 
of checking whether $\lib$ is \emph{not} causally consistent
(\resp \wcc,\scc,\ccv).

\begin{definition}
 Let $\alppcp$ be a finite alphabet.
\pcp{} asks, given 
$\npcp$ pairs
$(\sigu_1,\sigv_1),\dots,(\sigu_\npcp,\sigv_\npcp) 
\in (\alppcp^* \times \alppcp^*)$,
whether there exist 
$i_1,\dots,i_\kpcp \in \set{1,\dots,\npcp}$ such that
$\sigu_{i_1} \cdots \sigu_{i_\kpcp} = 
    \sigv_{i_1} \cdots \sigv_{i_\kpcp}$, with $(\kpcp > 0)$.
\end{definition}

From a high-level view, both proofs operate similarly.
We build, from a \pcp{} instance $\instpcp$, 
an implementation $\lib$ (which is here a regular language) 
-- and for the first proof, a specification $\spec$ --
such that $\instpcp$ has a positive answer if and only if 
$\lib$ contains an execution which is not causally consistent
(\resp \wcc,\scc,\ccv) with respect to $\spec$ (with respect to a 
bounded version of $\speckvs$ for the second proof).

The constructed implementations $\lib$ produce, for each possible pair of 
words $(u,v)$, an execution whose history $\encoding{u}{v}$ is \emph{not} 
causally consistent (\resp \wcc,\scc,\ccv) if and only if $(u,v)$
form a \emph{\validanswer} for $\instpcp$.

\begin{definition}
Two sequences $(\wordu, \wordv)$ in $\alppcp^*$ form a 
\emph{\validanswer} if $\wordu = \wordv$ and they can be decomposed into
$\wordu = \sigu_{i_1} \cc \sigu_{i_2} \cdots \sigu_{i_\kpcp}$ and
$\wordv = \sigv_{i_1} \cc \sigv_{i_2} \cdots \sigv_{i_\kpcp}$,
with each $(\sigu_{i_j},\sigv_{i_j})$ corresponding to a pair of problem
$\instpcp$.
\end{definition}

Therefore, $\lib$ is not causally consistent, if and only if 
$\lib$ contains an execution whose history $\encoding{u}{v}$ is 
not causally consistent, if and only if there exists 
$(u,v)$ which form a \validanswer{}
for $\instpcp$, if and only if $\instpcp$ has a positive answer.

\subsection{Undecidability For Regular Specifications}

For the first proof, 
we first prove that the \emph{shuffling problem}, 
a problem on formal languages that we introduce, 
is not decidable. This is done by reducing \pcp{} to the shuffling problem.

We then reduce the shuffling problem to checking whether an 
implementation is not causally consistent (\resp \wcc,\scc,\ccv), 
showing that verification of causal consistency is undecidable as well.

Given two words $\wu,\wv \in \alp^*$, the \emph{shuffling} operator returns 
the set of words which can be obtained from $\wu$ and $\wv$ by interleaving 
their letters. Formally, we define  
$\shuffle{\wu}{\wv} \subseteq \alp^*$ 
inductively: $\shuffle{\emptyseq}{\wv} = \set{\wv}$, 
$\shuffle{\wu}{\emptyseq} = \set{\wu}$ and 
$\shuffle{(a \cc \wu)}{(b \cc \wv)} = a \cc (\shuffle{\wu}{(b\cc\wv)}) \cup
b \cc (\shuffle{(a \cc \wu)}{\wv})$, with $a,b \in \alp$.

\begin{definition}
The \emph{shuffling problem} asks, given a
regular language $\autL$ over an alphabet $(\alpinu \uplus \alpinv)^*$,
if there exist
$\wu \in \alpinu^*$ and $\wv \in \alpinv^*$ such that 
    $\shuffle{\wu}{\wv} \cap \autL = \emptyset$.
\end{definition}

\begin{restatable}{lemma}{shuffling} \label{lem:shuffle}
The shuffling problem is undecidable. 
\end{restatable}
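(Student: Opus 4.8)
The plan is to reduce $\pcp$ to the shuffling problem. Given a $\pcp$ instance $\instpcp = (\sigu_1,\sigv_1),\dots,(\sigu_\npcp,\sigv_\npcp)$ over an alphabet $\alppcp$, I will pick two disjoint copies of the relevant symbols: one copy $\alpinu$ that will carry the ``$u$-side'' content (the concatenations of the $\sigu_i$'s together with index markers) and one copy $\alpinv$ that carries the ``$v$-side'' content. The idea is that a word $\wu \in \alpinu^*$ of the form $i_1\, \sigu_{i_1}\, i_2\, \sigu_{i_2} \cdots i_\kpcp\, \sigu_{i_\kpcp}$ encodes a candidate sequence of indices together with the $u$-side string it generates, and similarly $\wv \in \alpinv^*$ encodes the same indices and the $v$-side string. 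A $\pcp$ solution exists iff there is a choice of indices making the $u$-string equal to the $v$-string; the job of the regular language $\autL$ is to recognize every interleaving that does \emph{not} witness such an agreement, so that $\shuffle{\wu}{\wv}\cap\autL=\emptyset$ exactly when $(\wu,\wv)$ encodes a genuine $\pcp$ solution.

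Concretely, I would build $\autL$ as the complement (within $(\alpinu\uplus\alpinv)^*$) of a language $\autLL$ of ``good'' interleavings, and then argue $\autLL$ is regular so $\autL$ is too. A good interleaving is one where: (1) the projection onto $\alpinu$ is well-formed, i.e.\ an alternation of an index symbol $i$ followed by exactly the block $\sigu_i$, and similarly for the $\alpinv$-projection; (2) the two projections use the same sequence of indices in the same order; and (3) the generated $u$-string equals the generated $v$-string. Checking (1) is clearly regular on each projection. For (2) and (3) I cannot in general track equality of two unboundedly long strings with finite memory — but I do not have to, because the interleaving structure lets me demand that the two copies be presented in a \emph{synchronized, letter-by-letter alternating} fashion: a good interleaving is required to consist of matched pairs where each $\alpinu$-letter is immediately followed by the corresponding $\alpinv$-letter encoding the same underlying symbol (and index markers likewise paired up). Under that constraint, a finite automaton only ever needs to remember the last letter it read from the $u$-side to check it against the next $v$-side letter, so $\autLL$ is regular. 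The key point is that $\shuffle{\wu}{\wv}$ contains \emph{all} interleavings, in particular this fully-synchronized one, so if $(\wu,\wv)$ encodes a real solution then $\shuffle{\wu}{\wv}$ meets $\autLL$, hence misses $\autL$; conversely if no solution is encoded, every interleaving fails some clause of ``goodness'' and lands in $\autL$, so the intersection with $\autL$ is nonempty for \emph{every} pair. Hence $\instpcp$ has a solution iff some $(\wu,\wv)$ gives $\shuffle{\wu}{\wv}\cap\autL=\emptyset$.

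I would close the argument by verifying the two directions of the reduction explicitly. For soundness: if $\shuffle{\wu}{\wv}\cap\autL=\emptyset$, then since $\autL$ is the complement of $\autLL$, every element of $\shuffle{\wu}{\wv}$ is in $\autLL$; picking the synchronized interleaving shows $\wu,\wv$ must have well-formed matching index sequences and equal generated strings, which is precisely a $\pcp$ solution (modulo reading off the indices and the common string). For completeness: if $\instpcp$ has a solution $i_1,\dots,i_\kpcp$, set $\wu,\wv$ to the corresponding encodings; then I must check that \emph{no} interleaving other than the intended synchronized one sneaks into $\autLL$ and spoils emptiness — this is where the design of $\autLL$ has to be careful, forcing the pairing discipline rigidly enough that any deviation from the synchronized order is immediately rejected. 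This last verification is the main obstacle: it is easy to write down an $\autLL$ that is regular, and easy to write one whose synchronized words encode solutions, but getting both simultaneously while ensuring that \emph{stray} interleavings of a genuine solution are still excluded from $\autLL$ (equivalently, included in $\autL$) requires threading the definitions so that ``goodness'' pins down the interleaving order completely. I expect to spend most of the proof on that bookkeeping; everything else (regularity of the finite-memory checks, undecidability of $\pcp$, the complementation step) is routine.
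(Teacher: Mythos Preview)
Your reduction has a genuine logical gap in the completeness direction. With $\autL$ defined as the complement of $\autLL$, the condition $\shuffle{\wu}{\wv}\cap\autL=\emptyset$ is equivalent to $\shuffle{\wu}{\wv}\subseteq\autLL$: \emph{every} interleaving of $\wu$ and $\wv$ must lie in $\autLL$, not just the synchronized one. But your $\autLL$ by design admits only letter-by-letter synchronized interleavings, so any out-of-sync shuffle of a genuine solution encoding is rejected by $\autLL$, hence lands in $\autL$, and the intersection is nonempty. Your completeness paragraph actually has the requirement reversed: you write that you must ensure ``no interleaving other than the intended synchronized one sneaks into $\autLL$'' and that ``stray interleavings of a genuine solution are still excluded from $\autLL$,'' but that is precisely the opposite of what you need --- you need all of them \emph{in} $\autLL$. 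Enlarging $\autLL$ to absorb every shuffle of every solution encoding, while keeping it regular, is essentially the original difficulty.

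The paper takes the dual route and defines $\autL$ directly as a finite union of regular ``error'' conditions, each chosen so that \emph{no} interleaving of a valid solution encoding can trigger it. Typical clauses are: the projection onto letter-symbols has unequal $u$- and $v$-letter counts; some $u$-letter is never followed later by a $u$-separator; read as an alternation of $u$- and $v$-letters, a mismatched pair occurs; read as an alternation of $u$-blocks and $v$-blocks, some pair is not a $\pcp$ domino. These conditions are robust against reshuffling: if the two underlying strings are equal and the block structure matches the $\pcp$ pairs, then however you interleave, none of the errors can fire. Conversely, if $(\wu,\wv)$ does \emph{not} encode a solution, one exhibits a particular interleaving that does trip an error. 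This asymmetry --- universal over interleavings on the solution side, existential on the non-solution side --- is exactly what the shuffling problem's quantifier structure demands, and it is what the complement-of-synchronized-goodness construction cannot deliver.
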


We now give the undecidability theorem for causal consistency
(\resp \wcc,\scc,\ccv), 
by reducing the shuffling problem to 
the problem of verifying (non-)causal consistency.
The idea is to let one \site{} simulate 
words from $\alpinu^*$, and the second \site{} 
from 
$\alpinv^*$. We then set the specification to be 
(roughly) the language $\autL$. We therefore obtain that there exists an 
execution which is \emph{not}
causally consistent with respect to $\autL$ if and only
if there exist $\wu \in \alpinu^*$, $\wv \in \alpinv^*$ such that 
no interleaving of $\wu$ and $\wv$ belongs to $\autL$, \ie 
    $\shuffle{\wu}{\wv} \cap \autL = \emptyset$.

\begin{restatable}{theorem}{undecidability}
\label{thm:causalundec}
Given an \library{} $\lib$ 
and a specification $\spec$ given 
as regular languages, checking whether all executions of 
$\lib$ are causally consistent (\resp \wcct, \scct, \ccvt) 
with respect to $\spec$ is undecidable.
\end{restatable}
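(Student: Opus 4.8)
The plan is to reduce the shuffling problem (undecidable by \lem{lem:shuffle}) to the problem of verifying causal consistency, and to do so uniformly for all three variants \wcc, \scc, and \ccv. Given a regular language $\autL \subseteq (\alpinu \uplus \alpinv)^*$, I would build a regular implementation $\lib$ with exactly two \sites, $\procu$ and $\procv$. \Site{} $\procu$ emits, as a sequence of operations, an arbitrary word $\wu \in \alpinu^*$ (each letter of $\alpinu$ becoming some distinguished operation), and \site{} $\procv$ emits an arbitrary word $\wv \in \alpinv^*$; the prefix-closure of this set is clearly regular. The specification $\spec$ would be (a suitably operation-encoded version of) the complement language of $\autL$ together with all proper prefixes — intuitively, $\spec$ accepts exactly the sequentializations that are \emph{not} winning interleavings. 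The design target is the standard one stated in the excerpt: the history $\encoding{\wu}{\wv}$ of the execution producing $\wu$ on one \site{} and $\wv$ on the other should be causally inconsistent precisely when no interleaving of $\wu$ and $\wv$ lies in $\autL$, i.e. $\shuffle{\wu}{\wv} \cap \autL = \emptyset$.

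The key steps, in order, would be: (1) fix the operation encoding so that from a history one recovers the pair $(\wu,\wv)$ and so that every sequentialization of $\ops$ consistent with $\po$ corresponds exactly to an element of $\shuffle{\wu}{\wv}$; (2) choose $\co$ to be forced to equal $\po$ — this is where having only two \sites{} and no cross-\site{} data dependencies is convenient, since with the right encoding there is essentially no freedom to add causal edges, so the causal past of the last operation is the whole history and $\axwcc$ demands a single $\loc \in \spec$ linearizing all of $\ops$ while respecting $\po$; (3) argue that such a $\loc$ exists iff some interleaving of $\wu$ and $\wv$ is accepted by $\spec$, iff $\shuffle{\wu}{\wv} \cap \autL \ne \emptyset$; (4) conclude that $\lib$ has a non-\wcc{} execution iff some $(\wu,\wv)$ satisfies $\shuffle{\wu}{\wv}\cap\autL=\emptyset$, which by \lem{lem:shuffle} is undecidable; (5) observe that because $\co=\po$ is forced and the history has the single-sequentialization structure above, the axioms $\axwcc$, $\axscc$, and $\axccv$ all coincide on these histories (the return-value projection issues and the arbitration order collapse to the same condition), so the same construction proves undecidability for \scc{} and \ccv{} as well.

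The main obstacle I expect is step (2): genuinely pinning down $\co$ so that no cheating causal order can rescue a history that should be inconsistent. A naive read/write encoding allows many causal orders, and a clever adversary might sequentialize a sub-poset rather than the whole history, thereby avoiding the constraint that forces a full interleaving into $\spec$. Handling this likely requires a careful gadget — e.g. making each \site{} end with an operation whose return value can only be explained if \emph{every} earlier operation of \emph{both} \sites{} is in its causal past (for instance by having the two \sites{} read back a token that only becomes available after all letters are emitted, exploiting transitivity of $\co$ through a shared variable), so that the causal history of that final operation is all of $\ops$ and $\axwcc$ really does demand one global sequentialization. A secondary subtlety is making $\spec$ regular while encoding "some interleaving is \emph{not} in $\autL$" as "\emph{this particular} sequentialization is in $\spec$": since $\axwcc$ is existential over $\loc$, $\spec$ should contain the \emph{complement} of (the encoding of) $\autL$, and one must check the encoding keeps this regular and prefix-consistent. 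Once the forcing gadget is in place, the equivalence chain and the transfer to \scc{} and \ccv{} are routine, using Lemmas~\ref{lem:cmtocc} and~\ref{lem:ccvtocc} for one direction and the collapse of the axioms on these rigid histories for the other.
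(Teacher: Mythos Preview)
Your overall plan---reduce the shuffling problem, two \sites{} emitting $\wu$ and $\wv$, a gadget forcing the causal past of a final operation to contain everything---is exactly the paper's approach. But there is a concrete error: you have the direction of the specification backwards. You twice say $\spec$ should be (an encoding of) the \emph{complement} of $\autL$, yet in step~(3) you write ``such a $\loc$ exists iff some interleaving of $\wu$ and $\wv$ is accepted by $\spec$, iff $\shuffle{\wu}{\wv}\cap\autL\ne\emptyset$.'' These are inconsistent. Trace the logic: $\axwcc$ asks for the \emph{existence} of $\loc\in\spec$; you want this to succeed exactly when some interleaving lies in $\autL$, so that failure corresponds to $\shuffle{\wu}{\wv}\cap\autL=\emptyset$, the positive instance of the shuffling problem. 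Hence $\spec$ must encode $\autL$ itself (plus padding for prefixes and end markers), not its complement. With $\spec$ set to the complement, ``not causally consistent'' would instead mean $\shuffle{\wu}{\wv}\subseteq\autL$, which is a different problem and not the one \lem{lem:shuffle} shows undecidable.

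A second, smaller point on step~(2): you do not need to force $\co=\po$ globally, and trying to do so is harder than necessary. What suffices is that for one distinguished operation $\op$ the set $\causaldep{\op}$ is forced to be all of $\ops$. The paper achieves this with a single return-value-carrying ``end'' method on the second \site{} whose only acceptable explanation in $\spec$ requires a preceding ``end'' marker from the first \site; by $\axpoco$ and transitivity of $\co$, this drags every operation of the first \site{} into $\causaldep{\op}$. No shared read/write data is needed---the return value of the end method together with the shape of $\spec$ does the forcing. With those two fixes (use $\autL$, not its complement; force only the causal past of one final operation via an end-marker pair encoded in $\spec$), your plan coincides with the paper's proof, and your step~(5) is then justified by Lemmas~\ref{lem:cmtocc} and~\ref{lem:ccvtocc} in one direction and by exhibiting a single $\loc$ (which serves for all three criteria) in the other.
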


\subsection{Undecidability for the read/write memory abstraction}

Our approach for the second undecidability proof is to reduce directly 
\pcp{} to the problem of checking whether a finite-state implementation is 
\emph{not} \wcct{} (\resp \scct, \ccvt) with respect to the \kvs{},
without going through the shuffling problem.
The reduction here is much more technical, and requires $13$ \sites{}.
This is due to the fact that we cannot encode the constraints we 
want in the specification (as the specification is set to be $\speckvs$),
and we must encode them using appropriately placed read and write operations.

\begin{restatable}{theorem}{causalundeckvs}
\label{thm:causalundeckvs}
Given an \library{} $\lib$  
as a regular language, checking whether all executions of 
$\lib$ are causally consistent (\resp \wcct, \scct, \ccvt) with respect to 
$\speckvs$ 
is undecidable.

\end{restatable}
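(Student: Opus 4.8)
The plan is to reduce \pcp{} directly to the problem of deciding whether some execution of a finite-state implementation $\lib$ is \emph{not} \wcct{} (\resp \scct, \ccvt) with respect to $\speckvs$. Given a \pcp{} instance $\instpcp$ over pairs $(\sigu_1,\sigv_1),\dots,(\sigu_\npcp,\sigv_\npcp)$, I would build $\lib$ so that, for every pair of words $(\wordu,\wordv)$ that can be decomposed along the pairs of $\instpcp$ with the \emph{same} index sequence, $\lib$ has an execution whose history is $\encoding{\wordu}{\wordv}$, and $\encoding{\wordu}{\wordv}$ fails causal consistency exactly when $\wordu = \wordv$ (i.e., $(\wordu,\wordv)$ is a \validanswer{}). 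Concretely, one group of \sites{} emits the letters of $\wordu$ (in order, interleaved with ``tick'' marker writes between successive pair-blocks), another group emits the letters of $\wordv$ similarly, and a set of ``reader'' \sites{} performs reads that force, through the \speckvs{} axioms, a global ordering constraint: if the two letter-streams disagree at some position, the read pattern is satisfiable by choosing a causal order that separates the two streams, whereas if they agree everywhere (so $\wordu=\wordv$) the reads force a cycle in the induced $\co$ (via the program order on the reader \sites{} combined with the write-to-read links), making the history not \wcct. All of this is a regular language in the executions, so $\lib$ is a finite-state implementation; hence $\lib$ has a non-\wcct{} execution iff $\instpcp$ has a solution.

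The technical heart, and the reason $13$ \sites{} are needed, is that we cannot push the ``agreement'' constraint into the specification — $\spec$ is fixed to $\speckvs$ — so it must be simulated purely by placing read/write operations on a single (or a few) key(s) that mimic the forbidden-cycle structure. I would first isolate the gadget that, given two candidate writes $\mkwrite{x}{a}$ and $\mkwrite{x}{b}$ coming from the $\wordu$-side and the $\wordv$-side at the same block position, uses a fresh reader \site{} doing something like $\mkread{x}{a}\cdot\mkread{x}{b}$ (or the reverse) to force one of the two writes into the causal past of the other; this is exactly the ``change-of-mind''/ordering phenomenon that \wcc{} tolerates only when the writes are causally unrelated. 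Chaining these gadgets across all block positions, plus the program order inside the emitter \sites{} (which propagates the tick markers), yields: the whole history is \wcct{} iff at some position the two letters differ. The delicate points are (1) ensuring data independence is \emph{not} needed here (we deliberately reuse values to get hardness, mirroring the role of non-differentiated executions in \lem{lem:npcomplete}), (2) making sure that when $\wordu\neq\wordv$ the reader reads are genuinely explainable — one must exhibit a concrete $\co$ and per-operation $\loc\in\speckvs$ — and (3) keeping the alphabet of $\speckvs$ bounded, i.e., using a fixed finite number of variables and a bounded value domain, so that the reduction lands in the ``fixed number of variables, fixed domain size'' regime claimed in the statement. Encoding arbitrarily long \pcp{} words with a bounded alphabet is done by the standard trick of writing the words letter-by-letter over time rather than as a single value.

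For \scc{} and \ccv{} I would reuse the same $\lib$ and argue the equivalence separately. Since \scct{} and \ccvt{} both imply \wcct{} (\lem{lem:cmtocc}, \lem{lem:ccvtocc}), any non-\wcct{} execution is automatically non-\scct{} and non-\ccvt{}; the converse direction needs care — I must check that whenever $\wordu\neq\wordv$ the history $\encoding{\wordu}{\wordv}$ is not merely \wcct{} but actually \scct{} \emph{and} \ccvt{}, so that ``$\lib$ has a non-\scct{} (\resp non-\ccvt) execution'' is still equivalent to ``$\instpcp$ solvable''. Concretely this means the witnessing $\co$ I construct in the disagreement case must additionally (for \scc) be consistent per \site{} across all return values that \site{} produced — which is achieved by having each reader \site{} do only a short, internally consistent read sequence — and (for \ccv) extend to an arbitration total order $\theirarb\supseteq\co$ under which every causal past sequentializes correctly; since the readers touch only a single key and the emitters' writes on that key can be linearized arbitrarily when causally unrelated, such a $\theirarb$ exists. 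I expect the main obstacle to be exactly this bookkeeping: designing the reader gadgets so simultaneously that (a) agreement forces a $\po\cup$(write-read) cycle hence no valid $\co$ at all, and (b) disagreement leaves enough slack that a single global $\co$ and arbitration order satisfy $\axpoco$, $\axcoarb$, $\axwcc$, $\axscc$, $\axccv$ all at once — verifying (b) for all three criteria uniformly is where the $13$ \sites{} and the careful marker-write discipline earn their keep. The undecidability of \pcp{} then transfers, completing the proof.
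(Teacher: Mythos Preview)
Your high-level framework matches the paper's: a direct reduction from \pcp{}, producing for each candidate pair $(\wordu,\wordv)$ (decomposable along the same index sequence) a history $\encoding{\wordu}{\wordv}$ that fails causal consistency iff $\wordu=\wordv$, and then arguing uniformly for \wcc, \scc, \ccv{} by exhibiting in the disagreement case a single $\co$ (and $\theirarb$) satisfying all axioms. That part is fine, and your observation that the disagreement case must be checked against all three criteria simultaneously is exactly right.

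The gap is in the mechanism. You propose that agreement at every position forces a cycle in $\po\cup$(write--read), via reader gadgets of the form $\mkread{x}{a}\cdot\mkread{x}{b}$ chained across positions. This is underspecified and, as stated, faces a basic obstacle: when the two streams \emph{agree} at position $i$, both sides write the \emph{same} value there, so no read can distinguish them and hence no particular causality edge is forced. It is precisely \emph{disagreement} that lets a read pin down its writer. You therefore have the direction backwards, and nothing in the proposal explains how to flip it.

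The paper's proof does not use a cycle argument at all; it is a resource-counting (pigeonhole) argument. Three ideas you are missing:
\begin{itemize}
\item A derived primitive $\mkget{x}{\dat}$, expanding to $\mkwrite{x}{\fauxval}\cdot\mkread{x}{\dat}\cdot\mkwrite{x}{\fauxval}$, which makes each such read \emph{consume} a distinct $\mkwrite{x}{\dat}$: two $\mkget{x}{\dat}$ on the same \site{} cannot share a writer.
\item A complement trick: the $v$-side reads $\other{V_i}$ (the letter \emph{opposite} to $V_i$). Thus if $U_i=V_i$ the $u$-side write $x_i$ is useless for $y_i$, whereas if $U_i\neq V_i$ it can feed $y_i$.
\item A tight budget of ``backup'' and ``extra'' writes on the letter variable, engineered (via a choice variable $\choicevar$) to be exactly one short of the number of $\mkget{\letu}{\cdot}$ reads when none of the $x_i$ are usable. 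Hence $u=v$ $\Rightarrow$ one read has no writer $\Rightarrow$ not \wcct; $u\neq v$ $\Rightarrow$ one $x_j$ frees a backup write $\Rightarrow$ all reads explainable.
\end{itemize}
On top of this, a ``ticker'' synchronization (variables $\tickuf,\tickvf$ with matched $\mkwrite{}{}/\mkget{}{}$ pairs, plus savior variables $\bbbu,\bbbv$) is needed to ensure that $y_i$ can read from $x_i$ but not from $x_j$ for $j\neq i$, and to handle the case $|\wordu|\neq|\wordv|$; this is where the thirteen \sites{} actually arise, not from the chaining of cycle gadgets you sketch. Without these ingredients your plan does not yet constitute a proof.
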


\section{Causal Consistency under Data Independence}

\label{sec:dataind}

Implementations used in practice are typically
\emph{data independent}~\citep{conf/tacas/AbdullaHHJR13}, \ie their 
behaviors do not depend on the particular data values which are stored at 
a particular variable. 
Under this assumption, we prove in \sect{sec:complete}
that it is enough to verify causal consistency 
for histories which use distinct $\writemeth$ values, 
called \emph{\differentiated} histories.

We then show in \sect{sec:badpatterns}, 
for each definition of causal consistency,
how to characterize non-causally consistent (\differentiated{}) histories 
through the presence of certain sets of operations.

We call these sets of operations \emph{bad patterns}, 
because any history containing one bad pattern
is necessarily not consistent (for the considered consistency criterion).
The bad patterns are defined through various relations derived from a 
\differentiated{} history, and are all 
computable in polynomial time~(proven in \sect{sec:poly}).
For instance, for $\wcc$, we provide in \sect{sec:badpatterns} 
four bad patterns such that, a differentiated history $\hist$ is 
$\wcc$ if and only if $\hist$ contains none of these bad patterns.
We give similar lemmas for $\scc$ and $\ccv$.

\subsection{Differentiated Histories}
\label{sec:complete}

Formally, a \history{} $(\ops,\po,\getlabel)$ is said to 
be \emph{\differentiated} if for all $\op_1 \neq \op_2$, if 
$\getlabel(\op_1) = \mktrueaction{\writemeth}{\var}{\dat_1}$ and 
$\getlabel(\op_2) = \mktrueaction{\writemeth}{\var}{\dat_2}$, then 
$\dat_1 \neq \dat_2$, and there are no operation 
$\mkwrite{\var}{0}$ (which writes the initial value).
Let $\hists$ be a set of labeled posets.
We denote by $\getdiff{\hists}$ the subset of 
\differentiated{} \histories{} of $\hists$.

A \emph{renaming} $\renaming: \Nats \times \Nats$ is a function which 
modifies the data values of operations.
Given a \kvs{} history $\hist$, we define 
by $\applyrenaming{\hist}{\renaming}$ the history where 
any number $n \in \Nats$ appearing in a label of 
$\hist$ is changed to $\renaming(n)$.

A set of histories $\hists$ is \emph{\dataindependent} if, 
for every history
$\hist$,
\begin{itemize}
\item 
  there exists a \differentiated{} \history{} $\hist' \in \hists$, 
  and a renaming $\renaming$, such that 
  $\hist = \applyrenaming{\hist'}{\renaming}$.
\item
  for any renaming $\renaming$, $\applyrenaming{\hist}{\renaming} \in \hists$.
\end{itemize}

The following lemma shows that for the verification of a
\dataindependent{} set of histories, it is enough to consider
\differentiated{} histories.

\begin{restatable}{lemma}{diffhist}
\label{lem:diffhists}
Let $\hists$ be a \dataindependent{} set of histories.
Then, $\hists$ is causally consistent (\resp \wcc, \scc, \ccv) 
with respect to the \kvs{} 
if and only if 
$\getdiff{\hists}$ is causally consistent (\resp \wcc, \scc, \ccv)
with respect to the \kvs{}.
\end{restatable}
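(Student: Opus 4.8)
The statement is an ``if and only if'' between causal consistency of a data-independent set $\hists$ and of its differentiated subset $\getdiff{\hists}$. The left-to-right direction is trivial: if every history in $\hists$ is causally consistent (for whichever of $\wcc$, $\scc$, $\ccv$ we fix), then in particular every history in $\getdiff{\hists} \subseteq \hists$ is, so there is nothing to prove. All the work is in the right-to-left direction, so I would state that reduction as the core of the argument and organize the proof around it.

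\textbf{The main argument (right to left).}
Assume $\getdiff{\hists}$ is causally consistent and take an arbitrary $\hist \in \hists$; I must show $\hist$ is causally consistent. By data independence (first bullet of the definition), there is a differentiated history $\hist' \in \hists$ and a renaming $\renaming$ with $\hist = \applyrenaming{\hist'}{\renaming}$. Since $\hist' \in \getdiff{\hists}$ and the latter is assumed consistent, $\hist'$ is causally consistent, so it comes with a causal order $\co$ (and, for $\ccv$, an arbitration order $\theirarb$) and, for each operation $\op$, a specification sequence $\loc_\op \in \speckvs$ witnessing the relevant axioms from \tab{fig:axioms}. The plan is to transport these witnesses along $\renaming$: $\hist$ and $\hist'$ have the same underlying set of operations $\ops$ and the same program order $\po$ (renaming only touches data values, not the poset structure or the site structure), so I reuse the same $\co$ (and $\theirarb$) for $\hist$, and for each $\op$ I take $\applyrenaming{\loc_\op}{\renaming}$ as its witness sequence. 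Two things then need checking. First, $\applyrenaming{\loc_\op}{\renaming} \in \speckvs$: this follows because $\speckvs$ is closed under renamings — inspect the inductive rules defining $\speckvs$ (they only constrain which value a read returns relative to the last preceding write to the same variable, and that relationship is preserved by applying any function $\renaming$ uniformly). Second, the axioms $\axpoco$, $\axwcc$ (resp.\ $\axscc$, resp.\ $\axcoarb$ and $\axccv$) still hold: the order-inclusion parts ($\po \subseteq \co$, $\co \subseteq \theirarb$) are literally unchanged since the orders are unchanged, and the $\weaker$-constraints (e.g.\ $\projectrv{\causalpast{\op}}{\op} \weaker \loc_\op$) are preserved because applying the same renaming $\renaming$ to both sides of a $\weaker$ relation preserves it — $\weaker$ only compares order inclusion and label-matching-up-to-hiding-return-values, and a uniform relabeling of data values respects both. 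Hence $\hist$ satisfies the definition of the chosen criterion.

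\textbf{Where the effort goes.}
No single step is deep, but the step most worth spelling out carefully is the compatibility of $\renaming$ with the machinery: that $\speckvs$ is renaming-closed, and that the operators $\projectrv{\cdot}{\cdot}$, $\causalpast{\cdot}$, $\poback{\cdot}$, $\causalarb{\cdot}$ and the relation $\weaker$ all commute with (or are preserved by) applying a renaming to a labeled poset. I would isolate this as a small auxiliary observation — essentially ``renaming is a morphism of labeled posets that preserves membership in $\speckvs$ and the $\weaker$ relation'' — and then the three cases ($\wcc$, $\scc$, $\ccv$) follow uniformly by instantiating it, so the proof does not need to be repeated verbatim for each criterion. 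One minor subtlety to mention: in the second bullet of data independence we are told $\hists$ is closed under \emph{all} renamings, but for this direction we only need the first bullet (every history is a renaming of a differentiated one in $\hists$); the closure bullet is what would be needed for the converse direction if it were not already trivial, and also guarantees $\getdiff{\hists}$ is non-trivially related to $\hists$, so I would note it but not lean on it here.
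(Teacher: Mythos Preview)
Your proposal is correct and the core argument---transport the causal order (and arbitration order) unchanged, apply the renaming to each witness sequence $\loc_\op$, and use that $\speckvs$ is closed under renamings while $\weaker$ and the poset operators commute with renaming---is exactly the content of the paper's proof. The only difference is packaging: the paper factors the argument through an abstract reduction principle. It first proves a general ``complete sets'' lemma (if $\hists' \subseteq \hists$ is complete for a relation $\rel$ and the target set is upward-closed for $\rel$, then inclusion in the target can be checked on $\hists'$ alone), then shows that the set of causally consistent histories is upward-closed for the renaming relation $\smallerFData$ whenever the specification is renaming-invariant, and finally observes that data independence says precisely that $\getdiff{\hists}$ is complete for $\hists$ under $\smallerFData$ and that $\speckvs$ is renaming-invariant. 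Your direct argument is what one gets by unwinding this abstraction in the specific case at hand; the paper's version buys reusability (the same corollary applies to any renaming-invariant specification and any notion of consistency whose set of good histories is upward-closed), while yours is shorter and self-contained.
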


\begin{table}

\begin{tabular}{lp{0.31\textwidth}}
\bpbpcyclicco & \bpcocycle \\[0.5ex]
\bpbprfi & \bprfi  \\[0.5ex]
\bpbprf & \bprf \\[0.5ex]
\bpbpco & \bpco \\[0.5ex]
\bpbpchangerfi & \bpchangerfi{}  \\[0.5ex]
\bpbpcyclichb & \bpchange{}  \\[0.5ex]
\bpbpcycliccf & \bpdisagree{}  \\
\end{tabular}

\caption{All bad patterns defined in the paper.}
\label{tab:allbps}

\end{table}

\begin{table}

\centering
\begin{tabular}{lll}
\wcc  & \scc & \ccv 
\\
\hline
& & \\[-1ex]
    \bpbpcyclicco & \bpbpcyclicco & \bpbpcyclicco \\
    \bpbprfi & \bpbprfi & \bpbprfi \\
    \bpbprf & \bpbprf & \bpbprf \\
    \bpbpco & \bpbpco & \bpbpco \\
            & \bpbpchangerfi & \bpbpcycliccf \\
            & \bpbpcyclichb & \\
\end{tabular}

\caption{Bad patterns for each criteria.}
\label{table:bptable}

\end{table}
 
\subsection{Characterizing Causal Consistency (\wcc)}
\label{sec:badpatterns}

Let $\hist = (\ops,\po,\getlabel)$ be a \differentiated{} 
history.
We now define and explain the bad patterns of $\wcc$.
They are defined using the \emph{\readfrom} relation.
The \readfrom{} relation relates each write $w$ to each read that 
reads from $w$.
Since we are considering \differentiated{} histories, we can determine,
only by looking at the operations of a history, from which write each
read is reading from. There is no ambiguity, as each value can only 
be written once on each variable.

\begin{definition}
The 
\emph{\readfrom} relation $\rf$ is defined as:
\[
    \set{(\op_1,\op_2)\ |\ \exists \var \in \Var, \dat \in \Domain.\ 
        \getlabel(\op_1) = \mktrueaction{\writemeth}{\var,\dat}{\unit} \land 
        \getlabel(\op_2) = \mktrueaction{\readmeth}{\var}{\dat}
    }.
\]
 
The relation $\propco$ is defined as $\propco = (\po \cup \rf)^+$. 
\end{definition}

\begin{remark}
Note that we use lower-case $\co$ for the existentially quantified
causality order which appears in the definition of causal consistency,
while we use upper-case $\propco$ for the relation fixed as 
$(\po \cup \rf)^+$.
The relation $\propco$ represents the smallest causality order possible.
We in fact show in the lemmas~\ref{lem:causalreg},
\ref{lem:ccvbadpatterns}, and \ref{lem:strongcausalreg}, that when 
a history is $\wcct$ (\resp $\scct$, $\ccvt$), the causality order $\co$
can always be set to $\propco$.
\end{remark} 

There are four bad patterns for $\wcc$, defined in terms of the 
$\rf$ and $\propco$ relations: 
\bpbpcyclicco, \bpbprfi, \bpbprf, \bpbpco{} (see \tab{tab:allbps}).

\begin{example}

History~(\ref{fig:notcc}) contains bad pattern \bpbpco.
Indeed,  $\mkwrite{\var}{1}$ is causally related (through relation $\propco$)
to $\mkwrite{\var}{2}$, which is causally related to 
$\mkread{\var}{1}$.
Intuitively, this means that the \site{} executing 
$\mkread{\var}{1}$ is aware of both 
writes $\mkwrite{\var}{1}$ and $\mkwrite{\var}{2}$, but chose 
to order $\mkwrite{\var}{2}$ before $\mkwrite{\var}{1}$, while 
$\mkwrite{\var}{1}$ is causally related to $\mkwrite{\var}{2}$.
As a result, History~(\ref{fig:notcc}) is not \wcct{} (nor \scct{}, 
nor \ccvt{}).

History~(\ref{fig:ccnotsc}) contains none of the bad patterns defined 
in \tab{tab:allbps}, and satisfies all definitions of causal consistency.
In particular, History~(\ref{fig:ccnotsc}) is $\wcct$.

\end{example}

\begin{restatable}{lemma}{badpatterns}
\label{lem:causalreg}
A \differentiated{} \history{} $\hist$ is \wcct{} 
with respect to $\speckvs$ if and only if $\hist$ does not contain
one of the following bad patterns:
\bpbpcyclicco, \bpbprfi, \bpbprf, \bpbpco.
\end{restatable}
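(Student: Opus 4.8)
The plan is to prove both directions of the equivalence, with the bulk of the work in the ``only if'' direction, i.e., showing that a differentiated history $\hist$ containing none of the four bad patterns \bpbpcyclicco, \bpbprfi, \bpbprf, \bpbpco{} is \wcct{}. First I would dispatch the ``if'' direction (contrapositive: each bad pattern implies non-\wcc-consistency): assuming $\hist$ is \wcct{} with witnessing causal order $\co$, one shows each bad pattern is impossible. By Axiom~\axpoco{} we have $\po \subseteq \co$; a short argument (using \axwcc{} applied to each read to force the read-from write into its causal past) shows $\rf \subseteq \co$, hence $\propco = (\po\cup\rf)^+ \subseteq \co$ since $\co$ is transitive. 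Then \bpbpcyclicco{} contradicts irreflexivity of $\co$; \bpbprf{} (a non-initial read with no incoming $\rf$) contradicts \axwcc{} since the specification sequence $\loc$ for that read must contain a matching write before it, which (by differentiatedness) must be the unique write of that value, forcing an $\rf$ edge; \bpbprfi{} (a read of $0$ on $\var$ with a $\propco$-earlier write on $\var$) contradicts \axwcc{} because that write lies in the causal past and in $\loc$ must precede the read, so $\loc$ cannot end in state where $\var$ reads $0$; and \bpbpco{} ($\wop_1 \ltpropco \wop_2 \ltpropco \rop_1$ with $\wop_1 \ltrf \rop_1$ and same variable) contradicts \axwcc{} since in $\loc$ for $\rop_1$ we need $\wop_1$ to be the last write on $\var$ before $\rop_1$, but $\co$-respecting linearization forces $\wop_1$ before $\wop_2$ before $\rop_1$.

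For the ``only if'' direction, the approach is to take the canonical candidate causal order $\co := \propco = (\po\cup\rf)^+$ (justified by the \textbf{Remark} preceding the lemma, which promises exactly this), argue it is a strict partial order, and then for every operation $\op$ construct an explicit specification sequence $\loc \in \speckvs$ witnessing $\projectrv{\causalpast{\op}}{\op} \weaker \loc$. Since $\co$ is acyclic (no \bpbpcyclicco), the causal past $\causaldep{\op}$ admits a topological sort; I would pick $\loc$ to be any linearization of $\causalpast{\op}$ refining $\propco$, but arranged so that, if $\op$ is a read $\mkread{\var}{\dat}$, the writes on $\var$ are ordered consistently and the $\rf$-source of $\op$ is the last write on $\var$ before $\op$. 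The three non-cyclicity bad patterns are precisely what guarantees such an ordering exists and makes $\loc$ a valid \speckvs{} sequence: absence of \bpbprf{} ensures every non-initial read in $\causaldep{\op}$ has its $\rf$-source also in $\causaldep{\op}$ (downward closure of $\causaldep{\op}$ under $\co$ does this once the edge exists); absence of \bpbprfi{} ensures a read returning $0$ has no $\propco$-earlier write on its variable, so it can be placed before all writes on that variable in $\loc$; and absence of \bpbpco{} ensures that for the $\rf$-source $\wop$ of a read $\rop$ on $\var$, no other write on $\var$ sits $\propco$-between $\wop$ and $\rop$, so $\wop$ can legitimately be the last $\var$-write before $\rop$ in the linearization. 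The remaining obligation is to show these local constraints (one per read in $\causaldep{\op}$) are jointly satisfiable by a single linear order refining $\propco$ — this is a consistency/compatibility argument, the one place where some care is needed.

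The main obstacle I expect is exactly that joint-satisfiability step: one must show that the partial order $\propco$ together with all the ``last-write-before-read'' constraints (for the reads whose values we must explain) can be simultaneously extended to a total order, i.e., that these added constraints do not create a cycle with $\propco$ or with each other. The cleanest route is probably to define an auxiliary relation $\propco \cup \{(\wop,\rop') : \text{same variable}, \wop \ltpropco \rop' \text{ is the intended last write}\} \cup \{(\wop', \wop) : \wop' \ltpropco \rop', \getvar{\wop'}=\getvar{\wop}, \wop' \neq \wop\}$ and show it is acyclic using the absence of \bpbpco{} and \bpbprfi{} (any cycle would have to pass through a read and then back through a write on the same variable, yielding one of the forbidden patterns). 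Once acyclicity of this enriched relation is established, any of its linearizations gives the desired $\loc$, and membership $\loc \in \speckvs$ follows by the inductive definition of $\speckvs$ (rules 1--4) checked position by position. A secondary subtlety is handling the empty/initial cases and confirming that \axwcc{} only constrains the return value of $\op$ itself (via $\projectrv{\causalpast{\op}}{\op}$), so that return values of other operations in the causal past impose no constraints on $\loc$ — this is what makes the single-read-at-a-time reasoning sound.
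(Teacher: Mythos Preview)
Your forward direction ($\Rightarrow$) is fine and matches the paper's argument: show $\propco \subseteq \co$ via $\rf \subseteq \co$ and \axpoco{}, then dispatch each bad pattern using \axwcc{}.

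For the backward direction ($\Leftarrow$) you over-complicate things, and the approach in your ``main obstacle'' paragraph would actually fail. You propose to satisfy, for a fixed $\op$, ``one constraint per read in $\causaldep{\op}$'' simultaneously---i.e., to find a single linearization of $\causaldep{\op}$ in which \emph{every} read in the causal past has its $\rf$-source as the last same-variable write before it. But this is strictly stronger than what \axwcc{} requires, and it is false for \wcct{} histories in general. Take History~(\ref{fig:wccnotccvnotscc}): for $\op = \mkread{x}{2}$ on $p_b$, both $\mkread{x}{1}$ and $\mkread{x}{2}$ lie in $\causaldep{\op}$, and they force opposite orderings of $\mkwrite{x}{1}$ and $\mkwrite{x}{2}$; your auxiliary relation acquires both $(\mkwrite{x}{1},\mkwrite{x}{2})$ and $(\mkwrite{x}{2},\mkwrite{x}{1})$, hence a cycle, yet this history contains none of the four bad patterns. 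So the acyclicity you hope to derive from absence of \bpbpco{} and \bpbprfi{} simply does not hold; what you are sketching is closer to a characterization of \scc{} or \ccv{}, not \wcc{}.

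Your own last sentence names the fix, but you label it a ``secondary subtlety'' when it is in fact the whole point: $\projectrv{\causalpast{\op}}{\op}$ erases every return value except that of $\op$, so there is exactly \emph{one} constraint on $\loc$, not one per read. This dissolves the joint-satisfiability problem entirely. The paper's proof uses this directly via a three-case split on $\op$. If $\op$ is a write, any linearization of $\causaldep{\op}$ refining $\propco$ works (fill in the hidden reads' return values to match the preceding write). If $\op = \mkread{\var}{0}$, absence of \bpbprfi{} says no write on $\var$ lies in $\causaldep{\op}$, so again any linearization works. If $\op = \mkread{\var}{\dat}$ with $\dat \neq 0$, absence of \bpbprf{} gives the $\rf$-source $\wop_1$, and absence of \bpbpco{} says $\wop_1$ is $\propco$-maximal among writes on $\var$ in $\causaldep{\op}$; choose any linearization refining $\propco$ that places $\wop_1$ last among those writes, and fill in the other reads' values. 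No auxiliary relation and no cycle-chasing are needed.
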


\begin{proof} 
Let $\hist = (\ops,\po,\getlabel)$ be a \differentiated{} history.

$(\Rightarrow)$ 
Assume that $\hist$ is \wcct{} with respect to $\speckvs$.
We prove by contradiction that $\hist$ cannot contain 
bad patterns \bpbpcyclicco, \bpbprfi, \bpbprf, \bpbpco{}.

First, we show that $\propco \subseteq \co$.
Given the specification of $\readmeth$'s, and given that $\hist$ 
is \differentiated{}, we must 
have $\rf \subseteq \co$. Moreover, by axiom $\axpoco$, 
$\po \subseteq \co$. Since $\co$ is a transitive order, 
we thus have $(\po \cup \rf)^+ \subseteq \co$ and $\propco \subseteq \co$.

(\bpbpcyclicco) Since $\co$ is acyclic, and $\propco \subseteq \co$,
$\propco$ is acyclic as well.

(\bpbprfi) If there is a
$\mkread{\var}{0}$ operation $\rop$, and an operation $\wop$
  such that $\wop \ltpropco \rop$ with $\getvar{\wop} = \var$:
  we obtain a contradiction, because \wcc{} ensures that there exists 
  a sequence $\locof{\rop} \in \speckvs$ that orders $\wop$ before 
  $\rop$.
  However, this is not allowed by $\speckvs$, as $\hist$ is 
  \differentiated{}, and does not contain operation that write the initial
  value $0$.
  A read 
  $\mkread{\var}{0}$ can thus happen only when there are no previous write 
  operation on $\var$.

(\bpbprf) Similarly, 
we cannot have a $\mkread{\var}{\val}$ operation $\rop$ such that 
$\val \neq 0$, and such that there is no $\wop$ operation with 
$\wop \ltrf \rop$. Indeed, \wcc{} ensures that there exists 
  a sequence $\locof{\rop} \in \speckvs$ that contains $\rop$.
  Moreover, 
  $\speckvs$ allows 
  $\mkread{\var}{\val}$ operations only when there is a previous 
  write $\mkwrite{\var}{\val}$.
  So there must exist a $\mkwrite{\var}{\val}$ operation $\wop$
  (such that $\wop \ltrf \rop$).

(\bpbpco)
If there exist 
$\wop_1,\wop_2,\rop_1 \in \ops$ such that  
\begin{itemize}
\item $\wop_1 \ltrf \rop_1$ and
\item $\wop_1 \ltpropco \wop_2$ with $\getvar{\wop_1} = \getvar{\wop_2}$ and
\item $\wop_2 \ltpropco \rop_1$.
\end{itemize}

Let $\var \in \Var$ and $\dat_1 \neq \dat_2 \in \Nats$ such that:
\begin{itemize}
\item $\getlabel(\wop_1) = \mktrueaction{\writemeth}{\var,\dat_1}{\unit}$,
\item $\getlabel(\wop_2) = \mktrueaction{\writemeth}{\var,\dat_2}{\unit}$,
\item $\getlabel(\rop_1) = \mktrueaction{\readmeth}{\var}{\dat_1}$.
\end{itemize}

By \wcc{}, and since $\propco \subseteq \co$ ,
we know there exists $\locof{\rop_1} \in \speckvs$ that 
contains $\wop_1$ before $\wop_2$, and ends with $\rop_1$.
The specification $\speckvs$ require the last 
write operation on $\var$ to be a $\mktrueaction{\writemeth}{\var,\dat_1}{\unit}$.
However, as $\hist$ is differentiated, the only 
$\mktrueaction{\writemeth}{\var,\dat_1}{\unit}$ operation is 
$\wop_1$. As a result, the last write operation on variable $\var$
in $\locof{\rop_1}$ cannot be $\wop_1$ (as $\wop_2$ is after $\wop_1$),
and we have a contradiction.

$(\Leftarrow)$ 
Assume that $\hist$ contains none
of the bad patterns described above.
We show that $\hist$ is \wcct.
We use for this the strict partial order 
$\propco = (\po \cup \rf)^+$ as the causal order $\co$.
The relation $\propco$ is a strict partial order, as $\hist$ does not 
contain bad pattern $\bpbpcyclicco$.
Axiom \axpoco{} holds by construction.
We define for each operation $\op \in \ops$ a sequence 
$\loc \in \speckvs$ such that 
$\projectrv{\causalpast{\op}}{\op} \weaker \loc$ (such that 
$\axwcc$ holds).

Let $\op \in \ops$. We have three cases to consider.

1) If $\op$ is a $\writemeth$ operation, then all the 
return values of the read 
operations in $\projectrv{\causalpast{\op}}{\op}$ are hidden.
We can thus define $\loc$ as any sequentialization of 
$\projectrv{\causalpast{\op}}{\op}$, and where we add the 
appropriate return values to the read operations (the value written by
the last preceding write on the same variable).

2)
If $\op$ is a $\readmeth$ operation $\rop$, labeled by
$\mkread{\var}{0}$ for some $\var \in \Var$.
We know by the fact that $\hist$ does not contain 
$\bpbprfi$ that there is no 
$\wop$ such that $\wop \ltpropco \rop$.
As a result, we can define $\loc$ as any sequentialization of
$\projectrv{\causalpast{\op}}{\op}$, where we 
add the appropriate return values to the read operations different from 
$\rop$.

3)
If $\op$ is a $\readmeth$ operation $\rop_1$, labeled by
$\mktrueaction{\readmeth}{\var}{\dat_1}$ for 
some $\var \in \Var$ and $\dat_1 \neq 0$, 
we know by assumption that there 
exists a $\writemeth$ operation $\wop_1$ such that $\wop_1 \ltrf \rop_1$
($\hist$ does not contain $\bpbprf$). 

We also know ($\hist$ does not contain $\bpbpco$) there is no
$\wop_2$ such that  
\begin{itemize}
\item $\wop_1 \ltrf \rop_1$ and
\item $\wop_1 \ltpropco \wop_2$ with $\getvar{\wop_1} = \getvar{\wop_2}$ and
\item $\wop_2 \ltpropco \rop_1$.
\end{itemize}
This ensures that 
$\wop_1$ must be a maximal $\writemeth$ operation on variable 
$\var$ in $\causalpast{\op}$. It is thus possible to sequentialize
$\projectrv{\causalpast{\op}}{\op}$ into 
$\locof{\op}$, so that $\wop_1$ is the last write on variable $\var$.
We can then add appropriate return values to the read operations
different than $\rop_1$,
whose
return values were hidden in $\projectrv{\causalpast{\op}}{\op}$
(the value written by the last preceding write in $\locof{\op}$ on the same 
variable).
\end{proof}

\subsection{Characterizing Causal Convergence (\ccv)}

\ccv{} is stronger than \wcc{}. Therefore, $\ccv$ excludes all the bad patterns 
of \wcc{}, given in \lem{lem:causalreg}.
$\ccv$ also excludes one additional bad pattern, defined in terms of a 
\emph{\conflict{} relation}.

The \emph{\conflict{} relation} is a relation on write operations
(which write to the same variable). 
It is used for the bad pattern $\bpbpcycliccf$ of $\ccv$, defined 
in \tab{tab:allbps}.
Intuitively, for two write operations $\wop_1$ and $\wop_2$,
we have $\wop_1 \ltcf \wop_2$ if some \site{} saw both writes, 
and decided to order $\wop_1$ before $\wop_2$ (so decided to return 
the value written by $\wop_2$).

\begin{example}
History~(\ref{fig:sccnotccv}) contains bad pattern \bpbpcycliccf.
The $\mkwrite{\var}{1}$ operation $\wop_1$ is causally related to 
the $\mkread{\var}{2}$ operation, so we have $\wop_1 \ltcf \wop_2$,
where $\wop_2$ is the $\mkwrite{\var}{2}$ operation.
Symmetrically, $\wop_2 \ltcf \wop_1$, and we obtain a cycle.
On the other hand, History~(\ref{fig:sccnotccv}) does not contain 
any of the bad patterns of $\scc$.
\end{example}

\begin{example}
History~(\ref{fig:wccnotccvnotscc})
contains bad pattern \bpbpcycliccf.
The cycle is on the two writes operations 
$\mkwrite{\var}{1}$ and $\mkwrite{\var}{2}$.
\end{example}

The formal definition of the \conflict{} relation is the following.

\begin{definition}
We define the \emph{\conflict{} relation} $\cf \subseteq \ops \times \ops$
to be the smallest relation such that: for all $\var \in \Var$, and 
$\dat_1 \neq \dat_2 \in \Nats$ and operations $\wop_1,\wop_2,\rop_2$, 
if 
    \begin{itemize}
    \item $\wop_1  \ltpropco  \rop_2$, 
    \item $\getlabel(\wop_1) = \mkwrite{\var}{\dat_1}$, 
    \item $\getlabel(\wop_2) = \mkwrite{\var}{\dat_2}$, and
    \item $\getlabel(\rop_2) = \mkread{\var}{\dat_2}$,
    \end{itemize} 
    then $\wop_1  \ltcf \wop_2$.
\end{definition}

We obtain the following lemma for the bad patterns of $\ccv$.

\begin{restatable}{lemma}{ccvbadpatterns}
\label{lem:ccvbadpatterns}
A \differentiated{} \history{} $\hist$ is \ccvt{} 
with respect to $\speckvs$ if and only if 
$\hist$ is \wcct{} and does not contain the following bad pattern:
\bpbpcycliccf.
\end{restatable}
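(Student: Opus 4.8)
The proof naturally splits into the two directions of the ``if and only if.'' For the ($\Rightarrow$) direction I would assume $\hist$ is $\ccvt$ w.r.t. $\speckvs$ and argue that it contains none of the bad patterns. Since $\ccv$ implies $\wcc$ by \lem{lem:ccvtocc}, and $\wcc$-consistency already excludes \bpbpcyclicco, \bpbprfi, \bpbprf, \bpbpco{} by \lem{lem:causalreg}, the only new work is to rule out \bpbpcycliccf. For this I would use the arbitration order $\theirarb$ guaranteed by the definition of $\ccv$. The key observation is that whenever $\wop_1 \ltcf \wop_2$, one can derive $\wop_1 \ltrel{\theirarb} \wop_2$: indeed, if $\wop_1 \ltpropco \rop_2$ where $\rop_2$ reads the value $\dat_2$ written by $\wop_2$, then $\wop_1$ lies in the causal past of $\rop_2$, and since $\loc_{\rop_2} \in \speckvs$ must have $\wop_2$ as the last write on $\var$ before $\rop_2$ and must respect $\theirarb$ (by axiom $\axccv$), we cannot have $\wop_2 \ltrel{\theirarb} \wop_1$; using that $\theirarb$ is total we get $\wop_1 \ltrel{\theirarb} \wop_2$. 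Hence $\cf \subseteq \theirarb$, and since $\theirarb$ is a strict total order it is acyclic, so $\cf \cup \propco$ (note $\propco \subseteq \co \subseteq \theirarb$ by $\axpoco$, $\axcoarb$) is acyclic, ruling out \bpbpcycliccf.

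For the ($\Leftarrow$) direction I would assume $\hist$ is $\wcct$ and contains no \bpbpcycliccf{} pattern, and construct witnesses $\co$ and $\theirarb$. Take $\co = \propco = (\po \cup \rf)^+$, which is a strict partial order since \bpbpcyclicco{} is absent (this reuses the construction in \lem{lem:causalreg}). For $\theirarb$, the natural choice is \emph{any} linearization (topological completion to a strict total order) of the relation $(\propco \cup \cf)^+$; this relation is acyclic precisely because $\hist$ avoids \bpbpcycliccf{} (and avoids \bpbpcyclicco), so such a linearization exists. By construction $\po \subseteq \co$ (axiom $\axpoco$) and $\co = \propco \subseteq \theirarb$ (axiom $\axcoarb$). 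It then remains to check axiom $\axccv$: for each $\op$, exhibit $\loc \in \speckvs$ with $\projectrv{\causalarb{\op}}{\op} \weaker \loc$. I would take $\loc$ to be the sequentialization of the causal past $\causaldep{\op}$ induced by the total order $\theirarb$ restricted to $\causaldep{\op}$, with return values filled in as ``the value of the last preceding write on the same variable in that sequence.'' One must verify that this sequence is actually in $\speckvs$ — i.e., that every read in it (other than $\op$, whose return value we must preserve) gets the value we assign, which holds by the inductive definition of $\speckvs$ — and that the return value of $\op$ itself is consistent with the sequence.

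The main obstacle will be the last verification in the ($\Leftarrow$) direction: showing that when $\op$ is a read $\rop_1$ labeled $\mktrueaction{\readmeth}{\var}{\dat_1}$ with $\dat_1 \neq 0$, the write $\wop_1$ with $\wop_1 \ltrf \rop_1$ really is the $\theirarb$-maximal write on $\var$ among $\causaldep{\rop_1}$, so that ordering by $\theirarb$ puts $\wop_1$ last and the specification is satisfied. Suppose for contradiction some write $\wop_2$ on $\var$ with $\getval{\wop_2} = \dat_2 \neq \dat_1$ satisfied $\wop_1 \ltrel{\theirarb} \wop_2$ and $\wop_2 \in \causaldep{\rop_1}$, i.e. $\wop_2 \ltpropco \rop_1$. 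Absence of \bpbpco{} already forbids $\wop_1 \ltpropco \wop_2 \ltpropco \rop_1$ with $\wop_1 \ltrf \rop_1$; but here $\wop_1$ and $\wop_2$ might be ordered only via $\cf$, not $\propco$. So I would need to argue: since $\wop_1 \ltrf \rop_1$ we have $\wop_1 \ltpropco \rop_1$, hence (because $\rop_1$ reads $\dat_1$ from $\wop_1$, and $\wop_2 \ltpropco \rop_1$ with $\wop_2$ writing a different value $\dat_2$ to $\var$) the conflict rule gives $\wop_2 \ltcf \wop_1$ — wait, it gives $\wop_1' \ltcf \wop_2'$ in the pattern where a write in the causal past of a read $\rop$ conflicts with the write $\rop$ actually read from; applied with the read $\rop_1$ reading from $\wop_1$ and the other write $\wop_2$ causally before $\rop_1$, we get $\wop_2 \ltcf \wop_1$. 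This contradicts $\wop_1 \ltrel{\theirarb} \wop_2$ since $\cf \subseteq \theirarb$ by the linearization choice. So no such $\wop_2$ exists and $\wop_1$ is $\theirarb$-maximal on $\var$ in the causal past, as needed. I expect spelling out this conflict-relation bookkeeping — and the symmetric easy cases where $\op$ is a write or a read of $0$ (handled as in \lem{lem:causalreg} via absence of \bpbprfi) — to be the bulk of the routine work, with the conceptual crux being the equivalence ``$\cf$-cycle absent'' $\iff$ ``a suitable total $\theirarb$ exists.''
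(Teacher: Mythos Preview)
Your proposal is correct and follows essentially the same approach as the paper: for ($\Rightarrow$) you show $\cf \subseteq \theirarb$ via the specification axiom and conclude acyclicity of $\cf \cup \propco$ from $\propco \subseteq \co \subseteq \theirarb$; for ($\Leftarrow$) you set $\co = \propco$, take $\theirarb$ to be any linear extension of $\propco \cup \cf$, and verify $\axccv$ by observing that for any read $\rop_1$, every other write $\wop_2$ on the same variable in $\causaldep{\rop_1}$ satisfies $\wop_2 \ltcf \wop_1$ and hence $\wop_2 \ltrel{\theirarb} \wop_1$. The paper's proof is exactly this; your detour through ``absence of \bpbpco\ is not enough here'' is unnecessary but harmless, and you might make explicit in the ($\Rightarrow$) direction that $\wop_2 \in \causaldep{\rop_2}$ (which follows from $\rf \subseteq \co$).
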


\subsection{Characterizing Causal Memory (\scc)}

\scc{} is stronger than \wcc{}. Therefore, $\scc$ excludes all the bad patterns 
of \wcc{}, given in \lem{lem:causalreg}.
$\scc$ also excludes two additional bad patterns, defined in terms of a 
\emph{\happenedbefore{} relation}.

The \emph{\happenedbefore{} relation for an operation $\op \in \ops$}
intuitively represents the minimal constraints that must hold 
in a sequence containing all operations before $\op$, 
on the \site{} of $\op$.

\begin{example}
History~(\ref{fig:ccvnotscc}) contains bad pattern \bpbpchangerfi.
Indeed, we have 
$\mkwrite{z}{1} \ltpo \mkwrite{x}{1} \ltrel{\hb{\rop_2}}
\mkwrite{x}{2} \ltpo \mkread{z}{0}$, 
where $\rop_2$ is the $\mkread{x}{2}$ operation.
The edge $\mkwrite{x}{1} \ltrel{\hb{\rop_2}}
\mkwrite{x}{2}$ is induced by the fact that 
$\mkwrite{x}{1} \ltco \mkread{x}{2}$.
\end{example}

The formal definition of $\hb{\op}$ is the following.

\begin{definition}
Given $\op \in \ops$, 
we define the \emph{\happenedbefore{} relation for $\op$}, noted 
$\hb{\op} \subseteq \ops \times \ops$,
to be the smallest relation such that:\\
\begin{itemize}
\item $\projrel{\propco}{\causaldep{\op}} \subseteq \hb{\op}$, and
\item $\hb{\op}$ is transitive, and
\item 
  for $\var \in \Var$, and $\dat_1 \neq \dat_2 \in \Nats$, if 
  \begin{itemize}
  \item $\wop_1  \ltrel{\hb{\op}}  \rop_2$,
  \item $\rop_2 \leqpo \op$, 
  \item $\getlabel(\wop_1) = \mkwrite{\var}{\dat_1}$, 
  \item $\getlabel(\wop_2) = \mkwrite{\var}{\dat_2}$, and 
  \item $\getlabel(\rop_2) = \mkread{\var}{\dat_2}$,
  \end{itemize}
 then 
  $\wop_1   \ltrel{\hb{\op}} \wop_2$.
\end{itemize}
\end{definition}

There are two main differences with the \conflict{} relation $\cf$.
First, $\cf$ is not defined inductively in terms of itself, but only in terms
of the relation $\propco$.
Second, in the \happenedbefore{} relation for $\op$, in order to 
add an edge between write operations, there is the constraint that 
$\rop_2 \leqpo \op$, while in the definition of the \conflict{} relation,
$\rop_2$ is an arbitrary read operation.
These differences make the \conflict{} and \happenedbefore{} relations
not comparable (with respect to set inclusion).

We obtain the following lemma for the bad patterns of $\scc$
(see \tab{tab:allbps} for the bad patterns' definitions).

\begin{restatable}{lemma}{strongbadpatterns}
\label{lem:strongcausalreg}
A \differentiated{} \history{} $\hist$ is \scct{}
with respect to $\speckvs$ if and only if $\hist$ is 
\wcct{} and does \emph{not} contain the following bad patterns:
\bpbpchangerfi, \bpbpcyclichb.
\end{restatable}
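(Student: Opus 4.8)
The plan is to mirror the two‑directional structure of the proof of \lem{lem:causalreg}, with the per‑operation relation $\hb{\op}$ playing, for \scc, the role that $\propco$ plays for \wcc. For the $(\Rightarrow)$ direction, assume $\hist$ is \scct. By \lem{lem:cmtocc} it is \wcct, so by \lem{lem:causalreg} it contains none of \bpbpcyclicco, \bpbprfi, \bpbprf, \bpbpco; it remains to exclude \bpbpcyclichb{} and \bpbpchangerfi. Fix a witnessing causal order $\co$, and for each $\op$ a witnessing sequence $\loc \in \speckvs$ satisfying $\axpoco$ and $\axscc$; let $<$ be the strict total order induced by $\loc$ on $\causaldep{\op}$. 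As in \lem{lem:causalreg}, $\propco \subseteq \co$ (from $\rf \subseteq \co$, which uses that $\hist$ is \differentiated{}, together with $\po \subseteq \co$ and transitivity), so $\projrel{\propco}{\causaldep{\op}} \subseteq {<}$ by $\axscc$. The core step is the claim $\hb{\op} \subseteq {<}$, proved by induction on the inductive definition of $\hb{\op}$: the base case is the inclusion just noted, transitivity is immediate since $<$ is total, and for the closure step, given $\wop_1 \ltrel{\hb{\op}} \rop_2$ (hence $\wop_1 < \rop_2$ by the induction hypothesis), $\rop_2 \leqpo \op$, $\getlabel(\wop_1) = \mkwrite{\var}{\dat_1}$, $\getlabel(\wop_2) = \mkwrite{\var}{\dat_2}$, $\getlabel(\rop_2) = \mkread{\var}{\dat_2}$, $\dat_1 \neq \dat_2$, I would argue $\wop_1 < \wop_2$: since $\hist$ is \differentiated{}, $\wop_2$ is the unique writer of $\dat_2$ on $\var$, so $\wop_2 \ltrf \rop_2$ and $\wop_2 \in \causaldep{\op}$; were $\wop_2 < \wop_1$ (so $\wop_2 < \wop_1 < \rop_2$), the last write on $\var$ before $\rop_2$ in $\loc$ would not be $\wop_2$, so $\loc \in \speckvs$ could not let $\rop_2$ return $\dat_2$ — contradicting that, as $\rop_2 \leqpo \op$, $\axscc$ forces $\loc$ to preserve $\rop_2$'s return value. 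From $\hb{\op} \subseteq {<}$ we get that $\hb{\op}$ is acyclic for every $\op$ (no \bpbpcyclichb), and a \bpbpchangerfi{} ($\wop \ltrel{\hb{\op}} \rop$, $\getvar{\wop} = \getvar{\rop}$, $\rop$ labeled $\mkread{\var}{0}$, $\rop \leqpo \op$) would place $\wop \in \causaldep{\op}$ before $\rop$ in $\loc$ while $\loc$ preserves $\rop$'s return value $0$ — impossible in $\speckvs$.

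For the $(\Leftarrow)$ direction, assume $\hist$ is \wcct{} and contains neither \bpbpchangerfi{} nor \bpbpcyclichb, and take $\co = \propco$ (a strict partial order, since there is no \bpbpcyclicco), so $\axpoco$ holds; fix $\op$. I must produce $\loc \in \speckvs$ with $\projectrv{\causalpast{\op}}{\poback{\op}} \weaker \loc$. As $\projrel{\propco}{\causaldep{\op}} \subseteq \hb{\op}$, it suffices to obtain $\loc$ as a linearization of $\hb{\op}$ on $\causaldep{\op}$ in which, for every read $\rop \leqpo \op$ on a variable $\var$, no write on $\var$ falls strictly between $\rop$ and its unique writer, and (when $\rop$ reads $0$) no write on $\var$ falls before $\rop$ at all; the remaining return values can then be filled in to make $\loc \in \speckvs$, which gives $\axscc$. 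To force this I would linearize not $\hb{\op}$ but $\hb{\op} \cup {\prec}$, where $\prec$ adds an edge $\rop \prec \wop''$ whenever $\rop \leqpo \op$ is a read on $\var$ and $\wop''$ is a write on $\var$ in $\causaldep{\op}$ such that either $\rop$ reads $\dat \neq 0$ and $\wop''$ is not $\hb{\op}$‑before the writer of $\rop$, or $\rop$ reads $0$. Absence of \bpbpchangerfi{} guarantees that in the $0$‑case no such $\wop''$ satisfies $\wop'' \ltrel{\hb{\op}} \rop$, and in the other case the closure rule of $\hb{\op}$ does (indeed $\wop'' \ltrel{\hb{\op}} \rop$ would yield, by that rule, $\wop'' \ltrel{\hb{\op}}$ the writer of $\rop$, so the edge would not have been added); hence no added $\prec$‑edge is the reverse of an $\hb{\op}$‑edge, and the stated case analysis then shows every read $\rop \leqpo \op$ reads its writer in any linearization of $\hb{\op} \cup {\prec}$.

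The main obstacle is proving that $\hb{\op} \cup {\prec}$ is acyclic. Any cycle must use at least one $\prec$‑edge (otherwise it is a cycle in $\hb{\op}$, i.e.\ \bpbpcyclichb), and all sources of $\prec$‑edges are reads that are $\leqpo \op$, hence lie on the site of $\op$ and are pairwise $\po$‑comparable. I would take such a cycle, let $\rop_0$ be the $\po$‑largest $\prec$‑source on it, follow the $\prec$‑edge $\rop_0 \prec \wop_0$ and then the $\hb{\op}$‑edges along the cycle until reaching the next $\prec$‑source or returning to $\rop_0$; in both cases $\wop_0 \ltrel{\hb{\op}} \rop'$ for some $\prec$‑source $\rop' \leqpo \rop_0$, and since $\projrel{\po}{\causaldep{\op}} \subseteq \hb{\op}$ this gives $\wop_0 \ltrel{\hb{\op}} \rop_0$. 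If $\rop_0$ reads $0$, this is a \bpbpchangerfi, a contradiction; if $\rop_0$ reads $\dat \neq 0$, the closure rule of $\hb{\op}$ upgrades $\wop_0 \ltrel{\hb{\op}} \rop_0$ to ``$\wop_0 \ltrel{\hb{\op}}$ the writer of $\rop_0$'', contradicting the condition under which $\rop_0 \prec \wop_0$ was added. Pinning down this acyclicity argument — in particular the $\po$‑maximality step and the check that no $\prec$‑edge conflicts with $\hb{\op}$ — is the only essentially new ingredient; everything else runs parallel to \lem{lem:causalreg}.
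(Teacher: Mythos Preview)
Your $(\Rightarrow)$ direction is essentially identical to the paper's: both show, by induction on the inductive definition of $\hb{\op}$, that $\hb{\op}$ is contained in the total order underlying the witnessing sequence $\loc$, and then read off the absence of \bpbpcyclichb{} and \bpbpchangerfi.

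Your $(\Leftarrow)$ direction takes a genuinely different route. The paper builds $\loc$ by induction on $|\poback{\op}|$: it reuses the sequence $\locof{\op'}$ for the $\po$-predecessor $\op'$ of $\op$, and appends the new operations in $\causaldep{\op} \setminus \causaldep{\op'}$ (with a case split on whether the writer of $\op$ already lies in $\causaldep{\op'}$). You instead construct $\loc$ in one shot, as a linearization of $\hb{\op}$ enriched by the auxiliary order $\prec$. Your approach is more direct and arguably cleaner once the acyclicity of $\hb{\op}\cup{\prec}$ is established; the paper's inductive construction makes the incremental nature of $\scc$ (each operation extends the view of its $\po$-predecessor) more explicit, at the cost of a somewhat delicate case analysis.

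One small gap: as written, your definition of $\prec$ adds the edge $\rop \prec \wop$ where $\wop$ is the writer of $\rop$ itself, since $\wop$ is not strictly $\hb{\op}$-before itself. This immediately creates the $2$-cycle $\wop \ltrel{\hb{\op}} \rop \prec \wop$, and your acyclicity argument stalls there: the closure rule of $\hb{\op}$ you invoke to derive $\wop_0 \ltrel{\hb{\op}} \wop$ requires $\dat_1 \neq \dat_2$, hence does not apply when $\wop_0 = \wop$. The fix is trivial --- in the $\dat \neq 0$ case, require $\wop'' \neq \wop$ in addition to $\wop'' \not\ltrel{\hb{\op}} \wop$ --- and with that correction your argument goes through.
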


\tab{table:bptable} gives, for each consistency criterion, 
the bad patterns which are excluded by the criterion.
  
\section{Single History Consistency under DI} 

\label{sec:poly}

The lemmas of the previous sections entail a polynomial-time algorithm for 
checking whether a given \differentiated{} history is causally consistent
(for any definition).
This contrasts with the fact that checking whether an arbitrary 
history is causally consistent is \npcomplete.

The algorithm first constructs the relations which are used in the definitions 
of the bad patterns, and then checks for the presence of the bad patterns in 
the given history.

\begin{lemma}
Let $\hist = (\ops,\po,\getlabel)$ be a \differentiated{} history.
Computing the relations $\rf$,$\propco$,$\cf$, and $\hb{\op}$ for
$\op \in \ops$ can be done in polynomial time
($O(n^5)$ where $n$ is the number of operations in $\hist$).
\end{lemma}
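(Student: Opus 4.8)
The statement asserts that, for a differentiated history $\hist$ with $n$ operations, the four relations $\rf$, $\propco$, $\cf$, and $\hb{\op}$ (for all $\op \in \ops$) are computable in time $O(n^5)$. I would handle the four relations in the order of dependency, since $\propco$ is built from $\rf$ and $\po$, and $\cf$ and $\hb{\op}$ are built from $\propco$.

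First, computing $\rf$: for each read operation $\rop$ labeled $\mkread{\var}{\dat}$ with $\dat \neq 0$, I scan all operations looking for the unique write $\wop$ labeled $\mkwrite{\var}{\dat}$ — uniqueness is guaranteed because $\hist$ is \differentiated. This is $O(n)$ per read, so $O(n^2)$ overall (one can do better with hashing, but $O(n^2)$ suffices). Next, $\propco = (\po \cup \rf)^+$: we have $\po \cup \rf \subseteq \ops \times \ops$ of size $O(n^2)$, and its transitive closure is computable in $O(n^3)$ by Floyd–Warshall (or faster). Note this step also detects bad pattern \bpbpcyclicco as a by-product, but we only need the relation here.

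Then $\cf$: by its definition, $\wop_1 \ltcf \wop_2$ holds whenever there exist $\var$, $\dat_1 \neq \dat_2$, and a read $\rop_2$ with $\wop_1 \ltpropco \rop_2$, $\getlabel(\wop_1) = \mkwrite{\var}{\dat_1}$, $\getlabel(\wop_2) = \mkwrite{\var}{\dat_2}$, $\getlabel(\rop_2) = \mkread{\var}{\dat_2}$. Since $\cf$ is \emph{not} recursively defined, we simply enumerate triples $(\wop_1, \wop_2, \rop_2)$: that is $O(n^3)$ triples, and checking each condition (membership in $\propco$, label matching) is $O(1)$ once $\propco$ is stored as a Boolean matrix. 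So $\cf$ costs $O(n^3)$. Finally, for each fixed $\op \in \ops$, the relation $\hb{\op}$ \emph{is} defined inductively (a least fixpoint involving transitivity plus the write-ordering rule). Starting from $\projrel{\propco}{\causaldep{\op}}$, I repeatedly close under transitivity and under the rule that adds $\wop_1 \ltrel{\hb{\op}} \wop_2$ whenever $\wop_1 \ltrel{\hb{\op}} \rop_2$, $\rop_2 \leqpo \op$, and the labels match. Each closure round either adds at least one pair (of which there are $O(n^2)$) or terminates; recomputing transitive closure and scanning for the write-rule costs $O(n^3)$ per round, but a more careful worklist implementation gives $O(n^3)$ for the whole fixpoint of a single $\hb{\op}$ — in the worst case, the naive bound is $O(n^2)$ rounds $\times\ O(n^3)$ work, which is already generous; the honest bound is $O(n^3)$ or $O(n^4)$ per operation. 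Summing over the $n$ choices of $\op$ gives $O(n^4)$ or $O(n^5)$, which matches the claimed $O(n^5)$.

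**Main obstacle.** The only genuinely delicate point is the fixpoint computation of $\hb{\op}$: one must argue that iterating the two closure rules (transitivity and the write-ordering rule) terminates after polynomially many steps and reaches the least fixpoint. This is standard — the relation only grows, lives in the finite lattice of subsets of $\ops \times \ops$, and both rules are monotone — so termination in at most $O(n^2)$ rounds is immediate, and correctness (that the iteration computes exactly the smallest relation satisfying the closure conditions) follows from the Knaster–Tarski fixpoint theorem. Getting the exponent down to exactly $5$ is just bookkeeping; the conceptual content is merely that every relation involved is a polynomial-size least fixpoint of monotone rules over an $O(n^2)$-element domain, hence polynomial-time computable.
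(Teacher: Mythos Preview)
Your proposal is correct and follows essentially the same approach as the paper: both arguments rest on the observation that $\hb{\op}$ is the least fixpoint of monotone closure rules over a relation of size at most $n^2$, so iterating until no new edge is added terminates in polynomially many steps, each step costing at most $O(n^3)$. The paper is terser---it treats only $\hb{\op}$ explicitly, counts at most $n^2$ edge-adding iterations each of cost $O(n^3)$, and concludes $O(n^5)$---while you spell out bounds for all four relations and aggregate over all choices of $\op$; the underlying idea is identical.
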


\begin{proof}
We show this for the relation $\hb{\op}$ (for some $\op \in \ops$).
The same holds for the other relations.
The relation $\hb{\op}$
can be computed inductively using its fixpoint definition.
At each iteration of the fixpoint computation, 
we add one edge between operations in $\ops$. 
Thus, there are at most $n^2$ iterations.

Each iteration takes $O(n^3)$ time. For instance,
an iteration of computation of $\hb{\op}$ can consist in 
adding an edge by transitivity, which takes $O(n^3)$ time.

Thus the whole computation of $\hb{\op}$ takes $O(n^5)$ time.
\end{proof}

Once the relations are computed, we can check for the presence of 
bad patterns in polynomial time.

\begin{theorem} 
Let $\hist = (\ops,\po,\getlabel)$ be a \differentiated{} history. 
Checking whether 
$\hist$ is \wcct{} (\resp \scct, \resp \ccvt) can be done in 
polynomial time 
($O(n^5)$ where $n$ is the number of operations in $\hist$).
\end{theorem}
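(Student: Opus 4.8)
The plan is to derive the theorem as a direct corollary of two facts already established: the bad-pattern characterizations (\lem{lem:causalreg} for \wcc, \lem{lem:ccvbadpatterns} for \ccv, \lem{lem:strongcausalreg} for \scc), and the preceding lemma, which shows that the relations $\rf$, $\propco = (\po \cup \rf)^+$, $\cf$, and the family $\hb{\op}$ for $\op \in \ops$ are all computable in time $O(n^5)$, where $n = |\ops|$. By those characterizations, a differentiated history $\hist$ is \wcct{} (\resp \scct, \resp \ccvt) with respect to $\speckvs$ if and only if $\hist$ contains none of the (constantly many) bad patterns assigned to that criterion in \tab{table:bptable}. So the algorithm is: first compute all the needed relations; then, for the chosen criterion, test that each of its bad patterns is absent; answer ``consistent'' iff all tests pass. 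Correctness is immediate from the cited lemmas; it remains only to bound the running time.

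The preprocessing phase --- computing $\rf$, $\propco$, $\cf$, and every $\hb{\op}$ --- costs $O(n^5)$ by the preceding lemma. (This is also where differentiatedness is used: $\rf$ is well defined precisely because each value is written at most once per variable, so the bad patterns themselves are meaningful only on differentiated histories.) For the testing phase, observe that, with these relations in hand, each bad pattern of \tab{tab:allbps} is an existential query over a constant number of operations plus membership checks in the precomputed relations, hence evaluable by a bounded nest of loops over $\ops$: the cyclicity patterns \bpbpcyclicco{} and \bpbpcycliccf{} are self-loop searches in a transitive relation ($O(n^2)$ each), \bpbpcyclichb{} is such a search repeated for each of the $n$ relations $\hb{\op}$ ($O(n^3)$), \bpbprfi{} and \bpbprf{} are double loops over $\ops$ ($O(n^2)$), and \bpbpco{} and \bpbpchangerfi{} are triple loops ($O(n^3)$). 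Thus the whole testing phase runs in $O(n^3)$, which is dominated by the $O(n^5)$ preprocessing, giving the claimed $O(n^5)$ bound uniformly for all three criteria.

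There is no genuine obstacle here: the statement is bookkeeping on top of the bad-pattern lemmas and the relation-computation lemma. The one point worth stating explicitly is the observation above --- that evaluating a syntactically presented bad pattern reduces to a constant-depth nesting of loops over operations together with lookups in already-computed relations --- since this is exactly what makes the testing phase asymptotically negligible compared with building the relations.
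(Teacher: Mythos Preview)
Your proposal is correct and follows essentially the same approach as the paper: compute the relations in $O(n^5)$ via the preceding lemma, then observe that checking each bad pattern is a bounded-depth search dominated by that cost. Your analysis of the per-pattern costs is in fact more explicit than the paper's, which simply remarks that cycle detection is $O(n^2)$ and that the relation computation dominates. One small slip: $\cf \cup \propco$ (the relation tested in \bpbpcycliccf) is not itself transitive, so the cycle test there is not literally a self-loop search; but ordinary cycle detection on a binary relation over $n$ elements is still $O(n^2)$, so your bound stands.
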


\begin{proof}

First, we compute the relations $\rf,\propco,\cf,\hb{\op}$ (for all $\op \in \ops$),
in time $O(n^5)$.
The presence of bad patterns can be checked in polynomial time. 
For instance, for bad pattern $\bpbpcycliccf$, we need to find a cycle in the 
relation $\cf$.
Detecting the presence of a cycle in a relation takes $O(n^2)$.

The complexity of the algorithm thus comes from computing the relations,
which is $O(n^5)$.
\end{proof}
 
In the next two sections, we only consider criterion \wcc.

\section{Reduction to Control-State Reachability under Data Independence}
\label{sec:safety}

The undecidability proof of \thm{thm:causalundeckvs} uses an implementation
which is not \dataindependent.
Therefore, it does not apply when we consider only \dataindependent{}
implementations.
In fact, 
we show that for \kvs{} implementations which are \dataindependent, 
there is an effective reduction from checking \wcc{} to a 
non-reachability problem.

Using the characterization of \sect{sec:badpatterns}, we define an 
\emph{\observer} $\obscc$ that looks for the bad patterns leading 
to non-\wcc.
More precisely, our goal is to define $\obscc$ as a 
\emph{\registerautomaton} such that
(by an abuse of notation, the set of executions recognized by $\obscc$ is
also denoted $\obscc$):
\[
  \lib \text{ is \wcct{} with respect to } \speckvs \iff 
  \lib \cap \obscc = \emptyset
\]
where $\lib$ is any \dataindependent{} \library{}.

Ultimately, we exploit in \sect{sec:decidability} this reduction to prove that 
checking \wcc{} for (\finitestate) \dataindependent{} 
implementations, with respect to the \kvs{} specification, is decidable.

\begin{figure}[htb]
 
\scriptsize
\begin{tikzpicture}[xscale=4,yscale=3.3,->,>=latex]
\node[state,initial,initial text=,initial distance=0.2,initial above] 
  (X) at (0,0) { $q_0$ };
\node[state, accepting] (ERR) at ($(X) + (0.7,0)$) { $\errorstate$ };

\draw (X) -- 
  node[above]  { $\siteid,\mktrueaction{\readmeth}{\var}{\firstdat}$ } (ERR);

\node[state,initial,initial text=,initial distance=0.2,initial above] 
  (A) at (0,-0.6) { $q_1$ };
\node[state,ellipse,text width=1.5cm,align=center,inner sep=0pt] (B) at ($(A) + (0.8,0)$) { 
  $\clink$ $[\thedat\leftarrow \thirddat]$ };
\node[state,ellipse,text width=1.5cm,align=center,inner sep=0pt] (D) at ($(B) + (1,0)$) {
  $\clink$ $[\thedat\leftarrow \fourthdat]$ };
\node[state,accepting] (E) at ($(D) + (0,0.6)$) { $\errorstate'$ };

\draw (A) -- 
  node[above,text width=1.6cm]  { 
    $\siteid,\mktrueaction{\writemeth}{\var,\firstdat}{\unit}$ 
    $\siteid,\mkread{\var}{\firstdat}$
  } 
  node[below,text width=1.6cm,xshift=0.1cm]  { 
    \textcolor{color1}{$\assign{\regvar'}{\var}$}
    \textcolor{color2}{$\assign{\regvar}{\var}$}
  \textcolor{color3}{$\assign{\regsite}{\siteid}$}
  }  (B);

\draw (B) to  
    node[above, text width=1.6cm] { 
        $\siteid,\mktrueaction{\writemeth}{\var,\seconddat}{\unit}$
    }
    node[below, text width=1.6cm] {
        \textcolor{color1}{$\equality{\regvar'}{\var}$}
        \textcolor{color2}{$\assign{\regvar}{\var}$}
        \textcolor{color3}{$\equality{\regsite}{\siteid}$}
    } (D);
  
\draw (D) to  
  node[left, text width=1.6cm] {
    $\siteid,\mktrueaction{\readmeth}{\var}{\firstdat}$
    \textcolor{color1}{$\equality{\regvar'}{\var}$}
    \textcolor{color3}{$\equality{\regsite}{\siteid}$}
} (E);

\node[state,initial,initial text=,initial distance=0.2,initial above] 
  (AA) at (0,-1.2) { $q_2$ };
\node[state,ellipse,text width=1.5cm,align=center,inner sep=0pt] (BB) at ($(AA) + (0.8,0)$) { 
  $\clink$ $[\thedat\leftarrow \seconddat]$ };
\node[state,accepting] (EE) at ($(BB) + (0.8,0)$) { $\errorstate''$ };

\draw (AA) -- 
  node[above,text width=1.6cm]  { 
    $\siteid,\mktrueaction{\writemeth}{\var,\firstdat}{\unit}$ 
    $\siteid,\mkread{\var}{\firstdat}$
  } 
  node[below,text width=1.6cm,xshift=0.1cm]  { 
    \textcolor{color1}{$\assign{\regvar'}{\var}$}
    \textcolor{color2}{$\assign{\regvar}{\var}$}
  \textcolor{color3}{$\assign{\regsite}{\siteid}$}
  }  (BB);

\draw (BB) to  
    node[above] { 
        $\siteid,\mkread{\var}{0}$
    }
    node[below, text width=1.6cm] {
        \textcolor{color1}{$\equality{\regvar'}{\var}$}
        \textcolor{color2}{$\assign{\regvar}{\var}$}
        \textcolor{color3}{$\equality{\regsite}{\siteid}$}
    } (EE);

\end{tikzpicture}

\caption{
  The \observer{} $\obscc$, finding bad patterns for \wcc{}
  with respect to $\speckvs$. The first branch looks
  for bad pattern \bpbprf. 
  The second branch looks for bad pattern \bpbpco.
  The third branch looks for bad pattern \bpbprfi.\\
  Each state has a self-loop with any symbol
  containing value $\fifthdat$, which we do not represent.
  Two labels  $\tid,\mktrueaction{\meth}{\argv}{\rv}$
  above a transition denote two different transitions.\\
}
\label{fig:obscc}
\vspace{-9mm}
\end{figure}
 
\begin{figure}[htb]

\centering

\begin{tikzpicture}[xscale=4,yscale=4,->,>=latex]
\node[state,accepting] 
    (B) at ($(0,0)$) { $q_b$ };
\node[state,initial text=,initial distance=0.2,initial above, accepting] (C) at ($(B) + (-1,0)$) { $q_a$ };
  
\draw (B) to [bend left=35] 
    node[below, text width=1.6cm] { 
  $\siteid,\mktrueaction{\readmeth}{\var}{\thedat}$
  \textcolor{color3}{$\assign{\regsite}{\siteid}$}
  \textcolor{color2}{$\equality{\regvar}{\var}$}
} (C);

\draw (C) to [bend left=35] 
        node[above, text width=1.6cm] { 
  $\siteid,\mktrueaction{\writemeth}{\var,\thedat}{\unit}$
  \textcolor{color2}{$\assign{\regvar}{\var}$}
  \textcolor{color3}{$\equality{\regsite}{\siteid}$}
} (B);
  
\end{tikzpicture}
\caption{ 
  The \registerautomaton{} $\clink$, which recognizes causality chains by
  following links in the $\po \cup \rf$ relations. Both states are final.\\
  Each state has a self-loop with any symbol
  containing value $\fifthdat$, which we do not represent.
}
\label{fig:clink}
\vspace{-5mm}
\end{figure}

\subsection{Register Automata}

\emph{Register automata}~\citep{DBLP:conf/concur/BouyerPT01}
have a finite number of registers in which they can store 
values (such as the \siteidentifier{}, the name of a variable in the \kvs{}, 
or the data value stored at a particular variable), 
and test equality on stored registers. 

We describe the syntax of register automata that we use in the figures.
The label 
$\siteid,\mktrueaction{\writemeth}{\var,\firstdat}{\unit}$
above the transition going from $q_1$ in \fig{fig:obscc}
is a form of pattern matching. 
If the automaton reads a tuple 
$(\siteid_0, \mkwrite{\var_0}{\firstdat})$, 
for some $\siteid_0 \in \Tid$, $\var_0 \in \Var$,
then the variables $\siteid$, $\var$ are bound 
respectively to $\siteid_0$, $\var_0$.

If this transition, or another transition, gets executed afterwards,
the variables $\siteid$, $\var$ can be bound to other values.
These variables are only local to a specific execution of the transition.

The instruction \textcolor{color1}{$\assign{\regvar'}{\var}$}, on 
this same transition, is used to store the value $\var_0$ 
which was bound by $\var$
in register $\regvar'$.
This ensures that the operations 
$\mktrueaction{\writemeth}{\var,\seconddat}{\unit}$ and 
$\mktrueaction{\readmeth}{\var}{\firstdat}$ that come later
use the same variable $\var_0$, thanks to the equality check
\textcolor{color1}{$\equality{\regvar'}{\var}$}.

Note that, in \fig{fig:clink}, $\dat_0$ is not a binding variable as 
$\siteid$ and $\var$, but is instead a constant which is fixed to different values 
in \fig{fig:obscc}.

\subsection{Reduction}

$\obscc$ 
(see \fig{fig:obscc}) 
is composed of three parts.
The first part recognizes executions which contain 
bad pattern $\bpbprf$, \ie which have a $\readmeth$ operation 
with no corresponding $\writemeth$. The second part recognizes
executions containing bad pattern $\bpbpco$,
composed of operations 
$\wop_1$, $\wop_2$, and $\rop_1$, such that
$\wop_1 \ltpropco \wop_2 \ltpropco \rop_1$, $\wop_1 \ltrf \rop_1$, 
and $\getvar{\wop_1} = \getvar{\wop_2}$.
The third part recognizes executions containing bad pattern 
$\bpbprfi$, where a write on some variable $\var$, writing 
a non-initial value, is causally related to
a $\mkread{\var}{0}$.

To track the relation $\propco$, 
$\obscc$ uses another \registerautomaton{}, called $\clink$ 
(see \fig{fig:clink}), 
which recognizes unbounded chains in the $\propco$ relation.

The registers of $\obscc$ store \site{} ids and the 
variables' names of the \kvs{}
(we use registers because the number of \sites{} and variables  
in the causality links can be arbitrary).

By data independence, $\obscc$ only needs to use a bounded number of 
values.
For instance, for the second branch recognizing bad pattern \bpbpco, 
it uses value $\firstdat$ for operations $\wop_1$ and $\rop_1$,
and value $\seconddat$ for operation $\wop_2$.
It uses value $\thirddat$ for the causal link between $\wop_1$ and $\wop_2$,
and value $\fourthdat$ for the causal link between $\wop_2$ and $\rop_1$.
Finally, it uses the value $\fifthdat \in \Nats$ for all \action{s} of the 
\execution{} which are 
not part of the bad pattern. As a result, it can self-loop with any symbol
containing value $\fifthdat$.
We do not represent these self-loops 
to keep the figure simple.

We prove in \thm{th:obskvseq} that any execution recognized
by $\obscc$ is not \wcct{}.
Reciprocally, we prove that for any differentiated \execution{} of an 
implementation $\lib$ which is 
not \wcct{}, we can rename the values to obtain
an execution with $5$ values recognized by $\obscc$.
By data independence of $\lib$, the renamed execution is 
still an execution of $\lib$.

\vspace{-2mm}
\begin{remark}
Note here that the observer $\obscc$ does not look for bad pattern 
\bpbpcyclicco. We show in \thm{th:obskvseq} that, since the 
implementation is a prefix-closed set of executions, 
it suffices to look for bad pattern \bpbprf{} to recognize
bad pattern \bpbpcyclicco.
\vspace{-2mm}
\end{remark}

\begin{restatable}{theorem}{reduction}
\label{th:obskvseq}
Let $\lib$ be a \dataindependent{} implementation.
 $\lib$ is \wcct{} with respect to $\speckvs$ if and only 
if $\lib \cap \obscc = \emptyset$.
\end{restatable}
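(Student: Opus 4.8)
The plan is to prove the two directions of the equivalence by relating membership in $\obscc$ to the presence of the $\wcc$ bad patterns identified in \lem{lem:causalreg}, using \lem{lem:diffhists} to reduce to differentiated executions. Concretely, I will show: (i) every execution $\exec \in \obscc$ has a (non-differentiated, but data-renamed) history containing one of the bad patterns $\bpbprf$, $\bpbpco$, $\bpbprfi$, hence is not \wcct{}; and (ii) conversely, if $\lib$ is not \wcct{} then by \lem{lem:diffhists} $\getdiff{\lib}$ contains a differentiated execution $\exec$ whose history is not \wcct{}, so by \lem{lem:causalreg} it contains one of the four bad patterns $\bpbpcyclicco$, $\bpbprfi$, $\bpbprf$, $\bpbpco$; I then produce a renaming of $\exec$ using only the five values $\{\firstdat,\seconddat,\thirddat,\fourthdat,\fifthdat\}$ that is accepted by $\obscc$, and invoke data independence of $\lib$ to conclude the renamed execution is still in $\lib$, so $\lib \cap \obscc \neq \emptyset$.

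For direction (i), I first argue that $\clink$ accepts exactly the executions that exhibit a chain $\op_0 \mathrel{(\po\cup\rf)} \op_1 \mathrel{(\po\cup\rf)} \cdots \mathrel{(\po\cup\rf)} \op_m$ where the first operation is a write of the constant $\thedat$ to some variable, the last is a read of $\thedat$, and all intermediate links alternate appropriately: a $\po$-link is detected by the register $\regsite$ equality check (same site, consecutive in the execution), and an $\rf$-link is detected by matching the written value $\thedat$ against the subsequently read value together with the variable-register check. Unfolding the self-loops on $\fifthdat$ lets arbitrary unrelated operations sit between the relevant ones. From this I get that if $\obscc$ reaches $\errorstate'$ along its second branch, the witnessed operations $\wop_1,\wop_2,\rop_1$ satisfy $\wop_1 \ltpropco \wop_2 \ltpropco \rop_1$ (via the two $\clink$ sub-runs on values $\thirddat$ and $\fourthdat$), $\getvar{\wop_1}=\getvar{\wop_2}$ (register $\regvar'$), and $\wop_1 \ltrf \rop_1$ (both use value $\firstdat$ on the same variable), i.e.\ exactly bad pattern $\bpbpco$; similarly the first branch yields $\bpbprf$ and the third yields $\bpbprfi$. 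Since bad patterns are preserved under renaming and, crucially, under taking prefixes and under embedding into a larger execution, the history of $\exec$ is not \wcct{} by \lem{lem:causalreg}. The remark about $\bpbpcyclicco$ is handled by noting that a cycle in $\po\cup\rf$ forces, in some prefix, a read with no preceding matching write (the minimal write on the cycle cannot precede the read that "feeds back" into it when restricted to the prefix), which is $\bpbprf$; since $\lib$ is prefix-closed this bad pattern already appears in $\lib$, so $\obscc$ need not look for $\bpbpcyclicco$ directly.

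For direction (ii), the construction of the renaming is the technical heart. Given a differentiated execution $\exec$ of $\lib$ whose history contains, say, $\bpbpco$ via $\wop_1,\wop_2,\rop_1$, I define a renaming $\renaming$ that maps the data value of $\wop_1$ and $\rop_1$ to $\firstdat$, the data value of $\wop_2$ to $\seconddat$, and every other data value to $\fifthdat$; I must then check that the resulting execution $\applyrenaming{\exec}{\renaming}$ is actually accepted by $\obscc$, which requires exhibiting, as intermediate "link" operations for the two $\clink$ sub-automata, the operations along some $(\po\cup\rf)$-path from $\wop_1$ to $\wop_2$ and from $\wop_2$ to $\rop_1$ — and here the delicacy is that those link operations carry the designated values $\thirddat$ and $\fourthdat$, so the renaming must actually be a bit more refined: it assigns $\thirddat$ to the read/write values realizing the $\rf$-links on the first path and $\fourthdat$ to those on the second path (the $\po$-links need no special value, only site equality). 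I need to confirm these value assignments are consistent (no operation is asked to carry two different values) — this follows because $\exec$ is differentiated and the two paths can be chosen to use disjoint sets of $\rf$-edges, or, if they genuinely share operations, the shared portion can be absorbed. I then invoke the second clause of data independence of $\lib$ (closure under arbitrary renamings) to conclude $\applyrenaming{\exec}{\renaming}\in\lib$, giving $\lib\cap\obscc\neq\emptyset$. The cases for $\bpbprfi$ and $\bpbprf$ are analogous and simpler, and $\bpbpcyclicco$ reduces to $\bpbprf$ on a prefix as above.

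The main obstacle I anticipate is precisely this consistency-of-the-renaming argument in direction (ii): ensuring that a single global renaming into $\{\firstdat,\dots,\fifthdat\}$ can simultaneously mark the bad-pattern endpoints and all the causal-link witnesses along chosen $(\po\cup\rf)$-paths without conflict, and that the $\clink$ automaton's strictly alternating/looping structure can parse exactly that renamed word. Making this precise requires a careful choice of the $(\po\cup\rf)$-paths (e.g.\ shortest paths, and arguing that $\po$-segments within a path contribute no data constraints while each $\rf$-edge contributes exactly one value constraint that can be satisfied by $\thirddat$ or $\fourthdat$ depending on which half of the pattern it serves), together with a routine but careful induction showing $\clink$ accepts the marked chain. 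Everything else — the NP/undecidability-free soundness of $\obscc$, preservation of bad patterns under renaming, and the prefix-closure trick for $\bpbpcyclicco$ — is comparatively direct given \lem{lem:causalreg} and \lem{lem:diffhists}.
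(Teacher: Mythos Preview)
Your overall plan matches the paper's: use \lem{lem:diffhists} to reduce to differentiated executions, use \lem{lem:causalreg} to link bad patterns with (non-)\wcc, handle $\bpbpcyclicco$ via prefix-closure (reducing it to $\bpbprf$ on a shorter execution), and in the ($\Leftarrow$) direction build a renaming into $\set{\firstdat,\ldots,\fifthdat}$ that lands the execution in $\obscc$. Your identification of the renaming-consistency issue in direction (ii) is more careful than the paper's own proof, which simply declares the renaming without addressing the possibility that a single integer is simultaneously the value of $\wop_1$ (to be sent to $\firstdat$) and of some intermediate write on another variable along a causality chain (to be sent to $\thirddat$); this is a genuine detail that must be handled, e.g.\ by first passing (via data independence) to a differentiated execution where \emph{all} write values are globally distinct.

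There is, however, a real gap in your direction (i). You detect a bad pattern in $\exec\in\obscc$ and then invoke \lem{lem:causalreg} to conclude $\exec$ is not \wcct. But \lem{lem:causalreg} is stated only for \emph{differentiated} histories, and $\exec$ is typically not differentiated: the $\clink$ sub-runs can produce many writes with value $\thirddat$ (or $\fourthdat$), possibly on the same variable, so $\rf$ and hence $\propco$ are not even well-defined on $\exec$, and ``$\exec$ contains $\bpbpco$'' is not a well-formed assertion. The paper's proof repairs this by invoking the \emph{first} clause of data independence: take $\getdiff{\exec}\in\lib$ differentiated with $\exec=\applyrenaming{\getdiff{\exec}}{\renaming}$, and argue that the pattern witnessed by $\obscc$ in $\exec$ lifts to an actual bad pattern in $\getdiff{\exec}$. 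The lifting is where the work is. For instance, to get $\wop_1\ltrf\rop_1$ in $\getdiff{\exec}$ one uses that $\wop_1$ is the \emph{unique} write with value $\firstdat$ in $\exec$ (the self-loops only allow value $\fifthdat$), so whatever write in $\getdiff{\exec}$ supplies $\rop_1$'s value on variable $\var$ must be $\wop_1$; an analogous argument is needed for each $\rf$-edge on the $\clink$-tracked chains. You should restructure direction (i) to go through $\getdiff{\exec}$ in this way; your phrase ``bad patterns are preserved under renaming'' points in the right direction but does not by itself justify applying \lem{lem:causalreg} to $\exec$.
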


This result allows to reuse any tool or
technique that can solve reachability 
(in the system composed of the \implementation{} and the \observer{} 
$\obscc$)
for the verification of \wcc{} (with respect to the \kvs).

 \section{Decidability under Data Independence}
\label{sec:decidability}

In this section, we exploit this reduction to obtain decidability 
for the verification of \wcc{} with respect to the \kvs{}.

We consider a class of implementations $\theclass$ for which reachability
is decidable, making \wcc{} decidable ($\EXPSPACE$-complete).
We consider implementations $\lib$ which are distributed over a finite 
number of \sites{}. The \sites{} run asynchronously, and communicate 
by sending messages using peer-to-peer communication channels.

Moreover, we assume that the number of variables from the 
$\kvs$ that the implementation $\lib$ seeks to implement is fixed and 
finite as well. However, we do not bound the domain of values that 
the variables can store.

Each \site{} is a finite-state machine with registers
that can store values in $\Nats$ for the contents of the variables
in the \kvs{}.
Registers can be assigned using instructions of the form
$x \Coloneqq y$ and $x \Coloneqq \dat$ where $x,y$ are registers, 
and $\dat \in \Nats$ is a value (a constant, or a value provided
as an argument of a method).
Values can also be sent through the network to other 
\sites{}, and returned by a method. 
We make no assumption on the network: the
peer-to-peer channels  are unbounded and unordered.

Any implementation in $\theclass$ is thus necessarily \dataindependent{}
by construction, as the contents of the registers storing the values
which are written are never used in conditionals.

The \observer{} $\obscc$ we constructed only needs $5$ 
values to detect all \wcc{} violations. For this reason,
when modeling an \implementation{} in $\theclass$, 
there is no need to model the whole range of natural numbers $\Nats$, 
but only $5$ values.
With this in mind, any implementation in $\theclass$ can be modelled by a 
Vector Addition System with States (VASS)~\citep{karp1969parallel,hopcroft1979reachability}, or a 
Petri Net~\citep{petri1962kommunikation,petrinets}. 
The local state of each \site{} is encoded in the 
state of the VASS, and the content of the peer-to-peer channels is
encoded in the counters of the VASS. Each counter counts how many messages
there are of a particular kind in a particular peer-to-peer channel.
There exist similar encodings in the literature~\citep{DBLP:conf/popl/BouajjaniEH14}.

Then, since the number of \sites{} and the number of variables is 
bounded, we can get rid of the registers in the \registerautomaton{}
of the observer $\obscc$, and obtain a (normal) finite automaton.
We then need to solve control-state reachability in the system composed of the
VASS and the \observer{} $\obscc$ to solve \wcc{} (according 
to \thm{th:obskvseq}).
Since VASS are closed under composition with finite automata, 
and control-state reachability is $\EXPSPACE$-complete for VASS, we get the 
$\EXPSPACE$ upper bound for the verification of \wcc{}
(for the \kvs{}).

The $\EXPSPACE$ lower bound follows from: 
(1) State reachability in class $\theclass$ is 
$\EXPSPACE$-complete, 
equivalent to control-state reachability in 
VASS~\citep{DBLP:conf/fsttcs/AtigBT08}. Intuitively, 
a VASS can be modelled by an implementation in $\theclass$, 
by using the unbounded unordered channels to simulate the counters
of the VASS. (Similar to the reduction from $\theclass$ to VASS outlined
above, but in the opposite direction.)
(2) Checking reachability can be reduced to verifying \wcc{}.
Given an \library{} $\lib$ in $\theclass$, and a state $q$,
knowing whether $q$ is reachable can be reduced to 
checking whether a new \library{} $\lib'$ is not causally consistent.
$\lib'$ is an \library{} which simulates $\lib$, and 
produces only causally consistent executions; if it reaches state 
$q$, it artificially produces a non-causally consistent execution,
for instance by returning wrong values to read requests.

\begin{restatable}{theorem}{decidability}
\label{th:decidability}
Let $\lib$ be a \dataindependent{} \library{} in $\theclass$.
Verifying \wcc{} of $\lib$ with respect to the \kvs{} is 
$\EXPSPACE$-complete (in the size of the VASS of $\lib$).
\end{restatable}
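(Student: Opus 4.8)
The plan is to establish both an upper and a lower bound, each by reduction to control-state reachability in Vector Addition Systems with States (VASS), for which the problem is known to be EXPSPACE-complete~\citep{DBLP:conf/fsttcs/AtigBT08}. For the upper bound, I would proceed in three steps. First, invoke \thm{th:obskvseq}: since $\lib$ is data independent, $\lib$ is \wcct{} w.r.t. $\speckvs$ iff $\lib \cap \obscc = \emptyset$, reducing the verification problem to a (non-)reachability question in the product of $\lib$ with the observer $\obscc$. Second, I would argue that $\obscc$ together with its auxiliary register automaton $\clink$ can be turned into an ordinary finite automaton: by the data-independence argument baked into the construction, $\obscc$ only ever uses the $5$ fixed data values $\firstdat,\dots,\fifthdat$, and since the number of \sites{} and the number of \kvs{} variables in $\lib$ are both finite and fixed, each register of $\obscc$ and $\clink$ ranges over a finite domain; thus the registers can be expanded into control state, yielding a plain finite automaton. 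Third, I would model $\lib$ itself as a VASS: each \site{} is a finite-state machine with data registers that — crucially — are never tested in conditionals, so over the bounded data domain $\{1,\dots,5\}$ the \site{}-local state is finite; the unbounded unordered peer-to-peer channels are encoded by counters, one per (channel, message-kind) pair, exactly as in~\citep{DBLP:conf/popl/BouajjaniEH14}. Composing this VASS with the finite-automaton form of $\obscc$ stays within the VASS model (VASS are closed under synchronous product with finite automata), so $\lib \cap \obscc \neq \emptyset$ becomes a control-state reachability question in a VASS, giving the EXPSPACE upper bound.

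For the lower bound, I would reduce control-state reachability in VASS to the \emph{complement} of \wcc{}-verification, in two moves. First, observe that any VASS can be simulated by an implementation in $\theclass$: the unbounded unordered channels of $\theclass$ directly simulate the counters (incrementing a counter = sending a token message, decrementing = receiving one, zero-tests are not needed since VASS have none), so reaching a designated control-state of the VASS corresponds to reaching a designated configuration of an implementation $\lib \in \theclass$. Second, wrap $\lib$ into an implementation $\lib'$ that by default only emits causally consistent executions (e.g. it behaves like a trivial correct \kvs{} at the API level regardless of the simulated VASS run), but when the simulated VASS reaches the target state, $\lib'$ deliberately emits a non-\wcct{} execution — for instance by having some \site{} return a read value with no corresponding write (bad pattern $\bpbprf$) or by producing the $\bpbpco$ pattern on purpose. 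Then $\lib'$ is \emph{not} \wcct{} iff the target state is reachable, so \wcc{}-verification is EXPSPACE-hard.

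\textbf{The main obstacle} I anticipate is the lower-bound construction of $\lib'$: one must ensure that \emph{before} the target state is hit, $\lib'$ genuinely produces only causally consistent executions despite internally running an arbitrary VASS simulation whose message traffic could otherwise be misconstrued as \kvs{} operations — so the VASS-simulation messages must be kept disjoint from (or invisible to) the API-level \kvs{} behavior, and the API-level behavior must be a manifestly \wcct{} schedule (the easiest choice being: writes are unique and each read returns the value of a causally-preceding write on the same variable, so none of the four bad patterns of \lem{lem:causalreg} occurs). A secondary technical point requiring care is the bounded-domain reduction underpinning the upper bound: one must confirm that replacing the unbounded value domain $\Nats$ by $\{1,\dots,5\}$ preserves the existence (and non-existence) of \wcc{} violations, which follows from data independence together with the fact that $\obscc$ was designed to detect every violation using only $5$ values — this is exactly the content already supplied by \thm{th:obskvseq} and the data-independence machinery of \sect{sec:dataind}, so no new argument is needed beyond citing it. Putting the matching bounds together yields EXPSPACE-completeness, measured in the size of the VASS encoding of $\lib$.
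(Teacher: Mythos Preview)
Your proposal is correct and follows essentially the same route as the paper: the upper bound via \thm{th:obskvseq}, finitization of the observer (5 data values, bounded sites and variables), VASS modeling of $\lib$ with counters for the unordered channels, and closure of VASS under product with finite automata; the lower bound by simulating a VASS inside an implementation of $\theclass$ and wrapping it so that only upon reaching the target state is a non-\wcct{} execution emitted. The obstacles you flag (keeping the pre-target behavior genuinely \wcct{}, and justifying the bounded-domain reduction) are precisely the points the paper addresses, and your suggested resolutions match the paper's.
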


\section{Related Work}
\label{sec:alt}

\citet{DBLP:journals/corr/abs-1302-5161} studied the complexity of verifying 
PRAM consistency (also called FIFO consistency) for one history. They proved
that the problem is $\NP$-complete.
For \differentiated{} \execution{s}, 
they provided a polynomial-time algorithm.

Independently, \citet{roland} showed that checking causal consistency 
($\scc$ definition) of one history is an $\NP$-complete problem.
They proved that checking consistency for one history for 
any criterion stronger than {\tt SLOW} 
consistency and weaker than sequential consistency is $\NP$-complete, 
where {\tt SLOW} consistency ensures that for each 
variable $\var$, and for each \site{} $\tid$, the reads of $\tid$ on 
variable $\var$ can be
explained by ordering all the writes to $\var$ while respecting the
program order.
This range covers $\scc$, 
but does not cover $\wcc$~(see \fig{fig:wccnotccvnotscc} for a history 
which is \wcct{} but not {\tt SLOW}). It is not clear whether 
this range covers $\ccv$.
To prove $\NP$-hardness, they used a reduction from the \npcomplete{} $\sat$ 
problem.
We show that their encoding can be reused to show $\NP$-hardness for 
checking whether a history is $\wcct$ (\resp $\ccvt$) with respect to the 
\kvs{} specification.

Concerning verification, 
we are not aware of any work studying the decidability or complexity 
of checking whether all executions of an \implementation{} are causally 
consistent.
There have been works on studying the problem for other
criteria such as linearizability~\citep{journals/toplas/HerlihyW90} 
or eventual consistency~\citep{DBLP:conf/sosp/TerryTPDSH95}.
In particular, it was shown that checking linearizability is an 
$\EXPSPACE$-complete problem when the number of \sites{} is 
bounded~\citep{journals/iandc/AlurMP00,netys-lin}.
Eventual consistency has been shown to be decidable~\citep{DBLP:conf/popl/BouajjaniEH14}.
Sequential consistency was shown to be 
undecidable~\citep{journals/iandc/AlurMP00}.

The approach we adopted to obtain decidability of causal consistency by
defining bad patterns for particular specifications has been used recently in the context of
linearizability~\citep{conf/concur/HenzingerSV13,conf/tacas/AbdullaHHJR13,linbadpatterns}. However, the bad patterns for linearizability do not 
transfer to causal consistency. Even from a technical point of view,  the results introduced for linearizability cannot be used in our case.
One reason for this is that, in causal consistency, there is the additional
difficulty that the causal order is existentially quantified, while the 
happens-before relation in linearizability is fixed (by a global clock).

\citet{lesani2016chapar} investigate mechanized proofs of causal consistency 
using the theorem prover Coq. This approach does not lead to full automation 
however, e.g., by reduction to assertion checking.

 \section{Conclusion}

We have shown that verifying causal consistency is hard, even undecidable, in general:
verifying whether one single execution satisfies causal consistency is NP-hard, and verifying if all the executions of an implementation are causally consistent is undecidable. These results are not due to the complexity of the implementations nor of the specifications: they hold even for finite-state implementations and specifications. They hold already when the specification corresponds to the simple read-write memory
abstraction. The undecidability result contrasts with known decidability results for other correctness criteria such as linearizability \cite{journals/iandc/AlurMP00} and eventual consistency \cite{DBLP:conf/popl/BouajjaniEH14}. 

Fortunately, for the read-write memory abstraction, an important and widely used abstraction in the setting of distributed systems, we show that, when  implementations are data-independent, which is the case in practice, the verification problems we consider become tractable. This is based on the very fact that data independence allows to restrict our attention to differentiated executions, where the written values are unique, which allows 
to deterministically establish the read-from relation along executions. 
This is crucial for characterizing by means of a finite number of bad patterns the set of all violations to causal consistency, which is the key to our complexity and decidability results for the read-write memory.

First, using this characterization we show that the problem of verifying the correctness of a single execution is polynomial-time in this case. This is important for building efficient and scalable testing and bug detection algorithms. Moreover, we provide an algorithmic approach for verifying causal consistency (w.r.t. the read-write memory abstraction) based on an effective reduction of this problem to a state reachability problem (or invariant checking problem) in the class of programs used for the implementation. Regardless from the decidability issue, this reduction holds for an unbounded number of sites (in the implementation), and an unbounded number of variables (in the read-write memory). In fact, it establishes a fundamental link between these two problems and allows to use all existing (and future) program verification methods and tools for the verification of causal consistency. In addition, when the number of sites is bounded, this reduction provides a decidability result for verifying causal consistency concerning a significant class of implementations: finite-control machines (one per site) with data registers (over an unrestricted data domain, with only assignment operations and equality testing), communicating through unbounded unordered channels. As far as we know, this is the first work that establishes complexity and (un)decidability results for the verification of causal consistency. 

All our results hold for the three existing variants of causal consistency \wcc, \scc, and \ccv, except for the reduction to state reachability and the derived decidability result that we give in this paper for \wcc\, only. For the other two criteria, building observers detecting their corresponding bad patterns is not trivial in general, when there is no assumption on the number of sites and the number of variables (in the read-write memory). We still do not know if this can be done using the same class of state-machines we use in this paper for the observers. However, this can be done if these two parameters are bounded. In this case, we obtain a decidability result that holds for the same class of implementations as for \wcc, but this time for a fixed number of variables in the read-write memory. This is still interesting since when data independence is not assumed, verifying causal consistency is undecidable for the read-write memory even when the number of sites is fixed, the number of variables is fixed, and the data domain is finite. We omit these results in this paper. 

Finally, let us mention that in this paper we have considered correctness criteria that correspond basically to safety requirements. Except for \ccv, convergence, meaning eventual agreement between the sites on their execution orders of non-causally dependent operations is not guaranteed. In fact, these criteria can be strengthened with a liveness part requiring the convergence property. Then, it is possible to extend our approach to handle the new criteria following the approach adopted in \citep{DBLP:conf/popl/BouajjaniEH14} for eventual consistency. Verifying correctness in this case can be reduced to a repeated reachability problem, and model-checking algorithms can be used to solve it. 

For future work, it would be very interesting to identify a class of specifications for which our approach is systematically applicable, i.e., for which there is a procedure producing the complete set of bad patterns and their corresponding observers in a decidable class of state machines. 

\section*{Acknowledgments}

This work is supported in part by the European Research Council (ERC) under the European
Union’s Horizon 2020 research and innovation programme (grant agreement No 678177), and by an EPFL-Inria postdoctoral grant. 
\newpage

\bibliographystyle{abbrvnat}
\bibliography{references}

\iftoggle{long}{
\newpage
\appendix

\newpage
\section{Differentiated Histories}

We detail here the results of \sect{sec:complete}.
We identify here conditions under which is it enough 
to check the causal consistency of only a subset $\hists' \subseteq \hists$
of histories, while ensuring that all histories $\hists$ are causally
consistent.

We then use this notion to prove that it is enough to check 
causal consistency with respect to the \kvs{} for histories which use 
distinct $\writemeth$ values.

\subsection{Reduction}

Let $\rel$ be a relation over labeled posets.
A subset $\getcomplete{\hists} \subseteq \hists$ is said to be 
\emph{complete} for $\hists$ if:
for all $\hist \in \hists$, 
there exists $\getcomplete{\hist} \in \getcomplete{\hists}$,
$\getcomplete{\hist} \ltr \hist$.

When checking whether a set of histories $\hists_1$ is a subset of
a set $\hists_2$ which is upward-closed with respect to $\rel$, 
it is sufficient to check the inclusion for a complete 
subset $\getcomplete{\hists_1} \subseteq \hists_1$.

\begin{lemma}[Complete Sets of Histories]
\label{lem:completeness}
Let $\rel$ be a relation, and 
let $\getcomplete{\hists_1} \subseteq \hists_1$ a complete set of 
labeled posets using relation $\rel$. Let $\hists_2$ be a set of 
histories which is upward-closed with respect to $\rel$,
we have $\hists_1 \subseteq \hists_2$ if and only if 
$\getcomplete{\hists_1} \subseteq \hists_2$.
\end{lemma}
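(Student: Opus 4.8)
The statement is a pure set-theoretic unfolding of the two definitions given just above it (``complete subset'' and ``upward-closed''), so I would prove it directly, by a short two-direction argument, with the only subtlety being to get the quantifiers in the right order. I expect no real obstacle here; the ``hard part'' is merely bookkeeping, namely keeping track of which object lives in which set.

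\textbf{Forward direction.} Assume $\hists_1 \subseteq \hists_2$. Since $\getcomplete{\hists_1} \subseteq \hists_1$ by hypothesis, transitivity of inclusion immediately gives $\getcomplete{\hists_1} \subseteq \hists_1 \subseteq \hists_2$. Nothing else is needed; in particular this direction uses neither completeness nor upward-closedness.

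\textbf{Backward direction.} Assume $\getcomplete{\hists_1} \subseteq \hists_2$, and pick an arbitrary $\hist \in \hists_1$; the goal is $\hist \in \hists_2$. By completeness of $\getcomplete{\hists_1}$ for $\hists_1$ (with respect to $\rel$), there exists $\getcomplete{\hist} \in \getcomplete{\hists_1}$ with $\getcomplete{\hist} \ltr \hist$. By the assumption $\getcomplete{\hists_1} \subseteq \hists_2$ we have $\getcomplete{\hist} \in \hists_2$. Now $\hists_2$ is upward-closed with respect to $\rel$: since $\getcomplete{\hist} \in \hists_2$ and $\getcomplete{\hist} \ltr \hist$, we conclude $\hist \in \hists_2$. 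As $\hist \in \hists_1$ was arbitrary, $\hists_1 \subseteq \hists_2$, which completes the proof.

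\textbf{Remark on dependencies.} This lemma is the abstract engine behind \lem{lem:diffhists}: instantiating $\rel$ with the data-renaming order $\smallerFData$ (so that $\getdiff{\hists}$ plays the role of a complete subset of $\hists$) and $\hists_2$ with the set of histories that are causally consistent w.r.t. $\speckvs$ — which is upward-closed in that order by data independence of the specification — will reduce checking consistency of $\hists$ to checking it of $\getdiff{\hists}$. I would flag this connection but not prove it here, since it belongs with \lem{lem:diffhists}.
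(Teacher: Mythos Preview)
Your proof is correct and matches the paper's own argument essentially line for line: the forward direction uses only $\getcomplete{\hists_1}\subseteq\hists_1$, and the backward direction invokes completeness to find a witness below $\hist$ and then upward-closedness of $\hists_2$ to pull $\hist$ in. Your remark on dependencies also correctly anticipates how the paper instantiates this lemma (via \coro{coro:complete} and \lem{lem:datainvariant}) to obtain \lem{lem:diffhists}.
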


\begin{proof}
$(\Rightarrow)$ Holds because $\getcomplete{\hists_1}$ is
a subset of $\hists_1$. 

$(\Leftarrow)$
Assume $\getcomplete{\hists_1} \subseteq \hists_2$ and let 
$\hist_1 \in \hists_1$. Since $\getcomplete{\hists_1}$ is complete for 
$\hists_1$, we know there exists 
$\getcomplete{\hist_1} \in \getcomplete{\hists_1}$ such that 
$\getcomplete{\hist_1} \ltr \hist_1$.
By assumption, $\getcomplete{\hist_1} \in \hists_2$. Finally, 
since $\hists_2$ is upward-closed, $\hist_1 \in \hists_2$.

\end{proof}

Given a function $\renaming: \Domain \rightarrow \Domain$ and a 
tuple $\act \in \Meth \times \Domain$ or $\act \in \Meth \times \Domain 
\times \Domain$, we denote by 
$\applyrenaming{\act}{\renaming}$ the tuple where each occurrence of 
$\dat \in \Domain$ has been replaced by $\renaming(\dat)$.
We lift the notation to $\Meth \times \Domain$ and 
$\Meth \times \Domain \times \Domain$ labeled posets by changing the labels of 
the elements.
We lift the notation to sets of labeled posets in a point-wise manner.

Let $\renamings \subseteq \Domain \rightarrow \Domain$ be a set of 
functions. 

\begin{definition}
$\spec$ is \emph{$\renamings$-invariant} if
for all $\renaming \in \renamings$, 
$\applyrenaming{\spec}{\renaming} \subseteq \spec$.
\end{definition}

We define a relation $\smallerF$ as follows:
$\hist_1 \smallerF \hist_2 \iff \exists \renaming \in \renamings.\ 
\hist_2 = \applyrenaming{\hist_1}{\renaming}$.
Let $\spec$ be a specification.
We denote by $\wccfor{\spec}$ (\resp $\sccfor{\spec}$, $\ccvfor{\spec}$)
the set of \histories{} which are \wcct{} (\resp \scct, \ccvt) 
with respect to $\spec$. We show that for any specification which is 
$\renamings$-invariant, the set $\wccfor{\spec}$ 
(\resp  $\sccfor{\spec}$, $\ccvfor{\spec}$) is upward-closed
with respect to the relation $\smallerF$.

\begin{lemma}
\label{lem:causalup}
Let $\renamings \subseteq \Domain \rightarrow \Domain$ be a set of 
functions.
Let $\spec$ be a specification which is $\renamings$-invariant.
The set $\wccfor{\spec}$ (\resp  $\sccfor{\spec}$, $\ccvfor{\spec}$)
is upward-closed with respect to $\smallerF$.
\end{lemma}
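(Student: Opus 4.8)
The plan is to prove the slightly more concrete statement that each of $\wccfor{\spec}$, $\sccfor{\spec}$, and $\ccvfor{\spec}$ is closed under applying an arbitrary $\renaming \in \renamings$ to a history; unfolding the definition of $\smallerF$, this is exactly the claimed upward-closure. So I would fix a history $\hist_1 = (\ops,\po,\getlabel)$, a renaming $\renaming \in \renamings$, and set $\hist_2 = \applyrenaming{\hist_1}{\renaming}$, noting that $\hist_2$ has the \emph{same} underlying poset $(\ops,\po)$ as $\hist_1$, only the labels being modified by $\renaming$.

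First I would record two elementary commutation facts. (i) The return-value--hiding operator commutes with renaming: for every labeled poset $\rho$ over $\ops$ and every $\ops' \subseteq \ops$, $\applyrenaming{\left(\projectrv{\rho}{\ops'}\right)}{\renaming} = \projectrv{\left(\applyrenaming{\rho}{\renaming}\right)}{\ops'}$, since hiding a return value only drops the last component of a label and never inspects its value. (ii) The relation $\weaker$ is preserved under renaming: if $\rho' \weaker \rho$ (both over $\ops$), then $\applyrenaming{\rho'}{\renaming} \weaker \applyrenaming{\rho}{\renaming}$, because the order inclusion in the definition of $\weaker$ is untouched by relabeling, and each of the two label alternatives (labels equal, or a return value hidden on the left) remains valid after applying $\renaming$ to both sides.

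Then I would reuse the causal order of $\hist_1$. Assume $\hist_1 \in \wccfor{\spec}$: there is a strict partial order $\co$ on $\ops$ such that for each $\op \in \ops$ there is $\loc \in \spec$ with $\axpoco$ and $\axwcc$. Take the same $\co$ for $\hist_2$. Axiom $\axpoco$ (that is, $\po \subseteq \co$) is immediate since $\hist_2$ has the same $\po$. Since $\co$ and the set of operations are unchanged, $\causaldep{\op}$ is the same set when computed in $\hist_2$, and the labeled poset $\causalpast{\op}$ taken in $\hist_2$ is exactly $\applyrenaming{\cdot}{\renaming}$ applied to the one taken in $\hist_1$. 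For each $\op$ pick $\applyrenaming{\loc}{\renaming}$ as the specification sequence; it lies in $\spec$ because $\spec$ is $\renamings$-invariant. Combining facts (i) and (ii), the left-hand side of $\axwcc$ for $\hist_2$, namely $\projectrv{\causalpast{\op}}{\op}$ computed in $\hist_2$, equals $\applyrenaming{\left(\projectrv{\causalpast{\op}}{\op}\right)}{\renaming}$ (the inner $\projectrv{}{}$ being taken in $\hist_1$), and this is $\weaker \applyrenaming{\loc}{\renaming}$; hence $\axwcc$ holds for $\hist_2$, so $\hist_2 \in \wccfor{\spec}$.

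Finally, the cases $\sccfor{\spec}$ and $\ccvfor{\spec}$ go through verbatim: one keeps the same $\co$ (and, for $\ccv$, the same arbitration order $\theirarb$, which is a relation on $\ops$ and hence untouched by $\renaming$, so $\axcoarb$ is preserved), and replaces $\axwcc$ by $\axscc$ (noting that $\poback{\op}$ depends only on $\po$) or by $\axccv$ (applying (i) and (ii) now also to $\causalarb{\op}$). I expect no genuine obstacle here; the only care needed is the bookkeeping of the two commutation facts for $\projectrv{\cdot}{\cdot}$ and $\weaker$ under $\renaming$, which is exactly what lets the witness for $\hist_1$ be transported to $\hist_2$.
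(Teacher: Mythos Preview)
Your proposal is correct and follows essentially the same approach as the paper: keep the same causal order (and arbitration order for $\ccv$), transport each witnessing sequence $\loc$ to $\applyrenaming{\loc}{\renaming}$, and use $\renamings$-invariance of $\spec$. The only difference is that you are more explicit than the paper about the two commutation facts for $\projectrv{\cdot}{\cdot}$ and $\weaker$ under renaming, which the paper's proof leaves implicit.
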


\begin{proof}
We show the proof for $\wccfor{\spec}$, but the proof can be directly
adapted to the sets $\sccfor{\spec}$ and $\ccvfor{\spec}$.
Let $\hist = (\ops,<,\getlabel) \in \wccfor{\spec}$ and 
$\hist'$ such that $\hist \smallerF \hist'$.
We know there exists $\renaming \in \renamings$ such that 
$\hist' = \applyrenaming{\hist}{\renaming}$. As a result, 
$\hist$ and $\hist'$ have the same underlying poset $(\ops,<)$. 
Let $\hist' = (\ops,<,\getlabel')$.

Let $\op \in \ops$.
By axiom \axwcc{}, we know there is $\loc \in \spec$ such that
$\projectrv{\causalpast{\op}}{\op} \weaker \loc$,
where $\causalpast{\op} = (\causaldep{\op},\co,\getlabel)$.

We have 
$(\causaldep{\op},\co,\getlabel') = \applyrenaming{\causalpast{\op}}{\renaming}$.
Therefore, by defining $\loc' = \applyrenaming{\loc}{\renaming}$, 
we have $(\causaldep{\op},\co,\getlabel') \weaker \loc'$.
Since $\spec$ is $\renamings$-invariant, 
$\loc' \in \spec$,
and axiom $\axwcc$ holds for history $\hist'$.

We conclude that $\hist'$ is in $\wccfor{\spec}$.

\end{proof}

Lemmas~\ref{lem:completeness} and \ref{lem:causalup} combined show
it is enough to check $\wcc$ (\resp $\scc$, $\ccv$) for a complete subset of 
histories, to obtain $\wcc$ (\resp $\scc$, $\ccv$) for all histories.

\begin{corollary}
\label{coro:complete}
Let $\renamings \subseteq \Domain \rightarrow \Domain$ be a set of 
functions. Let $\hists$ be a set of histories, and $\hists' \subseteq \hists$
a complete set of histories, using relation $\smallerF$.
Let $\spec$ be specification which is $\renamings$-invariant. Then,
$\hists$ is \wcct{} (\resp \scct{}, \ccvt{}) with respect to $\spec$ 
if and only if 
$\hists'$ is \wcct{} (\resp \scct{}, \ccvt{}) with respect to $\spec$.
\end{corollary}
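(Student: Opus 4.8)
The plan is to obtain the corollary by a direct instantiation of Lemma~\ref{lem:completeness} with the relation $\rel := \smallerF$, the sets $\hists_1 := \hists$ and $\getcomplete{\hists_1} := \hists'$, and the target set $\hists_2 := \wccfor{\spec}$ (and, in the two remaining cases, $\sccfor{\spec}$ and $\ccvfor{\spec}$). First I would unfold the definitions: by definition of $\wccfor{\spec}$, the statement ``$\hists$ is \wcct{} with respect to $\spec$'' is exactly the inclusion $\hists \subseteq \wccfor{\spec}$, and likewise ``$\hists'$ is \wcct{} with respect to $\spec$'' is $\hists' \subseteq \wccfor{\spec}$. So the claim to be proved is the equivalence $\hists \subseteq \wccfor{\spec} \iff \hists' \subseteq \wccfor{\spec}$.

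Next I would check that the two hypotheses of Lemma~\ref{lem:completeness} hold for this instantiation. Completeness of $\hists'$ for $\hists$ with respect to $\smallerF$ is part of the corollary's hypotheses, so nothing is to be done there. Upward-closedness of $\wccfor{\spec}$ with respect to $\smallerF$ is precisely the content of Lemma~\ref{lem:causalup}, which applies because $\spec$ is assumed $\renamings$-invariant. Lemma~\ref{lem:completeness} then yields the desired equivalence. The cases \scct{} and \ccvt{} are handled verbatim the same way: Lemma~\ref{lem:causalup} already asserts that $\sccfor{\spec}$ and $\ccvfor{\spec}$ are upward-closed under $\smallerF$ under the same $\renamings$-invariance hypothesis, so re-applying Lemma~\ref{lem:completeness} with $\hists_2 := \sccfor{\spec}$, \resp $\hists_2 := \ccvfor{\spec}$, closes them.

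I do not expect any real obstacle: the corollary is a purely combinatorial consequence of the two preceding lemmas. The only point requiring care is keeping the directions straight --- $\hist_1 \smallerF \hist_2$ means $\hist_2$ is a renaming of $\hist_1$, completeness gives, for each $\hist \in \hists$, a $\smallerF$-predecessor in $\hists'$, and upward-closedness of $\wccfor{\spec}$ propagates membership from that predecessor up to $\hist$; these orientations are exactly what Lemma~\ref{lem:completeness} consumes, so the pieces fit together with no extra argument.
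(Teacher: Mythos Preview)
Your proposal is correct and follows exactly the paper's approach: instantiate Lemma~\ref{lem:completeness} with $\rel = \smallerF$ and $\hists_2 = \wccfor{\spec}$ (resp.\ $\sccfor{\spec}$, $\ccvfor{\spec}$), using Lemma~\ref{lem:causalup} to supply the required upward-closedness. The paper's proof is in fact terser than yours, but the content is identical.
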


\begin{proof} Using Lemmas~\ref{lem:completeness} and \ref{lem:causalup},
we know 
$\hists' \subseteq \wccfor{\spec}$ if and only if
$\hists \subseteq \wccfor{\spec}$. The same applies for the sets 
$\sccfor{\spec}$ and $\ccvfor{\spec}$.
\end{proof}

\subsection{Data Independence}

\label{apd:dataind}

We show how to apply the notion of completeness for the 
verification of the \kvs{}.
Most implementations used in practice are 
\emph{data independent}~\citep{conf/tacas/AbdullaHHJR13}, \ie their 
behaviors do not depend on the particular data values which are stored at 
a particular variable. Under this assumption, we show it is enough to verify 
causal consistency for histories which do not write twice the same
value on the same variable, and which never writes the initial 
value $0$, called 
\emph{\differentiated} histories.

Formally, a \history{} $(\ops,\po,\getlabel)$ is said to 
be \emph{\differentiated} if:
\begin{itemize}
\item for all $\op_1 \neq \op_2$, $\var \in \Var$,
$\getlabel(\op_1) = \mkwrite{\var}{\dat_1}$ and 
$\getlabel(\op_2) = \mkwrite{\var}{\dat_2}$ and
implies that $\dat_1 \neq \dat_2$, and
\item for all $\var \in \Var$,
$\hist$ does not contain $\mkwrite{\var}{0}$ operation.
\end{itemize}
Let $\hists$ be a set of labeled posets.
We denote by $\getdiff{\hists}$ the subset of 
\differentiated{} \histories{} of $\hists$.
 
A \emph{\datarenaming} is a function from $\Domain$ to $\Domain$ which 
modifies the data values of operations. More precisely, we can
build a \datarenaming{} $\renaming$ from any function 
$\renaming_0: \Nats \rightarrow \Nats$ in the following way.

Remember that for the \kvs{}, $\Domain$ is the set 
$(\Var \times \Nats) \uplus \Var \uplus \Nats \uplus \set{\unit}$, and 
we define:
\begin{itemize}
\item 
  $\renaming(\var,\argv) = (\var,\renaming_0(\argv))$ for 
  $\var \in \Var, \argv  \in \Nats$, 
\item
  $\renaming(\var) = \var$ for $\var \in \Var$,
\item $\renaming(\rv) = \renaming_0(\rv)$ for $\rv \in \Nats$,
\item
  $\renaming(\unit) = \unit$.
\end{itemize}

Let $\datarens$ be the set of all \datarenaming{s}. 

\begin{definition}
A set of histories is \emph{\dataindependent} if 
the subset $\getdiff{\hists} \subseteq \hists$ is complete using 
relation $\smallerFData$, and $\hists$ is $\datarens$-invariant.
\end{definition}

\begin{remark}
This definition corresponds to the other definition of data independence
we have in \sect{sec:complete}.
\end{remark}

Using the following lemma and then applying 
\coro{coro:complete}, we obtain that it is
enough to check causal consistency for histories which are \differentiated{}
(to ensure the causal consistency of all histories).

\begin{lemma}
\label{lem:datainvariant}
The $\speckvs$ specification is $\datarens$-invariant.
\end{lemma}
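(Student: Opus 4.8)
The plan is a rule induction on the inductive definition of $\speckvs$. Fix an arbitrary data-renaming $\renaming \in \datarens$, built from some $\renaming_0 : \Nats \to \Nats$; I would show, by induction on the derivation witnessing $\seqposet \in \speckvs$, that $\applyrenaming{\seqposet}{\renaming} \in \speckvs$. Since this holds for every $\seqposet \in \speckvs$, we get $\applyrenaming{\speckvs}{\renaming} \subseteq \speckvs$, and since $\renaming$ was arbitrary this is exactly $\datarens$-invariance of $\speckvs$. Before running the induction I would record the single structural observation that powers all four cases: a data-renaming alters only data values — it fixes every variable name, preserves the order and identity of operations, never turns a read into a write (or conversely), and fixes the initial value $0$. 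In particular, for any $\seqposet' \in \speckvs$, $\applyrenaming{\seqposet'}{\renaming}$ contains a write on $\var$ iff $\seqposet'$ does, and if the last write on $\var$ in $\seqposet'$ is $\mkwrite{\var}{\val}$ then the last write on $\var$ in $\applyrenaming{\seqposet'}{\renaming}$ is $\mkwrite{\var}{\renaming_0(\val)}$.

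The induction then has one routine case per defining rule. Rule (1): $\applyrenaming{\emptyseq}{\renaming} = \emptyseq \in \speckvs$. Rule (2): $\applyrenaming{(\seqposet \cc \mkwrite{\var}{\val})}{\renaming} = \applyrenaming{\seqposet}{\renaming} \cc \mkwrite{\var}{\renaming_0(\val)}$, which lies in $\speckvs$ since $\applyrenaming{\seqposet}{\renaming} \in \speckvs$ by the induction hypothesis and $\renaming_0(\val) \in \Nats$. Rule (4): the side condition says the last write on $\var$ in $\seqposet$ is $\mkwrite{\var}{\val}$, so by the structural observation the last write on $\var$ in $\applyrenaming{\seqposet}{\renaming}$ is $\mkwrite{\var}{\renaming_0(\val)}$, matching the renamed read $\mkread{\var}{\renaming_0(\val)}$; applying rule (4) to $\applyrenaming{\seqposet}{\renaming} \in \speckvs$ gives $\applyrenaming{(\seqposet \cc \mkread{\var}{\val})}{\renaming} \in \speckvs$.

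The only case warranting a comment is rule (3), which appends $\mkread{\var}{0}$ under the condition that $\seqposet$ has no write on $\var$: here I would use that $\renaming$ leaves variable names untouched, so $\applyrenaming{\seqposet}{\renaming}$ still has no write on $\var$, and that $\renaming$ fixes $0$, so the renamed read is again $\mkread{\var}{0}$; rule (3) then re-applies to $\applyrenaming{\seqposet}{\renaming}\in\speckvs$. This stability of the initial value under data-renaming is the one spot where the exact definition of $\datarens$ is used — were $0$ allowed to be renamed to a non-initial value, the claim would fail already for the one-element sequence $\mkread{\var}{0}$. So I anticipate no genuine obstacle; the only care needed is to isolate the structural preservation fact cleanly enough that each of the four rule cases collapses to a one-line re-application of the same rule. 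Combined with \coro{coro:complete} (and \lem{lem:diffhists}), this lemma is what licenses restricting the verification of causal consistency to differentiated histories.
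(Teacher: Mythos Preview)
Your rule-induction approach is exactly the right way to make rigorous what the paper's proof only gestures at (the paper merely says ``changing the written and read values in a valid sequence of $\speckvs$ yields a valid sequence of $\speckvs$''). The structural observation you isolate and the treatment of rules (1), (2), and (4) are all correct, and the overall strategy matches the paper's intent, just at a finer level of detail.

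There is one genuine issue in your handling of rule~(3). You assert that a data-renaming fixes $0$, but the paper's definition of $\datarens$ places no such constraint: it builds $\renaming$ from an \emph{arbitrary} $\renaming_0 : \Nats \to \Nats$, with $\renaming(\rv) = \renaming_0(\rv)$ for $\rv \in \Nats$, so in general $\renaming_0(0)$ may be nonzero. You yourself note that without this property the claim would fail already on the one-element sequence $\mkread{\var}{0}$ --- and indeed it does, under the paper's literal definition, since $\mkread{\var}{\renaming_0(0)}$ with $\renaming_0(0)\neq 0$ is not in $\speckvs$. So either the paper silently intends data-renamings to fix the initial value (which is the natural convention, and is what is actually needed downstream), or the lemma as stated is slightly too strong. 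Your proof becomes complete once $\renaming_0(0)=0$ is added to the definition of $\datarens$; the paper's own one-line argument glosses over the same point, so this is as much a gap in the paper's setup as in your proposal.
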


\begin{proof}
Let $\renaming \in \datarens$ and let $\seqposet \in \speckvs$.
We can see that $\applyrenaming{\seqposet}{\renaming} \in \speckvs$, 
as changing the written and read values in a valid sequence of
$\speckvs$ yields a valid sequence of $\speckvs$.

\end{proof}

\diffhist* 

\begin{proof}
Since $\hists$ is \dataindependent{}, 
we have that
$\getdiff{\hists} \subseteq \hists$ is complete using 
relation $\smallerFData$. The result then follows
directly from \coro{coro:complete} and \lem{lem:datainvariant}.

\end{proof}
 \newpage
\section{Undecidability of Causal Consistency For $2$ \Sites{}}
\label{sec:undec}

\shuffling*

\begin{proof}
Let $\alppcp = \set{\apcp,\bpcp}$
and $(\sigu_1,\sigv_1),\dots,(\sigu_\npcp,\sigv_\npcp) 
\in (\alppcp^* \times \alppcp^*)$ be $\npcp$ pairs forming the input of a 
\pcp{} problem $\instpcp$. Let $\alpu = \set{\au,\bu}$ and
$\alpv = \set{\av,\bv}$ be 
two disjoint copies of $\alppcp$, and let 
$\morph: (\alpu \uplus \alpv) \rightarrow \alppcp$, 
$\morphu: \alppcp \rightarrow \alpu$, 
$\morphv: \alppcp \rightarrow \alpv$, 
be the homomorphisms which map corresponding letters. 
Moreover, let $\xu$ and $\xv$ be two new 
letters. Let $\alpinu =  \set{\au,\bu,\xu}$, 
$\alpinv = \set{\av,\bv,\xv}$, and  $\alpwhole = \alpinu \uplus \alpinv$.

Our goal is to define a regular language $\autL \subseteq \alp^*$ such that
\begin{align}
&\text{The \pcp{} problem $\instpcp$ has a positive answer 
$\sigu_{i_1} \cdots \sigu_{i_\kpcp} = 
    \sigv_{i_1} \cdots \sigv_{i_\kpcp}$, $(\kpcp>0)$}
\nonumber\\&\Leftrightarrow
\label{eq:pcpreduction}
\exists \wu \in \alpinu^*, \wv \in \alpinv^*.\ 
    \shuffle{\wu}{\wv} \cap \autL = \emptyset
\end{align}

The idea is to encode in $\wu$ the sequence 
$\sigu_{i_1},\dots,\sigu_{i_\kpcp}$ as 
$\uify{\sigu_{i_1}} \cc \xu \cdots  \uify{\sigu_{i_\kpcp}} \cc \xu$ by using 
$\xu$ as a
separator (and end marker), and likewise for $\wv$ with the separator 
$\xv$. Then, we define the 
language $\autL$ by a disjunction of regular properties that no shuffling of an 
encoding of a valid \pcp{} answer to $\instpcp$ could satisfy.

Formally, $\ww \in \autL$ iff one of the following conditions holds:
  \begin{enumerate}
  \item \label{cond:mismatch}
    when ignoring the letters $\xu$ and $\xv$, $\ww$ starts with an 
    alternation of $\alpu$ and $\alpv$ such that two letters do not match\\
    $\proj{\ww}{\alpu\cup\alpv} \in (\alpu\alpv)^*(\au\bv+\bu\av)\alpwhole^*$
  \item \label{cond:mislettercount}
    when ignoring the letters $\xu$ and $\xv$, $\ww$ starts with an alternation 
    of $\alpu$ and $\alpv$ and ends with only $\alpu$ letters or only 
    $\alpv$ letters\\
    $\proj{\ww}{\alpu\cup\alpv} \in (\alpu\alpv)^*(\alpu^+ + \alpv^+)$
  \item \label{cond:mispaircount}
    when only keeping $\xu$ and $\xv$ letters, either $\ww$ 
    starts with an alternation of $\xu$ and $\xv$ and ends with only 
    $\xu$ or only $\xv$, or $\ww$ is the empty word 
    $\emptyseq$\\
    $\proj{\ww}{\set{\xu,\xv}} \in (\xu\xv)^*(\xu^+ + \xv^+) + \emptyseq$
  \item \label{cond:misformed}
    $\ww$ contains a letter from $\alpu$ not followed by $\xu$, or a letter 
    from $\alpv$ not followed by $\xv$\\
    $\ww \in  \alpwhole^*\alpu(\remlett{\alpwhole}{\xu})^* +
              \alpwhole^*\alpv(\remlett{\alpwhole}{\xv})^*$
  \item \label{cond:mispair}
    $\ww$ starts with an alternation of $\alpu^*\xu$ and $\alpv^*\xv$ such 
    that one pair of $\alpu^*,\alpv^*$ is not a pair of our \pcp{} instance\\
    $\ww \in (\alpu^*\xu\alpv^*\xv)^*
      (\remlang{\alpu^*\xu\alpv^*\xv}{
        \bigplus_i \morphu(\sigu_i) \cc \xu \cc \morphv(\sigv_i) \cc \xv
      })
      \alpwhole^*$
  \end{enumerate}

We can now show that equivalence~(\ref{eq:pcpreduction}) holds.

$(\Rightarrow)$ Let $\kpcp > 0$, $i_1,\dots,i_\kpcp \in \set{1,\dots,\npcp}$
such that $\sigu_{i_1}\cdots\sigu_{i_\kpcp} = \sigv_{i_1}\cdots\sigv_{i_\kpcp}$.
Let 
$\wu = \uify{\sigu_{i_1}}\cc \xu\cdots\uify{\sigu_{i_\kpcp}}\cc \xu$ and 
$\wv = \vify{\sigv_{i_1}}\cc \xv\cdots\vify{\sigv_{i_\kpcp}}\cc \xv$.
We want to show that no word $\ww$ in the shuffling of 
$\wu$ and $\wv$ satisfies one of the 
conditions of $S$. If $\ww$ starts 
with an alternation of $\alpu$ and $\alpv$, since 
$\sigu_{i_1}\cdots\sigu_{i_\kpcp} = \sigv_{i_1}\cdots\sigv_{i_\kpcp}$,
any pair of letters match, and thus, condition~\ref{cond:mismatch} 
cannot hold. Condition~\ref{cond:mislettercount} cannot hold since $\ww$ 
contains as many letters from $\alpu$ as from $\alpv$. Likewise for 
condition~\ref{cond:mispaircount} since $\ww$ contains as many $\xu$ as $\xv$
(and at least 1).

Since $\wu$ (\resp $\wv$) does not contain a letter from $\alpu$ not 
followed by $\xu$ (\resp a letter from $\alpv$ not followed by $\xv$), 
neither does $\ww$,  which shows that condition~\ref{cond:misformed} 
does not hold. Finally, if $\ww$ starts with an alternation of $\alpu^*\xu$ and 
$\alpv^*\xv$, then all the corresponding pairs of $\alpu^*,\alpv^*$ are pairs
from the \pcp{} input, and condition~\ref{cond:mispair} cannot hold either.

$(\Leftarrow)$ 
Let $\wu \in \alpinu^*, \wv \in \alpinv^*$ such that 
$\shuffle{\wu}{\wv} \cap \autL = \emptyset$. 
Since no word in $\shuffle{\wu}{\wv}$
satisfies condition~\ref{cond:mispaircount}, nor 
condition~\ref{cond:misformed},
$\wu$ ends with $\xu$, $\wv$ ends with $\xv$, and $\wu$ has as many $\xu$ as 
$\wv$ has $\xv$ (and at least 1). 
This shows that, $\wu = \subu{1}\xu\cdots\subu{k}\xu$ and 
$\wv = \subv{1}\xv\cdots\subv{k}\xv$ for some $\kpcp > 0$,
$\subu{1},\dots,\subu{k} \in \alpu^*$, $\subv{1},\dots,\subv{k} \in \alpv^*$.
Moreover, since no word in 
$\shuffle{\wu}{\wv}$ satisfies condition~\ref{cond:mispair}, for any $j$, 
$(\subu{j},\subv{j})$ corresponds to a pair $(\sigu_{i_j},\sigv_{i_j})$ of our 
input $\instpcp$ -- more precisely, $\remsubu{\subu{j}} = \sigu_{i_j}$ and 
$\remsubv{\subv{j}} = \sigv_{i_j}$ for some $i_j \in \set{1,\dots,\npcp}$.
Finally, the fact that no word in $\shuffle{\wu}{\wv}$ satisfies
condition~\ref{cond:mismatch}, nor condition~\ref{cond:mislettercount} ensures
that $\remsubu{\subu{1}\cdots\subu{k}} = \remsubv{\subv{1}\cdots\subv{k}}$ 
and that the \pcp{} problem $\instpcp$ has a positive answer 
$\sigu_{i_1}\cdots\sigu_{i_\kpcp} = \sigv_{i_1}\cdots\sigv_{i_\kpcp}$.
\end{proof}

\undecidability*

\begin{proof}
Let $\autL$ be a regular language over $\alp = \alpinu \uplus \alpinv$. 
We construct an \implementation{} $\lib$ and a specification 
$\spec$, in order to reduce the shuffling problem (undecidable by
\lem{lem:shuffle}) to the negation of causal consistency.

Said differently, if (and only if) the shuffling problem has a positive answer, \ie
there exist 
$\wu \in \alpinu^*$ and $\wv \in \alpinv^*$ such that 
    $\shuffle{\wu}{\wv} \cap \autL = \emptyset$, then
    there exists an execution in $\lib$ which is not
    causally consistent (\resp $\wcc$, $\scc$, $\ccv$)
    with respect to $\spec$.

For the methods, we use 
    $\Meth = 
      \set{\meth_\alet\ |\ \alet \in \alpinu} \cup 
      \set{\meth_\blet\ |\ \blet \in \alpinv} \cup 
        \set{\mEndA,\mEndB}$, and 
$\Domain = \set{\boolf,\boolt}$ for the domain of the return values.
The return values are only relevant for the method $\mEndB$, so we 
do not represent return values for the other methods.
We assume here that methods do not take arguments, as we do not need them
for the reduction. As a result, a specification is a set 
of sequences labeled by $\Meth \times \Domain$ (method, return value).

Let $\wu \in \alpinu^*$ and $\wv \in \alpinv^*$.
$\lib$ produces, for any such pair, an execution whose history is 
$\encoding{u}{v}$, described hereafter.
The specification $\spec$ is then built in such a way that $\encoding{u}{v}$
is \emph{not} causally consistent if and only if 
$\shuffle{\wu}{\wv} \cap \autL = \emptyset$. 
Therefore, the shuffling 
problem has a positive answer 
if and only if $\lib$ contains an execution which 
is not causally consistent.

Here is the description of an execution corresponding to a 
pair $\wu \in \alpinu^*$ and $\wv \in \alpinv^*$,
whose history is $\encoding{u}{v}$.
The implementation contains two \sites, $\tid_A$ and $\tid_B$.
\Site{} $\tid_A$ execute 
$\meth_\alet$ operations for the letters $\alet$ of $\wu$.
\Site{} $\tid_B$ execute
$\meth_\blet$ operations for the letters $\blet$ of $\wv$.
Method $\mEndA$ is then executed on \site{} $\tid_A$.
\Site{} $\tid_A$ then sends a message to $\tid_B$, informing $\tid_B$ that
a method $\mEndB$ returning $\boolt$ can now be executed on $\tid_B$.
When the message is received by $\tid_B$, 
a method $\mEndB$ returning $\boolt$ is then executed on $\tid_B$.
(If method $\mEndB$ is called on \site{} $\tid_B$ another time,
$\tid_B$ returns $\boolf$).

\begin{remark}
\label{rem:rem}
If a method $\meth_\blet$ with $\blet \in \alpinv$
gets executed on \site{} $\tid_A$, then 
$\tid_A$ does not send the message to \site{} $\tid_B$.
If a method $\meth_\alet$ with $\alet \in \alpinu$ gets executed 
on \site{} $\tid_B$, then $\tid_B$ return $\boolf$ when 
method $\mEndB$ gets called.

Such executions will be causally consistent by default,
by construction of $\spec$, defined below, because 
they will not contain any $\mkaction{\mEndB}{\boolt}$ operations.
(The specification  $\spec$ contains, among other sequences, any sequence 
which does not contain $\mkaction{\mEndB}{\boolt}$ operation.)
\end{remark}

Formally, the set of executions $\lib$ is a regular language, 
and can be represented by the 
finite automaton given in
\fig{fig:autoto} (we do not represent the executions 
described in Remark~\ref{rem:rem}, which can never lead to 
non-causally consistent executions).

The epsilon transition 
going from $q_2$ to $q_3$ represents the fact that 
\site{} $\tid_B$ receives the message sent by $\tid_A$.
After this, \site{} $\tid_B$ can execute a $\mEndB$ method 
returning $\boolt$.
This epsilon transition is only here for clarity and can be removed.

We here do not represent transitions with $\mEndB$ returning $\boolf$,
as $\mkaction{\mEndB}{\boolf}$ operations are ignored by the specification 
$\spec$.
They can be added as self-loops to the automaton. 

\begin{figure}
 
\begin{tikzpicture}[xscale=3,yscale=3]
 
\node[initial text=,initial,state] (A) at (0,0) { $q_1$ };
\node[state] (C) at ($(A) + (1,0)$) { $q_2$ };
\node[state] (D) at ($(C) + (0,-1)$) { $q_3$ };
\node[state] (E) at ($(D) + (-1,0)$) { $q_4$ };

\draw[->] (A) to[in=60,out=120,looseness=6] node[above,text width=1.4cm] { 
    $(\tid_A,\meth_\alet)$ } (A);
\draw[->] (A) to[in=240,out=300,looseness=6] node[below,text width=1.4cm] { 
    $(\tid_B,\meth_\blet)$ } (A);
\draw[->] (C) to[in=60,out=120,looseness=6] node[above,text width=1.4cm] { 
    $(\tid_A,\meth_\alet)$ } (C);
\draw[->] (C) to[in=-30,out=30,looseness=4] node[right,text width=1.4cm] { 
    $(\tid_B,\meth_\blet)$ } (C);
\draw[->] (D) to[in=240,out=300,looseness=6] node[below,text width=1.4cm] { 
    $(\tid_A,\meth_\alet)$ } (D);
\draw[->] (D) to[in=-30,out=30,looseness=4] node[right,text width=1.4cm] { 
    $(\tid_B,\meth_\blet)$ } (D);
\draw[->] (E) to[in=240,out=300,looseness=6] node[below,text width=1.4cm] { 
    $(\tid_B,\meth_\blet)$ } (E);
\draw[->] (E) to[in=60,out=120,looseness=6] node[above,text width=1.4cm] { 
    $(\tid_A,\meth_\alet)$ } (E); 
\draw[->] (A) -- node[above] { $(\tid_A,\mEndA)$ } (C);
\draw[->] (C) -- node[left] { $\epsilon$ } (D);
\draw[->] (D) -- node[above] { $(\tid_B,\mkaction{\mEndB}{\boolt})$ } (E);

\end{tikzpicture}

\caption{Finite automaton describing the executions of implementation $\lib$
of \thm{thm:causalundec}. All states are accepting.}
\label{fig:autoto}

\end{figure}

The specification $\spec$ is defined to contain any word $w$ such that:
\begin{itemize}
\item $w$ does not contain $\mkaction{\mEndB}{\boolt}$, or
\item when ignoring $\mkaction{\mEndB}{\boolf}$,  $w$ is of the form 
    $\autLL \cc 
        \mEndA \cc 
        \mkaction{\mEndB}{\boolt}$,
where $\autLL$ is $\autL$ with every 
letter $\alet \in \alpinu$ replaced by 
$\meth_\alet$ and every letter $\blet \in \alpinv$ 
replaced by $\meth_\blet$.
\end{itemize}

We now prove the following equivalence:
\begin{enumerate}
\item \label{st:causal} $\lib$ is not \wcc{} (\resp \scc,\ccv)
    with respect to $\spec$,
\item \label{st:esshuffle} 
$\exists \wu \in \alpinu^*, \wv \in \alpinv^*.\ 
    \shuffle{\wu}{\wv} \cap \autL = \emptyset$
\end{enumerate}

$(\ref{st:esshuffle}) \Rightarrow 
(\ref{st:causal})$ 
Let $\wu \in \alpinu^*$, and $\wv \in \alpinv^*$ such that
$\shuffle{\wu}{\wv} \cap \autL = \emptyset$.
We construct an execution $\exec$ in $\lib$ which is not 
causally consistent (\resp \wcc,\scc,\ccv).
The execution $\exec$ follows the description above, and the 
history of $\exec$ is $\encoding{u}{v}$.

\Site{} $\tid_B$ executes the sequence of
operations $\meth_\blet$ for each letter $\blet$ of $\wv$.
Independently, 
\site{} $\tid_A$ executes the sequence of 
operations $\meth_\alet$ for each letter $\alet$ of $\wu$.
The \site{} $\tid_A$ then executes a ${\mEndA}$ operation and sends
a message to \site{} $\tid_B$.
After $\tid_B$ receives the message, $\tid_B$ executes 
a $\mkaction{\mEndB}{\boolt}$ operation.

Assume by contradiction that $\exec$ is $\wcc$
(a contradiction here also proves that $\exec$ cannot be $\scc$ nor $\ccv$).
There must thus exists a causal order $\co$ 
(containing the program order $\po$).
Let $\op$ be the $\mkaction{\mEndB}{\boolt}$ operation of $\exec$
We know there exists $\loc \in \spec$, 
such that $\projectrv{\causalpast{\op}}{\op} \weaker \loc$.

Since $\loc$ contains $\op$, by definition of $\spec$,
$\loc$ must be of the form 
$\loc' \cc  
    \mEndA \cc 
        \mkaction{\mEndB}{\boolt}$, with $\loc'$.
In particular, this means that $\loc$ must contain the 
 $\mEndA$ operation of $\exec$.
 By transitivity of $\co$, and because $\po \subseteq \co$,
 $\loc$ must contain all operations of $\exec$.
 
Thus, the sequence $\loc'$ effectively defines 
a shuffling of $\wu$ and $\wv$ which is in $\autL$,
contradicting the assumption that 
$\shuffle{\wu}{\wv} \cap \autL = \emptyset$.

$(\ref{st:causal}) \Rightarrow 
(\ref{st:esshuffle})$
Let $\exec$ be an execution of $\lib$ which is not causally 
consistent (\resp $\wcc$, $\scc$, $\ccv$).  
It must contain a $\mkaction{\mEndB}{\boolt}$ operation, 
otherwise, by definition of $\spec$, $\exec$ is causally consistent
regardless of how we define the causality order $\co$ (as a strict partial 
order).

Note that there can only be one $\mkaction{\mCheck}{\boolt}$ operation.
Indeed, after executing $\mkaction{\mEndB}{\boolt}$, 
\site{} $\tid_B$ only returns $\boolf$ when method $\mEndB$ gets called.

Also, for $\tid_B$ to execute a $\mkaction{\mEndB}{\boolt}$ operation,
it must be the case that $\tid_A$ executed a 
$\mEndA$ operation $\op$, and sent a message to $\tid_B$.

This means that, prior to $\op$,
$\tid_A$ must have executed a sequence of $\meth_\alet$ 
operations, with $\alet \in \alpinu$, 
corresponding to a word in $\wu \in \alpinu$.
Similarly, prior to executing the  $\mkaction{\mEndB}{\boolt}$ operation
$\tid_B$ must have 
executed a sequence of $\meth_\blet$ 
operations corresponding to a word in $\wv \in \alpinv$.

Assume by contradiction that there exists a word $\ww$ in the shuffling of 
$\wu$ and $\wv$ which belongs to $\autL$, \ie assume by contradiction 
that $\shuffle{\wu}{\wv} \cap \autL \neq \emptyset$.
Using this, we can construct a sequence $\loc \in \spec$, containing the 
 $\mkaction{\mEndA}{\boolt}$ and  $\mkaction{\mEndB}{\boolt}$ operations 
as well as all operations corresponding to $\wu$ and 
$\wv$, to form a sequence which belongs $\spec$.
This means that $\exec$ must be causally consistent (for any definition)
and we have a contradiction.

This ends the proof of equivalence between
statements \ref{st:causal} and \ref{st:esshuffle}, and ends the reduction
from the shuffling problem to checking whether an implementation 
is not causally consistent.
This implies that checking whether a \library{} is causally consistent 
is not decidable.
\end{proof}

 \newpage 
\section{Undecidability for Non-Data-Independent Read/Write Memory Implementations}

\label{apd:kvsundec}

\causalundeckvs*

\begin{proof}

Let $\alppcp = \set{\apcp,\bpcp}$
and $(\sigu_1,\sigv_1),\dots,(\sigu_\npcp,\sigv_\npcp) 
\in (\alppcp^* \times \alppcp^*)$ be $\npcp$ pairs forming the input of a 
\pcp{} problem $\instpcp$. We call these pairs \emph{dominoes}.

Our goal is to build an implementation $\lib$ such that 
$\lib$ is \emph{not} causally consistent
(\resp \wcct, \scct, \ccvt)
with respect to the $\kvs$ if and 
only if the problem $\instpcp$ has a positive answer:
$\sigu_{i_1} \cdots \sigu_{i_\kpcp} = 
    \sigv_{i_1} \cdots \sigv_{i_\kpcp}$, $(\kpcp>0)$.
    
Two sequences $(\wordu, \wordv)$ in $\alppcp^*$ form a 
\emph{\domseq} if they can be decomposed into
$\wordu = \sigu_{i_1} \cc \sigu_{i_2} \cdots \sigu_{i_\kpcp}$ and
$\wordv = \sigv_{i_1} \cc \sigv_{i_2} \cdots \sigv_{i_\kpcp}$,
with each $(\sigu_{i_j},\sigv_{i_j})$ corresponding to a pair of 
$\instpcp$.
They form a \emph{\validanswer} if we additionally have $\wordu = \wordv$.
A \validanswer{} corresponds to a positive answer for the \pcp{} problem
$\instpcp$.

\paragraph{Reduction Overview}

The implementation $\lib$ will produce, for each 
\domseq{} $(\wordu,\wordv)$, an execution whose 
history is $\encoding{u}{v}$, defined thereafter.

We construct $\encoding{u}{v}$ so that $\encoding{u}{v}$ is not
causally consistent (\resp \wcc,\scc,\ccv)
if and only if $u = v$.

Therefore, if (and only if)
$\lib$ is not causally consistent (\resp \wcct, \scct, \ccvt), 
(and can produce a history which is not causally consistent),
the \pcp{} instance $P$ has a positive answer.

\paragraph{Construction of one History}

\begin{figure*}
\centering

\begin{minipage}[t]{0.2\textwidth}
$\placeholder^a$\\
(backup \site{} $a$):\\
$[a_1]:$ $\mkwrite{\letu}{a}$ \\
$[a_2]:$ $\mkwrite{\letu}{a}$ \\
$\dots$ \\
$[a_{\numa}]:$ $\mkwrite{\letu}{a}$
\end{minipage}
\begin{minipage}[t]{0.2\textwidth}
$\placeholder^b$\\
(backup \site{} $b$):\\
$[b_1]:$ $\mkwrite{\letu}{b}$ \\
$[b_2]:$ $\mkwrite{\letu}{b}$ \\
$\dots$ \\
$[b_{\numb}]:$ $\mkwrite{\letu}{b}$
\end{minipage}
\begin{minipage}[t]{0.2\textwidth}
$\placeholder^\finaling$\\
(backup \site{} $\finaling$):\\
$[c_1]:$ $\mkwrite{\letu}{\finaling}$ \\
$[c_2]:$ $\mkwrite{\letu}{\finaling}$ \\
$\dots$ \\
$[c_{\numc}]:$ $\mkwrite{\letu}{\finaling}$
\end{minipage}
\vspace{3em}

\begin{minipage}[t]{0.22\textwidth}
$\extra^a$\\
(extra \site{} $a$):\\
$[ex^a]:$ $\mkwrite{\letu}{a}$\\
$[ch^a]:$ $\mkwrite{\choicevar}{\vraival}$
\end{minipage}
\begin{minipage}[t]{0.22\textwidth}
$\extra^b$\\
(extra \site{} $b$):\\
$[ex^b]:$ $\mkwrite{\letu}{b}$\\
$[ch^b]:$ $\mkwrite{\choicevar}{\vraival}$
\end{minipage}
\begin{minipage}[t]{0.22\textwidth}
$\extra^\finaling$\\
(extra \site{} $\finaling$):\\
$[ex^\finaling]:$ $\mkwrite{\letu}{\finaling}$\\
$[ch^\finaling]:$ $\mkwrite{\choicevar}{\vraival}$
\end{minipage}
\vspace{3em}

\begin{minipage}[t]{0.18\textwidth}
$\ticking{u}$\\
(ticker \site{} $u$):\\
$[s_u]:$ $\mkwrite{\bbbu}{\vraival}$\\
$[g_1]:$ $\mkwrite{\tickuf}{\vraival}$\\
$\dots$\\
$[g_{\mres+2}]:$ $\mkwrite{\tickuf}{\vraival}$\\
\end{minipage}
\begin{minipage}[t]{0.18\textwidth}
$\ticking{v}$\\
(ticker \site{} $v$):\\
$[s_v]:$ $\mkwrite{\bbbv}{\vraival}$\\
$[h_1]:$ $\mkwrite{\tickvf}{\vraival}$\\
$\dots$\\
$[h_{\nres+2}]:$ $\mkwrite{\tickvf}{\vraival}$\\
\end{minipage}
\begin{minipage}[t]{0.18\textwidth}
$\savior{u}$\\
$[w^s_u]:$ $\mkwrite{\bbbu}{\vraival}$\\
$[ch_s^u]:$ $\mkwrite{\choicevar_s}{\vraival}$\\
\end{minipage}
\begin{minipage}[t]{0.18\textwidth}
$\savior{u}$\\
$[w^s_v]:$ $\mkwrite{\bbbv}{\vraival}$\\
$[ch_s^v]:$ $\mkwrite{\choicevar_s}{\vraival}$\\
\end{minipage}
\vspace{3em}

\begin{minipage}[t]{0.26\textwidth}
$\procu$:\\
\\
\\
$[x_1]:$ $\mkwrite{\letu}{U_1}$\\
\\
$[t_1^u]:$ $\mkwrite{\tickuf}{\vraival}$\\
$[r_1^u]:$ $\mkget{\tickvf}{\vraival}$\\[1.5ex]
$[x_2]:$ $\mkwrite{\letu}{U_2}$\\
\\
$[t_2^u]:$ $\mkwrite{\tickuf}{\vraival}$\\
$[r_2^u]:$ $\mkget{\tickvf}{\vraival}$\\
$\dots$\\
$[x_\nres]:$ $\mkwrite{\letu}{U_\nres}$\\
\\
$[t_\nres^u]:$ $\mkwrite{\tickuf}{\vraival}$\\
$[r_\nres^u]:$ $\mkget{\tickvf}{\vraival}$\\[1.5ex]
$[x_{\nres+1}]:$ $\mkwrite{\letu}{\finaling}$\\
\\
$[t_{\nres+1}^u]:$ $\mkwrite{\tickuf}{\vraival}$\\
$[r_{\nres+1}^u]:$ $\mkget{\tickvf}{\vraival}$
\\[0.7ex]
$[t_{\nres+2}^u]:$ $\mkwrite{\tickuf}{\vraival}$\\
$[r_{\nres+2}^u]:$ $\mkget{\tickvf}{\vraival}$
\\[0.7ex]
$[r^s_u]:$ $\mkread{\bbbu}{\fauxval}$\\
$[w_m]:$ $\mkwrite{M}{\vraival}$\\
\end{minipage}
\begin{minipage}[t]{0.26\textwidth} 
$\procv$:\\
$[r_{ch}]:$ $\mkread{\choicevar}{\vraival}$\\
$[d]:$ $\mkwrite{\letu}{\fauxval}$\\
$[y_1]:$ $\mkget{\letu}{\other{V_1}}$\\
$[z_1]:$ $\mkget{\letu}{\finaling}$\\
$[t_1^v]:$ $\mkwrite{\tickvf}{\vraival}$\\
$[r_1^v]:$ $\mkget{\tickuf}{\vraival}$\\[1.5ex]
$[y_2]:$ $\mkget{\letu}{\other{V_2}}$\\
$[z_2]:$ $\mkget{\letu}{\finaling}$\\
$[t_2^v]:$ $\mkwrite{\tickvf}{\vraival}$\\
$[r_2^v]:$ $\mkget{\tickuf}{\vraival}$\\
$\dots$\\
$[y_\mres]:$ $\mkget{\letu}{\other{V_\mres}}$\\
$[z_\mres]:$ $\mkget{\letu}{\finaling}$\\
$[t_\mres^v]:$ $\mkwrite{\tickvf}{\vraival}$\\
$[r_\mres^v]:$ $\mkget{\tickuf}{\vraival}$\\[1.5ex]
$[y_{\mres+1}]:$ $\mkget{\letu}{a}$\\
$[z_{\mres+1}]:$ $\mkget{\letu}{b}$\\
$[t_{\mres+1}^v]:$ $\mkwrite{\tickvf}{\vraival}$\\
$[r_{\mres+1}^v]:$ $\mkget{\tickuf}{\vraival}$\\[0.7ex]
$[t_{\mres+2}^v]:$ $\mkwrite{\tickvf}{\vraival}$\\
$[r_{\mres+2}^v]:$ $\mkget{\tickuf}{\vraival}$\\[0.7ex]
$[r^s_v]:$ $\mkread{\bbbv}{\fauxval}$\\[1.3ex]
$[r^a]:$ $\mkget{\letu}{a}$\\
$[r^b]:$ $\mkget{\letu}{b}$\\
$[r^\finaling]:$ $\mkget{\letu}{\finaling}$\\
$[w_n]:$ $\mkwrite{N}{\vraival}$\\
\end{minipage}
\begin{minipage}[t]{0.22\textwidth}
$\pag$:\\
$[r_m]:$ $\mkread{M}{\vraival}$ \\
$[r_n]:$ $\mkread{N}{\vraival}$ \\
$[r_{ch_s}]:$ $\mkread{\choicevar_s}{\vraival}$ \\
$[wf^s_u]:$ $\mkwrite{\bbbu}{\fauxval}$ \\
$[wf^s_v]:$ $\mkwrite{\bbbv}{\fauxval}$ \\
$[rf^s_u]:$ $\mkread{\bbbu}{\vraival}$ \\
$[rf^s_v]:$ $\mkread{\bbbv}{\vraival}$ \\
\end{minipage}

\caption{
This history $\encoding{u}{v}$ corresponds to a \domseq{} 
$(U_1\cdots U_\nres, V_1\cdots V_\mres)$, with
$U_i,V_i \in \set{a,b}$ for all $i$.
$\encoding{u}{v}$ is \emph{not} causally consistent (\resp \wcc,\scc,\ccv)
if and only ($\nres = \mres$ and) $U_1\cdots U_\nres = V_1 \cdots V_\mres$, 
\ie $(U_1\cdots U_\nres, V_1\cdots V_\mres)$ form a \validanswer{} for $\instpcp$.
This history uses
$9$ variables and a domain size of $4$.
}
\label{fig:encoding}
\end{figure*}

Given a letter $L \in \alppcp$, 
we define $\other{L} = b$ if $L = a$, $\other{L} = a$ if $L = b$.

Let $u = U_1 \cdots U_\nres$, and $v = V_1 \cdots V_\mres$,
with $\nres,\mres > 0$, 
and $U_i,V_i \in \alppcp$ for all $i$.
We depict in \fig{fig:encoding} the history $\encoding{u}{v}$ in $\lib$
corresponding to $(u,v)$. 
We show in \lem{lem:oneexec} that $\encoding{u}{v}$ is 
not causally consistent (for any definition) if and only if $u = v$.

To define $\encoding{u}{v}$,  we make use of the construct 
$\mkget{x}{\dat}$, which denotes the sequence of operations 
$
\mkwrite{x}{\fauxval} \cc
\mkread{x}{\dat} \cc
\mkwrite{x}{\fauxval}$, 
for $x\in\Var$, and $\dat \neq \fauxval$.
This is only a notation, and does not imply that the three operations must be 
executed atomically.
It is introduced only to simplify the presentation of the proof.

This construct ensures the useful property that 
$\mkget{x}{\vraival}$ operations
made by the same \site{} $\tid$ need distinct
$\mkwrite{x}{\vraival}$ to read from, as \site{} $\tid$ overwrite 
$x$ with $\mkwrite{x}{\fauxval}$ after reading $\mkread{x}{\vraival}$.
More generally, for any $m \in \Nats$, 
if a \site{} $\tid$ does $m$ operations  
$\mkget{x}{\dat}$ for some $\dat \neq \fauxval$,
then there need to be at least $m$ distinct $\mkwrite{x}{\dat}$ in the  
execution for the execution to be causally consistent.

We now have all the ingredients needed to prove that the execution $\encoding{u}{v}$
of \fig{fig:encoding} satisfies the property we want:
$\encoding{u}{v}$ is causally consistent (\resp \wcct, \scct, \ccvt)
if and only if $u \neq v$.
In the figure, we put the operations' names between brackets, so that we 
can refer to them in the proof. When there is an operation name next to a 
{\tt uniq\_rd} operation, it is actually the name corresponding to the 
underlying {\tt rd} operation.

The idea is the following. If $u$ and $v$ are different, 
then there exists $i$ such that $U_i \neq V_i$ and $U_i = \other{V_i}$
(or $u$ and $v$ have different sizes).
This means that the write $x_i$ can be used for the read $y_i$,
which means that one write from the backup \sites{} need not be used 
for $y_i$, and can be used instead for one of the three reads $r^a$, $r^b$, 
$r^\finaling$. Moreover, two writes (out of three) from the extra \sites{}
can be used for the reads $r^a$, $r^b$, 
$r^\finaling$, which makes the history $\encoding{u}{v}$ causally consistent.

(The case where $u$ and $v$ have different sizes is handled 
thanks to the ticker \sites{}, the symbol $\finaling$, and 
the variables $\bbbu$,$\bbbv$,$M$,$N$, and $\choicevar_s$.)

If $u$ and $v$ are equal, then $U_i$ is different than 
$\other{V_i}$, for all $i$. Then, the writes of all backup processes
must be used for the reads $y_i$ and $z_i$, and cannot be 
used for the three reads $r^a$, $r^b$, $r^\finaling$.
Since only two (out of three) writes from the extra \sites{} can be used 
(one write is lost because of variable $\choicevar$, and read operation 
$r_{ch}$), the three reads $r^a$, $r^b$, $r^\finaling$ 
cannot be consistent, and the history $\encoding{u}{v}$
is not causally consistent (for any definition).

The technicalities are given in the following lemma.

\begin{lemma}
\label{lem:oneexec}
The following statements are equivalent:
\begin{enumerate}
\item \label{enum:execwcc} $\encoding{u}{v}$ is \wcc
\item \label{enum:execscc} $\encoding{u}{v}$ is \scc
\item \label{enum:execccv} $\encoding{u}{v}$ is \ccv 
\item \label{enum:noteq} $u \neq v$
\end{enumerate}
\end{lemma}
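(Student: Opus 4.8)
The plan is first to collapse the four-way equivalence using the hierarchy already proved. By \lem{lem:cmtocc} and \lem{lem:ccvtocc}, statement~(\ref{enum:execscc}) and statement~(\ref{enum:execccv}) each imply statement~(\ref{enum:execwcc}). Hence it suffices to establish two implications: (i) if $\encoding{u}{v}$ is \wcc{} then $u \neq v$, and (ii) if $u \neq v$ then $\encoding{u}{v}$ is both \scct{} and \ccvt{}. Together with the trivial implications these close the cycles $(\ref{enum:noteq})\Rightarrow(\ref{enum:execscc})\Rightarrow(\ref{enum:execwcc})\Rightarrow(\ref{enum:noteq})$ and $(\ref{enum:noteq})\Rightarrow(\ref{enum:execccv})\Rightarrow(\ref{enum:execwcc})\Rightarrow(\ref{enum:noteq})$, so all four statements coincide. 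Throughout I would reason directly about the existentially quantified causal order $\co$ (and arbitration order $\theirarb$ for \ccv), since $\encoding{u}{v}$ is not differentiated and the bad-pattern characterisation of \sect{sec:badpatterns} does not apply verbatim.

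The argument rests on a short toolkit about the $\mkget{x}{\dat}$ gadget and the auxiliary sites. In any $\co$ witnessing \wcc{}, a read $\mkread{x}{\dat}$ with $\dat\neq\fauxval$ inside a gadget $\mkget{x}{\dat}$ must have a supporting $\mkwrite{x}{\dat}$ in its causal past; since the gadget brackets its read between two $\mkwrite{x}{\fauxval}$ of the same site in program order, that supporting write must be $\co$-above the opening $\mkwrite{x}{\fauxval}$ of the gadget, so a single write serves at most one gadget of a given site, and a site performing $m$ such gadgets needs $m$ pairwise distinct supporting writes. Also, because $\procv$ starts with $r_{ch}:\mkread{\choicevar}{\vraival}$, exactly one extra site $\extra^{c}$ (some $c\in\set{a,b,\finaling}$) has both its operations $\co$-before every operation of $\procv$; in particular $ex^{c}:\mkwrite{\letu}{c}$ is $\co$-before $d:\mkwrite{\letu}{\fauxval}$, hence cannot support any read of $\procv$, so at most two of the three extra writes on $\letu$ are usable, and only for reads after $d$. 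Finally, the ticker sites together with $\bbbu,\bbbv,M,N,\choicevar_s$ and $\pag$ force $\procu$ and $\procv$ to run to completion (else $r_m,r_n$ of $\pag$ fail), to advance in lockstep through their tick phases, and prevent $\procu$'s $\finaling$-writes from being reused out of order — reducing the whole question to counting the writes on $\letu$ carrying each of the values $a,b,\finaling$.

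For (i), assume $u=v$, so $\nres=\mres$ and $U_i=V_i\neq\other{V_i}$ for all $i$, whence $\procu$'s write $x_i$ cannot support $\procv$'s gadget $y_i:\mkget{\letu}{\other{V_i}}$. A supply/demand count on $\letu$ then yields a contradiction: the backup sites $\placeholder^a,\placeholder^b,\placeholder^\finaling$ supply exactly $\numa,\numb,\numc$ writes, chosen so as to be entirely consumed by the $y_i$ and $z_i$ gadgets (including $y_{\mres+1},z_{\mres+1}$) once the lockstep and $\finaling$ arguments forbid substituting $\procu$'s, the tickers', or the saviors' writes; the extra sites add at most two usable writes; and $\procu$'s lone $x_{\nres+1}:\mkwrite{\letu}{\finaling}$ plus the $x_i$ are the wrong value or already spent. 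This leaves strictly fewer than the three distinct writes — one each of value $a$, $b$, $\finaling$ — demanded by the tail gadgets $r^a,r^b,r^\finaling$, so no witnessing $\co$ exists and $\encoding{u}{v}$ is not \wcc. Conversely, for (ii) assume $u\neq v$: either some $U_i=\other{V_i}$, in which case $x_i$ can now support $y_i$ and thereby releases one backup write of value $\other{V_i}$ which, with the two usable extra writes, supplies $r^a,r^b,r^\finaling$; or the lengths differ, in which case the ticker/$\finaling$ gadget leaves one surplus $\finaling$-write (or $a/b$-write) playing the same role. I would then exhibit the concrete read-from assignment $\rf$, set $\co=(\po\cup\rf)^+$, verify acyclicity, and give for each operation the required specification sequence — for \scc{} a single sequence per site carrying all its return values (feasible because the gadget resets make this an interval assignment), and for \ccv{} additionally a global total order $\theirarb\supseteq\co$ respected by every per-operation sequence — so that $\axpoco,\axscc$ hold for \scc{} and $\axpoco,\axcoarb,\axccv$ hold for \ccv.

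The main obstacle is the bookkeeping underlying (i) and (ii): tuning the hard-coded constants $\numa,\numb,\numc$ against $\nres,\mres$ so that there is exactly a one-write slack when $u\neq v$ and exactly a one-write deficit when $u=v$, and — the delicate part of (i) — ruling out every ``cheating'' causal order that could reroute some write on $\letu$ (a backup, ticker, savior, or $\procu$ write) to dodge the deficit. This is precisely what the auxiliary variables $\choicevar,\choicevar_s,\bbbu,\bbbv,M,N$ and the auxiliary sites are for, and the bulk of the argument is the case analysis showing that each such rerouting is blocked by program order together with the forced $\co$-edges above it; one also has to check that $\finaling$ faithfully detects a length mismatch and that the \scc{}-witnessing per-site sequences and the \ccv{}-witnessing arbitration order can indeed be built.
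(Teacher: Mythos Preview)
Your proposal is correct and follows essentially the same route as the paper: you collapse the equivalence via \lem{lem:cmtocc} and \lem{lem:ccvtocc}, prove $(\ref{enum:execwcc})\Rightarrow(\ref{enum:noteq})$ by a supply/demand count on writes to $\letu$ after establishing the lockstep between $\procu$ and $\procv$ enforced by the ticker/$\bbbu$/$\bbbv$/$M$/$N$/$\choicevar_s$ machinery, and prove $(\ref{enum:noteq})\Rightarrow(\ref{enum:execscc}),(\ref{enum:execccv})$ by an explicit causal order in the three cases $\nres=\mres$, $\nres>\mres$, $\mres>\nres$. The paper carries out exactly this plan, with the key step (*) being the argument that $y_i,z_i$ cannot read from $x_j$ for $j\neq i$, which is precisely the ``ruling out cheating causal orders'' you flag as the main obstacle.
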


\begin{proof} 
Let $\numa$ (\resp $\numb$, $\numc$) be the number of 
$\mkread{\letu}{a}$ (\resp $\mkread{\letu}{b}$, $\mkread{\letu}{\finaling}$)
operations among the 
$y_1,z_1,\dots,y_{\mres+1},z_{\mres+1}$ operations.
Note that $\numa + \numb + \numc = 2*\mres + 2$.

$(\ref{enum:execwcc} \Rightarrow \ref{enum:noteq})$ 
Assume $\encoding{u}{v}$ is \wcct, and let $\co$ be a causality order 
which proves it.
Assume by contradiction that $u = v$.
Thus, $\nres = \mres$ and for all $i \in \set{1,\dots,\nres}$, 
we have $U_i = V_i$,
$U_i \neq \other{V_i}$ and $U_i \neq \finaling$.

Our goal (*) is to show that the reads $y_i$ and $z_i$ can only read 
from $x_i$ (or from the backup/extra \sites{}), but not from 
$x_j$ for $i \neq j$.

Then, since $U_i \neq \other{V_i}$ and $U_i \neq \finaling$,
the reads $y_i$ and $z_i$ must use the writes from 
the backup or extra \sites{}, and the writes $x_i$ have no use.

Overall, there are $2*\mres + 5$ {\tt uniq\_rd}'s on variable $\letu$ in $\procv$,
and $2*\mres + 5$ writes to $\letu$ in the backup and extra \sites{}.
However, because of the \emph{choice} variable $\choicevar$,
one of the write from the extra \sites{} must be causally related to 
$r_{ch}$ (and to operation $d$), and thus cannot be used for the 
{\tt uniq\_rd}'s on variable $\letu$ in $\procv$.
We are left with only $2*\mres + 4$ writes to $\letu$, which are usable 
for the $2*\mres + 5$ {\tt uniq\_rd} on variable $\letu$ in $\procv$, which
means that $\encoding{u}{v}$ cannot be \wcc.
We conclude that $u \neq v$.

(*) We now show that the reads $y_i$ and $z_i$ can only read 
from $x_i$ (or from the backup/extra \sites{}), but not from 
$x_j$ for $i \neq j$.

First, notice that because of the variables $M$ and $N$, 
all the operations of $p_u$ and $p_v$ must be causally 
related to $r_n$ in $\pag$.

Assume by contradiction that an operation $g_i$ is causally related to 
an operation $r^v_j$, and that an operation $h_{i'}$ is causally related 
to an operation $r^u_{j'}$. Then both write operations $s_u$ and $s_v$ would 
be causally related to $r_n$ is $\pag$, and are not usable for
the reads $rf_u^s$ and $rf_v^s$.
Moreover, because of variable $\choicevar_s$, 
only one of the writes $w_u^s$ and $w_v^s$ is usable for the 
reads  $rf_u^s$ and $rf_v^s$. Since there are no other writes to 
$\bbbu$ or $\bbbv$, this is contradiction.

Now, assume by contradiction an operation $g_i$ is causally related to
an operation $r^v_j$, with $j \leq \mres + 1$.
Then, we know that no 
operation $h_{i'}$ can be causally related 
to an operation $r^u_{j'}$.
The reads $r^u_{j'}$ must therefore use the writes $t^v_{j'}$,
and $t^v_{j'}$ must be causally related to $r^u_{j'}$ for all 
$j' \in \set{1,\dots,\mres+2}$.
But then, by transitivity, we would have that 
$g_i$ is causally related to $r^s_u$, and also that 
$s_u$ is causally related to $r^s_u$, which is not possible
(as there are no $\mkwrite{s_u}{\fauxval}$ operation in the history).

Similarly, we can prove that no operation $h_i$ can be causally related to 
an operation $r^u_j$ with $j \leq \mres + 1$.

This entails that, each $r^u_j$  with $j \leq \mres + 1$,
must use the write $t^v_j$. And read $r^v_j$ with $j \leq \mres + 1$
must use the write $t^u_j$.
This implies in particular than each $x_j$ is causally related to 
$r^v_j$ for $j \leq \mres + 1$, and cannot be 
used for a read $y_i$ with $i > j$.

Moreover, $x_j$ cannot be causally related to 
$y_i$ with $i < j$.
This would create a cycle in the causality relation, as 
we know that $t^v_i$ is causally related to $r^u_i$.

This concludes the proof that 
the reads $y_i$ and $z_i$ can only read 
from $x_i$ (or from the backup/extra \sites{}), but not from 
$x_j$ for $i \neq j$.

$(\ref{enum:noteq} \Rightarrow \ref{enum:execscc})$ Assume $u \neq v$.
We have three cases to consider: $\nres = \mres$, $\nres > \mres$
and $\mres > \nres$.

Case $\nres = \mres$.
Let $j \in \set{1,\dots,\nres}$ such that $U_j \neq V_j$.
Assume without loss of generality that we have
$U_j = a$, and $V_j = b$ (the other case is symmetric).
By definition, $U_j = \other{V_j}$.
In that case, we prove that $\encoding{u}{v}$ is \scct{}.

Let $q \in \set{1,\dots,\numa}$ such that $y_j$ is the 
$q$th $\mkread{\letu}{a}$ in $\procv$.
We define the causality relation $\co$ as the transitive closure
of the program order and the following constraints:
\begin{itemize}
\item $t_i^u$ to $r_i^v$, for $i \in \set{1,\dots,\nres+2}$,
\item $t_i^v$ to $r_i^u$, for $i \in \set{1,\dots,\nres+2}$,
\item $b_i$ to $y_{i'}$, for $i \in \set{1,\dots,\numb}$, and where 
$y_{i'}$ is the $i$th $\mkread{\letu}{b}$ in $\proc_v$,
\item $c_i$ to $z_{i'}$, for $i \in \set{1,\dots,\numc}$, and where 
$z_{i'}$ is the $i$th $\mkread{\letu}{\finaling}$ in $\proc_v$,
\item $a_i$ to $y_{i'}$, for $i \in \set{1,\dots,q-1}$, and where 
$y_{i'}$ is the $i$th $\mkread{\letu}{a}$ in $\proc_v$,
\item $x_j$ to $y_j$,
\item $a_{i-1}$ to $y_{i'}$, for $i \in \set{q+1,\dots,\numa}$, and where 
$y_{i'}$ is the $i$th $\mkread{\letu}{a}$ in $\proc_v$,
\item $a_\numa$ to $r^a$,
\item $ch^a$ to $r_{ch}$,
\item $ex^b$ to $r^b$,
\item $ex^\finaling$ to $r^\finaling$,
\item $w_m$ to $r_m$,
\item $w_n$ to $r_n$,
\item $s_u$ to $rf^s_u$,
\item $s_v$ to $rf^s_v$,
\item $ch_s^u$ to $r_{ch_s}$.
\end{itemize}

By construction, $\co$ is a strict partial order, and we have
$\po \subseteq \co$.
Also, for each read operation $\rop$ (in particular $r^u_\nres$ and 
$r^b$), we can construct a sequence, which respects the causality 
order, and containing all return values of the read operations
before $\rop$ in the program order.
for every operation $\op$ of $\encoding{u}{v}$, 
there exists $\loc \in \speckvs$ such that
$\projectrv{\causalpast{\op}}{\poback{\op}} \weaker \loc$
(axiom \axscc).

The key idea here is that, since 
$U_j = \other{V_j}$, $y_j$ can read from $x_j$.
As a result, there is a $\mkwrite{\letu}{a}$ from the backup 
\site{} $a$ which is not needed for the $\mkget{\letu}{a}$ operations of 
$\procv$.
We can thus use the last write from the backup \site{} $a$ (\ie $a_\numa$), 
for the read $r^a$.
Then, we use $ch^a$ to explain the return value of $r_{ch}$, and we can 
therefore use $ex^b$ to explain the return value of $r^b$, 
and $ex^\finaling$ to explain the return value of $r^\finaling$.

\vspace{2eX}

Case $\nres > \mres$.
We prove that $\encoding{u}{v}$ is \scct{}.
We define the causality relation $\co$ as the transitive closure
of the program order and the following constraints:
\begin{itemize}
\item $t_i^u$ to $r_i^v$, for $i \in \set{1,\dots,\mres+2}$,
\item $t_i^v$ to $r_i^u$, for $i \in \set{1,\dots,\mres+2}$,
\item $h_i$ to $r_i^u$, for $i \in \set{\mres+2,\nres+2}$,
\item $a_i$ to $y_{i'}$, for $i \in \set{1,\dots,\numa}$, and where 
$y_{i'}$ is the $i$th $\mkread{\letu}{a}$ in $\proc_v$,
\item $b_i$ to $y_{i'}$, for $i \in \set{1,\dots,\numb}$, and where 
$y_{i'}$ is the $i$th $\mkread{\letu}{b}$ in $\proc_v$,
\item $c_i$ to $z_{i'}$, for $i \in \set{1,\dots,\numc}$, and where 
$z_{i'}$ is the $i$th $\mkread{\letu}{\finaling}$ in $\proc_v$,
\item $ch^\finaling$ to $r_{ch}$,
\item $ex^a$ to $r^a$,
\item $ex^b$ to $r^b$,
\item $x_{\nres+1}$ to $r^\finaling$,
\item $w_m$ to $r_m$,
\item $w_n$ to $r_n$,
\item $s_u$ to $rf^s_u$,
\item $w_v^s$ to $rf^s_v$,
\item $ch_s^u$ to $r_{ch_s}$.
\end{itemize}

\vspace{2eX}

Case $\mres > \nres$. 
Similar to the previous two cases.
Here, $x_{\nres+1}$ will be used for $z_{\nres+1}$,
thus allowing the write $c_\numc$ to be used for $r^\finaling$.

$(\ref{enum:noteq} \Rightarrow \ref{enum:execccv})$ 
We can prove this by using the same causality order used for $\scc$.
We can then define an arbitration order, 
as all the \sites{} agree on the order of write 
operations.

$(\ref{enum:execscc} \Rightarrow \ref{enum:execwcc})$ 
By \lem{lem:cmtocc}. 

$(\ref{enum:execccv} \Rightarrow \ref{enum:execwcc})$ 
By \lem{lem:ccvtocc}.

\end{proof}

\paragraph{Construction of the Implementation}

We now describe how to build the implementation $\lib$,
such that 
$\lib$ is \emph{not} causally consistent
with respect to the $\kvs$ if and  
only if the problem $\instpcp$ has a positive answer.

More precisely, we describe how to define $\lib$ as a regular language, 
so that $\lib$ produces, for each \domseq{} $(u,v)$, an execution 
whose history is $\encoding{u}{v}$.

In an execution of $\lib$, the following happens:
First, the extra \sites{}, as well as the \sites{}
$\tid_{\bbbu}$ and $\tid_{\bbbv}$ execute their operations.
Each ticker \site{} executes its first operation.
Then, \site{} $\procv$ executes operations $r_{ch}$ and $d$.

After that,
\Site{} $\procu$ chooses non-deterministically a domino 
$(u_i,v_i)$ from the \pcp{} instance $\instpcp$.
It sends messages to the backup \sites{}, the ticker \sites{}, 
and $\procv$ so that they execute the operations corresponding 
to this domino $(u_i,v_i)$ (following \fig{fig:encoding}).

This step, of choosing non-deterministically a domino and what follows,
can happen an arbitrary number of times.

All \sites{} thus synchronize after each choice 
of a domino. The history of an execution $\exec$ of $\lib$ 
thus always corresponds to a prefix of the history given in 
\fig{fig:encoding}.

Finally, the ticker \sites{}, as well as $\procu$, $\procv$ and $p_f$ execute 
their last operations, as depicted in \fig{fig:encoding}.

Since the \sites{} synchronize after each choice of a domino,
$\lib$ can be described by a regular language (or equivalently, 
by a distributed implementation where each \site{} has a bounded local 
memory, and where the \sites{}
communicate through a network whose capacity is bounded).

\begin{remark}
In an implementation, each method can be called at any time, on any 
\site{}. We handle this like in \thm{thm:causalundec}: if a
\site{} detects a method call that it is not expecting (\ie that does not
follow \fig{fig:encoding}), the implementation
falls back to a default implementation which is causally 
consistent (\resp $\wcc$, $\scc$, $\ccv$).
Therefore, if $\lib$ can produce an execution which is not causally 
consistent, 
it must be an execution whose history is of the form 
$\encoding{u}{v}$ where $(u,v)$ form a \domseq.
\end{remark}

\begin{lemma}
$\lib$ is not causally consistent (\resp \wcc,\scc,\ccv) if and only if 
the \pcp{} problem $\instpcp$ has a positive answer.
\end{lemma}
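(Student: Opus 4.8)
The plan is to obtain this equivalence by combining \lem{lem:oneexec}, which characterizes exactly when a single encoded history $\encoding{u}{v}$ is causally consistent, with the description of which histories the implementation $\lib$ can produce. First I would record the two relevant structural facts about $\lib$: \emph{(i)} for every \domseq{} $(u,v)$, $\lib$ contains an execution whose history is exactly $\encoding{u}{v}$; and \emph{(ii)} conversely, every execution of $\lib$ whose history is \emph{not} causally consistent has a history of the form $\encoding{u}{v}$ for some \domseq{} $(u,v)$. Fact \emph{(ii)} splits into two cases: executions in which some \site{} ever sees an unexpected method call are produced by the fall-back default implementation, which is causally consistent by construction (the same device used in \thm{thm:causalundec}); and the remaining executions follow Figure~\ref{fig:encoding}, so---since all \sites{} resynchronize after each domino choice---their histories are prefixes of some $\encoding{u}{v}$.

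Next I would verify that a \emph{proper} prefix of any $\encoding{u}{v}$ is always causally consistent (\resp \wcc, \scc, \ccv). The intuition is that the only operations that can cause a violation in Figure~\ref{fig:encoding} are the final reads $r^a$, $r^b$, $r^\finaling$ of $\procv$ together with the checker \site{} $\pag$; a proper prefix omits part of this final block, and for it one can exhibit a witnessing causality order by a straightforward adaptation of the causality orders already constructed in the ``$\nres \neq \mres$'' cases of the proof of \lem{lem:oneexec}. Together with \emph{(ii)}, this ensures that the \emph{only} executions of $\lib$ that can fail to be causally consistent are those realizing a \emph{full} history $\encoding{u}{v}$ with $(u,v)$ a \domseq.

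With these facts in place the equivalence is immediate and uniform across \wcc, \scc, and \ccv. For the $(\Leftarrow)$ direction: a positive answer to $\instpcp$ gives a \domseq{} $(u,v)$ with $u = v$, i.e.\ a \validanswer; by \emph{(i)}, $\lib$ realizes $\encoding{u}{v}$, and by \lem{lem:oneexec} this history---hence $\lib$---is not causally consistent. For the $(\Rightarrow)$ direction: if $\lib$ is not causally consistent, then by the previous paragraph some full $\encoding{u}{v}$ with $(u,v)$ a \domseq{} is not causally consistent, so \lem{lem:oneexec} forces $u = v$; thus $(u,v)$ is a \validanswer{} and $\instpcp$ has a positive answer. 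I expect the main obstacle to be exactly the verification that all \emph{non-maximal} behaviors of $\lib$---in particular proper prefixes of $\encoding{u}{v}$, which $\lib$ must contain since it is prefix-closed---are causally consistent, so that inconsistency cannot ``appear early''; handling this carefully is precisely what reduces the problem to \lem{lem:oneexec}. Once this lemma is established, \thm{thm:causalundeckvs} follows, since $\lib$ is a regular language effectively built from $\instpcp$ and \pcp{} is undecidable.
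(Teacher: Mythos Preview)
Your approach is essentially the same as the paper's: both directions are obtained by combining \lem{lem:oneexec} with the structural facts (i) and (ii) about which histories $\lib$ can produce, exactly as you outline. The paper's proof is in fact considerably terser than yours---it simply asserts ``by construction of $\lib$, $\hist$ must be of the form $\encoding{u}{v}$ for some \domseq{} $(u,v)$'' and invokes \lem{lem:oneexec} in each direction.

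Where your proposal differs is in rigor rather than strategy: you explicitly isolate the proper-prefix case, which the paper glosses over. The paper's construction paragraph notes that every execution ``corresponds to a prefix of the history given in \fig{fig:encoding}'' and that off-protocol calls trigger a fall-back causally-consistent implementation, but its proof of the lemma does not return to argue why \emph{proper} prefixes of $\encoding{u}{v}$ are themselves causally consistent. You correctly identify this as the only subtle point and sketch how to discharge it (the problematic operations are the final reads $r^a,r^b,r^\finaling$ together with $\pag$, so any prefix missing part of that block admits a witnessing causal order). So your plan is sound and, if anything, fills a gap that the paper leaves implicit.
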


\begin{proof}
$(\Rightarrow)$ If $\lib$ is not causally consistent (\resp \wcc,\scc,\ccv), 
it produces a history $\hist$ which is not causally consistent.
By construction of $\lib$, 
$\hist$ must be of the form $\encoding{u}{v}$, for some \domseq{} $(u,v)$.
By \lem{lem:oneexec}, we know that $u = v$, and 
$(u,v)$ for a valid answer for $\instpcp$.

$(\Leftarrow)$ If $(u,v)$ form a valid answer to $\instpcp$,
then $\lib$ is not causally consistent, as it can produce an execution
whose history is $\encoding{u}{v}$, which is not causally 
consistent (\resp \wcc,\scc,\ccv) by \lem{lem:oneexec}.
\end{proof}

\end{proof}
 \newpage
\section{Reduction to Control-State Reachability}

\reduction*

\begin{proof}
$(\Rightarrow)$ Assume by contradiction that there is an execution 
$\exec \in \lib$ which is accepted by $\obscc$. We make a case analysis 
based on which branch of $\obscc$ accepts $\exec$.

(First branch)
If $\exec$ is accepted on state $\errorstate$, 
then it has a $\mktrueaction{\readmeth}{\var}{\firstdat}$
  operation with no corresponding $\writemeth$ operation.
  It therefore contains bad pattern $\bpbprf$, and $\exec$ is not 
  $\wcc$.

(Second branch)
Otherwise, 
  $\exec$ is accepted on state $\errorstate'$.
  Let $\wop_1$ be the $\mktrueaction{\writemeth}{\var}{\firstdat}$ operation 
  read after $q_1$.
  Let $\wop_2$ be the $\mktrueaction{\writemeth}{\var}{\seconddat}$ operation
  read after the first causal link.
  Let $\rop_1$ be the $\mktrueaction{\readmeth}{\var}{\firstdat}$
  operation read just before $\errorstate'$.
  Let $\getdiff{\exec} \in \lib$ be a \differentiated{} \execution{} and 
  $\renaming$ a renaming such that 
  $\exec = \applyrenaming{\getdiff{\exec}}{\renaming}$.
  Let $\getdiff{\firstdat} \in \Nats$ be the data value of $\rop_1$ 
  in $\getdiff{\exec}$. The renaming $\renaming$ maps $\getdiff{\firstdat}$
  to $\firstdat$.
  
  We show that, in the execution $\getdiff{\exec}$, the operations 
  $\wop_1$, $\wop_2$, and $\rop_1$ form bad pattern $\bpbpco$ because 
  $\wop_1 \ltco \wop_2 \ltco \rop_1$,
  $\wop_1 \ltrf \rop_1$, and $\getvar{\wop_1} = \getvar{\wop_2}$.
  The conditions $\getvar{\wop_1} = \getvar{\wop_2}$, and 
  $\wop_1 \ltrf \rop_1$ hold, because these three operations all operate
  on the same variable (ensured by the register $\regvar'$), 
  and $\wop_1$ and $\rop_1$ use the same data value $\getdiff{\firstdat}$.
  
  The condition $\wop_2 \ltco \rop_1$ holds because either $\rop_1$
  is between $q_5$ and $\errorstate'$, in which case $\regsite$ ensures that
  it is on the same \site{} as $\wop_2$; or $\rop_1$ is between $q_6$ 
  and $\errorstate'$, and in that case the preceding 
  $\mktrueaction{\readmeth}{\var}{\seconddat}$ operation makes a causality 
  link with $\wop_2$.
  
  The causality links $\wop_1 \ltco \wop_2 \ltco \rop_1$ are ensured by 
  the presence of the $\clink$ subautomata. $\clink$ recognizes 
  unbounded chains in the $\po \cup \rf$ relation.
  In $\clink$, a $\readmeth$ operation read from $q_b$ to $q_a$ reads-from 
  a preceding $\writemeth$ operation from $q_a$ to $q_b$ (thanks to register
  $\regvar$).
  Moreover, the register $\regsite$ ensures that a $\readmeth$ operation is 
  before, in the program order, the $\writemeth$ operation which is following 
  it.
  
  We conclude by \lem{lem:causalreg} that $\getdiff{\exec}$ is not causally
  consistent.
  
(Third branch) Similar to the previous branch, but for bad pattern 
$\bpbprfi$.
  
$(\Leftarrow)$
Assume by contradiction that there is an execution $\getdiff{\exec} \in \lib$
which is not causally consistent. By \lem{lem:diffhists}, we can assume 
that $\getdiff{\exec}$ is \differentiated{}. 
Using \lem{lem:causalreg}, we have four bad patterns to consider. 

(\bpbpcyclicco)
The first case is when there is a cycle in $\po \cup \rf$.
Without loss of generality, we can assume that the cycle is an alternation 
of $\po$ and $\rf$ edges, of the form ($\ncycle > 1$): 
\[
  \rop_1 \ltpo \wop_2 \ltrf \rop_2 \ltpo \wop_3 \dots \ltrf 
    \rop_{\ncycle-1} \ltpo \wop_\ncycle \ltrf \rop_\ncycle = \rop_1.
\]
This is true for two reasons. First, $\po$ is transitive, so two $\po$ edges
can always be contracted to one. Second, $\rf$ connects $\writemeth$ to 
$\readmeth$ operations, so there cannot be two $\rf$ edges one after the other.

Consider the minimal prefix $\getdiff{\exec}'$ of $\getdiff{\exec}$ which 
contains only one out of these $2*(\ncycle-1)$ operations. This operation must 
be a $\readmeth$ operation, as every $\writemeth$ operation $\wop_i$ is 
preceded by $\rop_{i-1} \ltpo \wop_i$ in the program order.
Note that $\getdiff{\exec}'$ belongs to $\lib$, as $\lib$ is prefix-closed.

The \execution{} $\getdiff{\exec}'$ thus contains a $\readmeth$ operation $\rop$
which has no corresponding $\writemeth$ \operation{} anywhere else in the execution, 
as its corresponding $\writemeth$ \operation{} was among the 
$2*(\ncycle-1)$ operations above, and was not kept in $\getdiff{\exec}'$.

Consider the renaming $\renaming$ which maps the data value of $\rop$ to 
$\firstdat$, and every other value to $\fifthdat$.
By data independence, $\applyrenaming{\getdiff{\exec}'}{\renaming}$ belongs to $\lib$. Moreover, 
$\applyrenaming{\getdiff{\exec}'}{\renaming}$ can be recognized by
(the first branch of) $\obscc$.
We thus obtain a contradiction, as $\lib \cap \obscc$ is not empty.

(\bpbprf)
  The second case is when there is a $\readmeth$ operation with no 
  corresponding $\writemeth$ operation. Again, such an execution can be 
  recognized by the first branch of $\obscc$ (after renaming).
  
(\bpbpco)
  The third case is when there are operations $\wop_1$, $\wop_2$, and $\rop_1$ 
  in $\getdiff{\exec}$ such that
  $\wop_1 \ltrf \rop_1$,
  $\getvar{\wop_1} = \getvar{\wop_2}$, and
  $\wop_1 \ltco \wop_2 \ltco \rop_1$.
  
  Consider the renaming $\renaming$ which maps:
  \begin{itemize}
  \item   
    the data value of $\wop_1$ and
    $\rop_1$ to $\firstdat$, 
  \item 
    the data value of $\wop_2$ to $\seconddat$,
  \item 
    maps any value which appears in a $\writemeth$ \operation{} in a
    causality chain between $\wop_1 \ltco \wop_2$ to $\thirddat$, 
  \item 
    maps any value which appears in a $\writemeth$ \operation{} in a
    causality chain between $\wop_1 \ltco \wop_2$ to $\fourthdat$, 
  \item
    maps any other value to $\fifthdat$.
  \end{itemize}
  
  Then, $\applyrenaming{\getdiff{\exec}}{\renaming}$ can be recognized 
  by (the second branch of) $\obscc$, and  
  $\lib \cap \obscc$ is not empty.
  
(\bpbprfi)
    This bad pattern can be treated similarly to the previous one, but
    using the third branch of $\obscc$ instead of the second.

\end{proof}
 \newpage

\section{\ccv{} Bad Patterns}

\ccvbadpatterns*

\begin{proof} 
Let $\hist = (\ops,\po,\getlabel)$ be a \differentiated{} \history{}.

$(\Rightarrow)$
Assume that $\hist$ is \ccvt{}, and let $\co \subseteq \theirarb$
be relations satisfying the properties of \ccv{}.
By \lem{lem:ccvtocc}, we know that $\hist$ is \wcct{}.
Assume by contradiction that $\hist$ contains bad
pattern \bpbpcycliccf.

Consider any edge $\wop_1 \ltcf \wop_2$ in the $\cf$ relation,
where 
$\getlabel(\wop_1) = \mkwrite{\var}{\dat_1}$ and 
$\getlabel(\wop_2) = \mkwrite{\var}{\dat_2}$ for 
some $\var \in \Var$ and $\dat_1 \neq \dat_2 \in \Nats$.

By definition of $\cf$, we have 
$\wop_1 \ltco \rop_2$, where 
$\getlabel(\rop_2) = \mkread{\var}{\dat_2}$.

Moreover, \ccv{} ensures that 
there exists $\locof{\rop_2} \in \speckvs$ such that
$\projectrv{(\causaldep{\rop_2},\theirarb,\getlabel)}{\rop_2} 
\weaker \locof{\rop_2}$.

Since both $\wop_1$ and $\wop_2$ are in $\locof{\rop_2}$,
$\wop_1$ must be before $\wop_2$ in $\locof{\rop_2}$, as 
$\rop_2$ is the last operation  of $\locof{\rop_2}$.
(and $\hist$ is \differentiated).
As a result, $\wop_1 \ltrel{\theirarb} \wop_2$, and 
$\cf \subseteq \theirarb$. 
The cycle in $\cf \cup \propco$ thus induces a cycle in 
$\theirarb \cup \propco$, which contradicts the fact that 
$\propco \subseteq \co \subseteq \theirarb$ 
and that $\theirarb$ is strict total order.

$(\Leftarrow)$
Assume that $\hist$ is \wcct{} and does not contain 
bad pattern \bpbpcycliccf.
We use the causal order $\co = \propco = (\po \cup \rf)^+$ to show that 
$\hist$ is \ccvt{} with respect to $\speckvs$.
We must also construct the arbitration order $\theirarb$, which is 
a strict total order over $\ops$.

We define $\theirarb$ as any strict total order which contains 
$\cf \cup \propco$. This is possible since $\cf \cup \propco$ is acyclic 
($\hist$ does not contain bad pattern \bpbpcycliccf).

Let $\rop \in \ops$ be a read operation.
We prove that there exists 
$\locof{\rop} \in \speckvs$ such that
$\projectrv{\causalarb{\rop}}{\rop} \weaker \locof{\rop}$.

In the case that $\rop$ returns the initial value $0$,
and because $\hist$ does not contain bad pattern 
\bpbprfi, there is no write on the same variable as $\rop$
in $\causaldep{\rop}$.
The sequence $\locof{\rop}$ can thus be defined as 
adding appropriate values to the reads different from $\rop$ in 
$\projectrv{\causalarb{\rop}}{\rop}$
(that is, the value of the preceding write on the same variable,
or the initial value $0$ if there is no such write).

If $\rop$ returns a value different that $0$, we know that there is 
a corresponding write $\wop$ in $\causaldep{\rop}$.
Consider any write operation $\wop' \neq \wop$ in $\causaldep{\op}$ which is 
on the same variable as $\op$.
By definition of the \conflict{} relation $\cf$ and 
by definition of $\theirarb$, we know that 
$\wop' \ltrel{\theirarb} \wop$.
Thus, the last write operation on variable in 
$\projectrv{\causalarb{\rop}}{\rop}$ must be $\wop$.
As previously,
we can thus define $\locof{\rop}$ as 
$\projectrv{\causalarb{\rop}}{\rop}$
where we add appropriate return values to the reads different that
$\rop$. 
\end{proof}

 \newpage

\section{\scc{} Bad Patterns}

\strongbadpatterns*

\begin{proof}
Let $\hist = (\ops,\po,\getlabel)$ be a \differentiated{} \history{}.

$(\Rightarrow)$
Assume that $\hist$ is \scct.
By \lem{lem:cmtocc}, we know that $\hist$ is \wcct{}.

Assume by contradiction that 
$\hist$ contains bad pattern \bpbpchangerfi{} or \bpbpcyclichb{}
for some operation $\op$.

By \scc{}, there is $\locof{\op} \in \speckvs$ 
with
$\projectrv{\causalpast{\op}}{\poback{\op}}  \weaker \locof{\op}$.
This implies in particular that the return values of all read 
operations which are before $\op$ (in $\poback{\op}$)
are still present in $\locof{\rop_1}$.
They are not abstracted away by the
projection $\projectrv{\causalpast{\op}}{\poback{\op}}$.

We show below (*), by induction on the definition of $\hb{\op}$, that any 
edge $\op_1 \ltrel{\hb{\op}} \op_2$ for operations in
$\causaldep{\op}$ implies that 
$\op_1$ must be before $\op_2$ in $\locof{\op}$.
Since $\locof{\op}$ is a strict total order over 
$\causaldep{\op}$, there can be no cycle in $\hb{\op}$.
So $\hist$ cannot contain bad pattern \bpbpcyclichb.

Moreover, by definition of $\speckvs$, 
for any $\mkread{\var}{0}$ operation $\rop$, there can be no 
write operation $\wop$ such that $\wop$ is before $\rop$ in 
$\locof{\op}$ ($\hist$ is \differentiated, so it cannot contain
$\mkwrite{\var}{0}$ operations).
So $\hist$ cannot contain bad pattern \bpbpchangerfi.

(*) We now prove by induction, that any 
edge $\op_1 \ltrel{\hb{\op}} \op_2$ for operations in
$\causaldep{\op}$ implies  that
$\op_1$ is before $\op_2$ in $\locof{\op}$.

Let $\op_1,\op_2 \in \causaldep{\op}$ such that
$\op_1 \ltrel{\hb{\op}} \op_2$. Based on the definition of 
$\hb{\op}$, we have three cases to consider.
\begin{itemize}
\item If $\op_1 \ltpropco \op_2$, then $\op_1$ is before $\op_2$ in 
$\locof{\op}$ because \\
$\projectrv{\causalpast{\op}}{\poback{\op}} \weaker \locof{\op}$.
\item (transitivity) If there exists $\op_3$ such that
  $\op_1 \ltrel{\hb{\op}} \op_3$ and 
  $\op_3 \ltrel{\hb{\op}} \op_2$, we can assume by induction 
  that $\op_1$ is before $\op_3$, and $\op_3$ is before $\op_2$
  is $\locof{\op}$. Since $\locof{\op}$ is a sequence,
  $\op_1$ is before $\op_2$ in  $\locof{\op}$.
\item 
  If there is $\var \in \Var$, and $\dat_1 \neq \dat_2 \in \Nats$, 
  and a read operation $\rop_2$ such that:
  \begin{itemize}
  \item $\op_1   \ltrel{\hb{\op}} \rop_2$,
  \item $\rop_2 \leqpo \op$, 
  \item $\getlabel(\op_1) = \mkwrite{\var}{\dat_1}$, 
  \item $\getlabel(\op_2) = \mkwrite{\var}{\dat_2}$, and 
  \item $\getlabel(\rop_2) = \mkread{\var}{\dat_2}$:
  \end{itemize}
  We know by induction that $\op_1$ is before $\rop_2$ in 
  $\locof{\op}$. Since $\hist$ is differentiated, 
  the only $\mkwrite{\var}{\dat_2}$ operation in $\hist$ is
  $\op_2$, and $\op_2$ must thus be after $\op_1$ in  $\locof{\op}$.
\end{itemize}

This concludes the first part ($\Rightarrow$) of the proof.

$(\Leftarrow)$
Assume that $\hist$ is \wcct{} and does not 
contain the bad patterns \bpbpchangerfi{} and \bpbpcyclichb.
We use the causal order $\propco = (\po \cup \rf)^+$ to show that 
$\hist$ is \scct{} with respect to $\speckvs$.
Since $\hist$ is \wcct, we know that 
$\propco$ is a strict partial order.

Let $\op$ be an operation, and let $\tid$ be the \site{} of $\op$.
Our goal is to show that 
there exists $\loc \in \speckvs$ such that
$\projectrv{\causalpast{\op}}{\poback{\op}} \weaker \loc$.
Said differently, we must sequentialize the operations 
$\causalpast{\op}$, 
while keeping the return values of all read operations done on the 
same \site{} as $\op$, and before $\op$ (in $\poback{\op}$).

We prove this by induction on the size of $\poback{\op}$.
We prove actually the stronger property 
that $\loc$ must also respect the order $\hb{\op}$.

Let $\op'$ be the operation immediately preceding $\op$ in the 
program order (if it exists). We apply the induction hypothesis on $\op'$, 
and obtain a sequence $\seqposet' = \locof{\op'}$ such that 
$\projectrv{\causalpast{\op'}}{\poback{\op'}} \weaker \seqposet'$.
We can apply the induction hypothesis on $\op'$, because acyclicity in
$\hb{\op'} \subseteq \hb{\op}$, so acyclicity in 
$\hb{\op}$ implies acyclicity in $\hb{\op'}$.
By induction hypothesis, we also know that 
$\seqposet'$ respects the order $\hb{\op'}$.

If $\op$ is the first operation on the \site{} (base case of the induction), 
then $\op'$ does not exist, but we define $\seqposet' = \emptyseq$.
In both cases, we have $\seqposet' \in \speckvs$.

We consider three cases:

1) $\op$ is a write operation.
Here, the causal past of $\op$ is the causal past of $\op'$ where 
$\op$ has been added as a maximal operation. 
The reason is that $\propco$ is defined as $(\po \cup \rf)^+$ and 
the read-from relation $\rf$ only relates writes to reads.
Thus, there cannot exist an operation $\op''$ such that $\op'' \ltpropco \op$
and $\op'' \not \ltpropco \op'$. And we can define 
$\loc$ as $\seqposet'$ with $\op$ added at the end.
We obtain that $\loc \in \speckvs$.

2) $\op$ is a read operation $\mkread{\var}{0}$ for some variable 
$\var \in \Var$.
The fact that $\hist$ does not contain bad patterns \bpbprfi{} ensures 
that the causal past $\op$ does not contain write operations on variable 
$\var$. As in the previous case, 
the causal past of $\op$ is the causal past of $\op'$ where 
$\op$ has been added as a maximal operation. We can thus define 
$\loc$ as $\seqposet'$ with $\op$ added at the end.
We obtain that $\loc \in \speckvs$.

3) $\op$ is a read operation $\mkread{\var}{\dat}$ for some $\var \in \Var$
and $\dat \neq 0$.
The fact that $\hist$ does not contain bad patterns 
\bpbprf{} and \bpbpco{} ensures 
that there exists a corresponding $\wop$ operation such that 
$\wop \ltrf \op$ (in the causal past of $\op$), and such there is no 
$\wop_2$ operation with $\wop \ltpropco \wop_2 \ltpropco \op$
and $\getvar{\wop} = \getvar{\wop_2}$.

We consider two subcases:

a) $\wop$ is in the causal past of $\op'$.
By definition of $\hb{\op}$, 
for any $\wop' \in \causaldep{\op}$ with $\wop \neq \wop'$ and
$\getvar{\wop} = \getvar{\wop'}$, we have 
$\wop' \ltrel{\hb{\op}} \wop$.
The last write operation on variable $\var$ in $\seqposet'$ must thus 
be $\wop$.

Moreover, the causal past of $\op$ is the causal past of $\op'$ where 
$\op$ has been added as a maximal operation. We can thus define 
$\loc$ as $\seqposet'$ with $\op$ added at the end.
We obtain that $\loc \in \speckvs$, as $\op$ can read the value written 
by $\wop$.

b) $\wop$ is not in the causal past of $\op'$.
This implies that $\op$ is the only 
$\mkread{\var}{\dat}$ operation in $\poback{\op}$.
(If there was another such read operation $\rop'$, we would have 
$\wop \ltrf \rop' \leqpo \op'$, and $\wop$ would be in the causal past of 
$\op'$.)
Let $\ops''$ be the set of operations contained in the causal past 
of $\op$ (including $\op$), but not contained in the causal past 
of $\op'$. That is, 
$\ops'' = \causaldep{\op} \setminus \causaldep{\op'}$.
With the absence of bad pattern $\bpbpcyclichb$, 
we know that $\wop$ is a maximal (in the $\hb{\op}$ order) 
write operation on variable $\var$ in $\op''$
(there is no write operation $\wop_2$ on variable var such that 
$\wop \ltrel{\hb{\op}} \wop_2 \ltrel{\hb{\op}} \op$,
otherwise, by definition of $\hb{\op}$, 
we would have $\wop_2 \ltrel{\hb{\op}} \wop$ and 
$\hb{\op}$ would be cyclic.

We can thus define a sequence $\seqposet'' \in \speckvs$ 
such that the last write operation 
on variable $\var$ is $\wop$, and such that 
$\seqposet''$ respect the order $\hb{\op}$.
We then define $\loc$ as $\seqposet' \cc \seqposet''$, while setting the 
return values of all reads which are not in \site{} $\tid$ to the last
corresponding write in $\loc$
(these can be freely modified, as they are hidden by
the projection $\projectrv{\causalpast{\op}}{\poback{\op}}$).
We thus obtain that 
$\projectrv{\causalpast{\op}}{\poback{\op}} \weaker \loc$, 
and $\loc$ respects the order $\hb{\op}$.

\end{proof}
 \newpage

 }

\end{document}